\newcommand{\Ifline}[2]{\State \textbf{if }#1{ \textbf{then} }#2} %One-line if statements
\newcommand{\ZZ}{\ensuremath{\mathbb{Z}}}
\newcommand{\ZZpos}{\ensuremath{\mathbb{Z}_{>0}}}
\newcommand{\field}{\ensuremath{\mathbb{K}}}
\newcommand{\ext}{\ensuremath{\mathbb{L}}}
\newcommand{\FF}[1]{\ensuremath{\mathbb{F}_{#1}}}
\renewcommand{\vec}[1]{\boldsymbol{#1}}
\newcommand{\xdeg}{\deg_x}
\newcommand{\ydeg}{\deg_y}
\newcommand{\T}{\mathcal T}
\newcommand{\M}{\mathcal{M}}
\newcommand{\FxIso}{\phi}
\newcommand{\set}[1]{\{{#1}\}}
\DeclareMathOperator{\rank}{rank}
\DeclareMathOperator{\rem}{rem}
\DeclareMathOperator{\Fail}{``Fail''}
\DeclareMathOperator{\cdeg}{cdeg}
\DeclareMathOperator{\degdet}{\Delta}
\newcommand{\Grobner}{Gr\"obner\xspace}
\newcommand{\lex}{{\mathrm{lex}}}
\newcommand{\ordLex}{\prec_\lex} % lexicographic order on K[x,y] with x < y
\newcommand{\LTlex}{\mathsf{LT}_\lex}
\newcommand{\LCy}{\mathsf{LC}_y}
\newcommand{\pts}{\mathcal P}
\newcommand{\vals}{\vec \gamma}
\newcommand{\valX}{{\nu_x}}
\newcommand{\valY}{{\nu_y}}
\newcommand{\VanIdeal}{\Gamma}
\newcommand{\degSeq}{\vec\eta}
\DeclarePairedDelimiter\floor{\lfloor}{\rfloor}
\DeclarePairedDelimiter\ideal{\langle}{\rangle}
\newcommand{\degTop}{\eta}
\newcommand{\gPols}{\vec g}
\newcommand{\val}{\gamma}
\newcommand{\bigO}[1]{\mathchoice{O\left(#1\right)}{O(#1)}{O(#1)}{O(#1)}} % big O for complexity
\newcommand{\softO}[1]{\mathchoice{\tilde{O}\left(#1\right)}{\tilde{O}(#1)}{\tilde{O}(#1)}{\tilde{O}(#1)}} % soft O for complexity
\newcommand{\expmm}{\omega} % exponent of matrix multiplication
\newcommand{\Mult}{\mathsf{M}} % cost of polynomial multiplication
\newcommand{\algoname}[1]{{\normalfont\textsc{#1}}} % typesetting for algo names
\newcommand{\assign}{\leftarrow} % assign value to variable
\newcommand{\Input}{\Statex \textbf{Input: }} % Algorithm input
\newcommand{\Output}{\Statex \textbf{Output: }} % Algorithm output
\newcommand{\Preinput}{\Statex \textbf{Preinput: }} % Algorithm input
\newcommand{\Precomputation}{\Statex \textbf{Precomputation: }} % Algorithm output
\newcommand{\Precompute}{\algoname{ComputeReshaper}}
\newcommand{\quorem}{\algoname{quo\_rem}}
\newcommand{\Reshape}{\algoname{Reshape}}
\newcommand{\Interpolate}{\algoname{Interpolate}}
\newcommand{\BivariateMPE}{\algoname{MPE-DistinctX}}
\newcommand{\ValencyMPE}{\algoname{MPE-Shear}}
\newcommand{\Shift}{\algoname{ShearPoly}}
\newcommand{\ModComp}{\algoname{ModComp}}
\newcounter{algsubstate}
\renewcommand{\thealgsubstate}{\alph{algsubstate}}
\newenvironment{algsubstates}
  {\setcounter{algsubstate}{0}%
   \renewcommand{\State}{%
     \stepcounter{algsubstate}%
     \Statex {\footnotesize\thealgsubstate:}\hspace{10pt}}}
  {}
\begin{document}

\title{Generic Bivariate Multi-point Evaluation, Interpolation and Modular Composition with Precomputation}

\author{Vincent Neiger}
\affiliation{%
  \institution{{\normalsize Univ. Limoges, CNRS, XLIM, UMR\,7252}}
  \city{F-87000 Limoges}
  \state{France}
}
%\email{vincent.neiger@unilim.fr}

\author{Johan Rosenkilde}
\affiliation{%
  \institution{{\normalsize Technical University of Denmark}}
  \city{Kgs.~Lyngby}
  \state{Denmark}
}
%\email{jsrn@dtu.dk}

\author{Grigory Solomatov}
\affiliation{%
  \institution{{\normalsize Technical University of Denmark}}
  \city{Kgs.~Lyngby}
  \state{Denmark}
}
%\email{gsol@dtu.dk}

\begin{abstract}
  Suppose $\field$ is a large enough field and $\pts \subset \field^2$ is a fixed, generic set of points which is available for precomputation.
  We introduce a technique called \emph{reshaping} which allows us to design quasi-linear algorithms for both:
  computing the evaluations of an input polynomial $f \in \field[x,y]$ at all points of $\pts$;
  and computing an interpolant $f \in \field[x,y]$ which takes prescribed values on $\pts$ and satisfies an input $y$-degree bound.
  Our genericity assumption is explicit and we prove that it holds for most point sets over a large enough field.
  If $\pts$ violates the assumption, our algorithms still work and the performance degrades smoothly according to a distance from being generic.
  To show that the reshaping technique may have an impact on other related problems, we apply it to modular composition: suppose generic polynomials $M \in \field[x]$ and $A \in \field[x]$ are available for precomputation, then given an input $f \in \field[x,y]$ we show how to compute $f(x, A(x)) \rem M(x)$ in quasi-linear time.
\end{abstract}

\keywords{Multi-point evaluation, interpolation, modular composition, bivariate polynomials, precomputation.}

\maketitle

\section{Introduction}
\label{sec:intro}

\paragraph{Outline}
Let $\field$ be an effective field.
We consider the three classical problems for bivariate polynomials $\field[x,y]$ mentioned in the title.
We assume a model where part of the input is given early as \emph{preinput} which is available for heavier computation, and the primary goal is to keep the complexity of the \emph{online phase}, once the remaining part of the input is given, to a minimum.

\textbf{Multi-point evaluation (MPE):} with preinput a point set $\pts = \{(\alpha_i,\beta_i)\}_{i=1}^n \subseteq \field^2$ and input $f \in \field[x,y]$, compute $\big(f(\alpha_i,\beta_i)\big)_{i=1}^n$.
We give two algorithms: the first requires pairwise distinct \(\alpha_i\)'s and has online complexity $\softO{\xdeg f \ydeg f + n}$ as long as $\pts$ is \emph{balanced}, a notion described below;
  the second accepts repeated $x$-coordinates with online complexity $\softO{\xdeg f(\xdeg f + \ydeg f) + n}$ as long as a certain ``shearing'' of $\pts$ is balanced.
``\emph{soft-O}'' ignores logarithmic terms: $\bigO{f(n)(\log f(n))^c} \subset \softO{f(n)}$ for any $c \in \ZZ_{\geq 0}$.

\textbf{Interpolation:} with preinput a point set $\pts$ as before, and input values $\vec\gamma \in \field^n$, compute $f \in \field[x,y]$ such that $\big(f(\alpha_i,\beta_i)\big)_{i=1}^n = \vec\gamma$, satisfying some constraints on the monomial support.
We give an algorithm which preinputs a degree bound $d$ and outputs $f$ such that $\ydeg f < d$ and $\xdeg f \in \bigO{n/d}$.
The online complexity is $\softO{n}$ if $\pts$ and a shearing of $\pts$ are both balanced; $d$ should exceed the \emph{$x$-valency of $\pts$}, i.e.~the maximal number of $y$-coordinates for any given $x$-coordinate.

\textbf{Modular composition:} with preinput $M, A \in \field[x]$, we input $f \in \field[x,y]$ and compute $f(x, A) \rem M$.
Our algorithm has online complexity $\softO{\xdeg f \ydeg f + \deg A + \deg M}$, as long as the bivariate ideal $\ideal{M, y-A}$ is balanced.

We prove that if $\pts \subseteq \field^2$ is random of fixed cardinality $n$, and if $|\field| \gg n^2 \log(n)$ then $\pts$ is balanced with high probability.
Similarly, if $M$ is square-free and $A$ is uniformly random of degree less than \(\deg M\), then $\ideal{M, y-A}$ is balanced with high probability.
Our proof techniques currently do not extend to proving that sheared point sets are balanced.
A few trials we conducted suggest that this may often be the case if the $x$-valency of $\pts$ is not too high.
The cost of the second MPE algorithm is not symmetric in the $x$- and $y$-degree, so whenever
$\xdeg f < \ydeg f$ one should consider transposing the input, i.e.~evaluating $f(y, x)$ on $\{ (\beta_i, \alpha_i) \}_{i=1}^n$.
In this case, the balancedness assumption is on the transposed point set.

Our algorithms are deterministic, and once the preinput has been processed, the user knows whether it is balanced and hence whether the algorithms will perform well.
Further, the performance of our algorithms deteriorates smoothly with how ``unbalanced'' the preinput is, in the sense of certain polynomials, which depend only on preinput, having sufficiently well behaved degrees.
In a toolbox one might therefore apply our algorithms whenever the preinput turns out to be sufficiently balanced and reverting to other algorithms on very unbalanced preinput.

A typical use of precomputation is if we compute e.g.~MPEs on the same point set for many different polynomials.
This occurs in coding theory, where bivariate MPE corresponds to the encoding stage of certain families of codes such as some Reed-Muller codes \cite[Chap.\,5]{assmus_designs_1992} and some algebraic-geometric codes \cite{miura_algebraic_1993}: here $\pts$ is fixed and communication consists of a long series of bivariate MPEs on $\pts$.
In these applications, $\pts$ is often not random but chosen carefully, and so our genericity assumptions might not apply.

\paragraph{Techniques}

We introduce a tool we call \emph{reshaping} for achieving the following: given an ideal $I \subseteq \field[x,y]$ and $f \in \field[x,y]$, compute $\hat f \in f + I$ with smaller $y$-degree.
For instance in MPE, we let $\VanIdeal \subset \field[x,y]$ be the ideal of polynomials which vanish on all the points $\pts$.
Then all elements of $f + \VanIdeal$ have the same evaluations on $\pts$, so we compute a $\hat f \in f + \VanIdeal$ of $y$-degree $0$ (it exists if $\pts$ has distinct $x$-coordinates), and then apply fast univariate MPE.

An obvious idea to accomplish this iteratively is to find some $g \in \VanIdeal$ of lower $y$-degree than $f$ and whose leading $y$-term is $1$, and then compute $\tilde f = f \rem g$.
The problem is that the $x$-degree of $\tilde f$ may now be as large as $\xdeg f + (\ydeg f - \ydeg g)\xdeg g$.
Our idea is to seek polynomials $g$ that we call \emph{reshapers}, which have the form
\begingroup
\abovedisplayskip=2pt
\belowdisplayskip=2pt
\[
  g = y^{2d/3} - \hat{g} \ ,
\]
\endgroup
where $\ydeg \hat{g} < d/3$ and $d = \ydeg f+1$ (for simplicity, here \(3\) divides \(d\)).
Writing $f = f_{1} y^{2d/3} + f_{0}$ with $\ydeg f_{0} < 2d/3$,
then $\tilde f = f_{1} \hat{g} + f_{0}$ is easy to compute, has $y$-degree less than $2d/3$, and $x$-degree only $\xdeg f + \xdeg g$.
Repeating such a reduction $\bigO{\log(d)}$ times with reshapers of progressively smaller $y$-degree, we eventually reach $y$-degree $0$.

For efficiency, we therefore need the $x$-degrees of all these reshapers $g$ to be small.
For MPE, stating that $g \in \VanIdeal$ specifies $n$ linear contraints on the coefficients of $\hat{g}$, so we look for \(g\) with about $n$ monomials.
Generically, since $\ydeg \hat{g} \approx d/3$, one may expect to find \(g\) with $\xdeg g \approx 3n/d$.
Informally, $\pts$ is \emph{balanced} if all the reshapers needed in the above process satisfy this degree constraint.

Above, we assumed the point set has distinct $x$-coordinates.
To handle repetitions, we shear the points by $(\alpha, \beta) \mapsto (\alpha + \theta \beta, \beta)$, where $\theta$ generates an extension field of \(\field\) of degree 2.
The resulting point set has distinct $x$-coordinates.
This replaces $f(x,y)$ with $f(x - \theta y, y)$, and whenever $\xdeg f < \ydeg f$ we stay within quasi-linear complexity if the sheared point set is balanced.

\paragraph{Previous work}
Quasi-linear complexity has been achieved for multivariate MPE and interpolation on special point sets and monomial support: Pan \cite{pan_simple_1994} gave an algorithm on grids, and van der Hoeven and Schost \cite{van_der_hoeven_multi-point_2013} (see also \cite[Sec.\,2]{coxon_fast_2018}) generalised this to certain types of subsets of grids, constraining both the points and the monomial support.
See \cite{van_der_hoeven_multi-point_2013} for references to earlier work on interpolation, not achieving quasi-linear complexity.

In classical univariate modular composition, we are given $f, M, A$ in $\field[x]$ and seek $f(A) \rem M$.
Brent and Kung's baby-step giant-step algorithm \cite{paterson_number_1973,brent-kung-1978} performs this operation in $\softO{n^{(\omega+1)/2}}$, where $\omega$ is the matrix multiplication exponent with best known bound $\omega < 2.373$ \cite{le_gall_powers_2014}.
N\"usken and Ziegler \cite{nusken-ziegler-2004} extended this to a bivariate $f$, computing $f(x, A) \rem M$ in complexity $\bigO{\xdeg f(\ydeg f)^{(\omega+1)/2}}$, assuming that $A$ and $M$ have degree at most $\xdeg f \ydeg f$.
% Note: technically, rectangular matrix mult exponent $\omega_2 < \omega+1$. Still, cost often cited as omega+1, so that's ok.
They applied this to solve MPE in the same cost; in this paper,
we use essentially the same link between these problems.
To the best of our knowledge, this is currently the best known cost bound for these problems,
in the algebraic complexity model.

In a breakthough, Kedlaya and Umans \cite{kedlaya_fast_2011} achieved
``almost linear'' time for modular composition and MPE, for specific types of
fields \(\field\) and in the bit complexity model.
For modular composition, the cost is \(\bigO{n^{1+\epsilon}}\) bit operations for any \(\epsilon>0\),
while for MPE it is $\bigO{(n + (\xdeg f)^2)^{1 + \epsilon}}$,
assuming $\ydeg f < \xdeg f$ (the algorithm also supports multivariate MPE).
Unfortunately, these algorithms have so far resisted attempts at a practical implementation \cite{van_der_hoeven_fast_2019}.

Our quasi-linear complexities improve upon the above results (including Kedlaya and Umans' ones since quasi-linear compares favorably to almost linear);
however we stress that none of the latter have the two constraints of our work: allowing precomputation, and genericity assumption.
For modular composition, precomputation on $M$ was suggested in \cite{van_der_hoeven_modular_2018} to leverage its factorisation structure.
Except for slight benefits of precomputation in Brent and Kung's modular composition (used in the Flint and NTL libraries \cite{flint,ShoupNTL}),
we are unaware of previous work focusing on the use of precomputation for MPE, Interpolation, and Modular Composition.

Genericity has recently been used by Villard \cite{villard_computing_2018}, who showed how to efficiently compute the resultant of two generic bivariate polynomials; a specific case computes, for given univariate \(M\) and \(A\), the characteristic polynomial of \(A\) in \(\field[x] / \ideal M\), with direct links to the modular composition \(f(A) \rem M\) \cite{villard_computing_2018,villard_computing_2018_slides}.
This led to an ongoing work on achieving exponent \((\omega+2)/3\) for modular composition \cite{NeigerSalvySchostVillard}.
In that line, the main benefit from genericity is that $\ideal{M, y - A}$ admits bases formed by \(m\) polynomials of \(y\)-degree \(<m\) and \(x\)-degree at most \(\deg(M)/m\), for a given parameter \(2 \le m \le \deg(M)\).
Such a basis is represented as an \(m \times m\) matrix over \(\field[x]\) with all entries of degree at most \(\deg(M)/m\), and one can then rely on fast univariate polynomial matrix algorithms.
In this paper, genericity serves a purpose similar to that in \cite{villard_computing_2018,NeigerSalvySchostVillard}: it ensures the existence of such bases for several parameters \(m\), and also of the reshapers \(g\) mentioned above; besides we make use of these bases to precompute these reshapers.
Whereas an important contribution of \cite{villard_computing_2018} is the efficient computation of such bases, here they are only used to find reshapers in the precomputation stage and the speed of computing them is not a main concern.
Once the reshapers are known, our algorithms work without requiring any other genericity property.

\paragraph{Organisation} After some preliminaries in \cref{sec:preliminaries}, we describe the reshaping strategy for an arbitrary ideal in \cref{sec:reshape}.
Then \cref{sec:mpe,sec:interpolation,sec:modcomp} give algorithms for each of the three problems.
We discuss precomputation in \cref{sec:precomputing} and genericity in \cref{sec:genericity}.

\section{Preliminaries}
\label{sec:preliminaries}

For complexity estimates, we use the algebraic RAM model and count arithmetic operations in $\field$.
By $\Mult(n)$ we denote the cost of multiplying two univariate polynomials over $\field$ of degree at most $n$;
one may take $\Mult(n) \in \bigO{n \log n \log\log n} \subset \softO{n}$ \cite{cantor_fast_1991}.
Division with remainder in \(\field[x]\) also costs $\bigO{\Mult(n)}$ \cite[Thm.\,9.6]{von_zur_gathen_modern_2012}.
When degrees of a polynomial, say $f \in \field[x,y]$, appear in complexity estimates, we abuse notation and let $\xdeg f$ denote $\max(\xdeg f, 1)$.

It is well-known that univariate interpolation and multi-point evaluation can be done in quasi-linear time \cite[Cor.\,10.8 and\,10.12]{von_zur_gathen_modern_2012}:
given $f \in \field[x]$ and $\alpha_1,\ldots,\alpha_n \in \field$, we may compute $\big(f(\alpha_i)\big)_{i=1}^n$ in time $\bigO{\Mult(\xdeg f + n)\log n} \subseteq \softO{\xdeg f + n}$;
given $\alpha_1,\ldots,\alpha_n$ and $\beta_1,\ldots,\beta_n$ in $\field$ with the \(\alpha_i\)'s pairwise distinct, we may compute the unique corresponding interpolant in time $\bigO{\Mult(n)\log n} \subseteq \softO{n}$.
We will also use the fact that two bivariate $f, g \in \field[x,y]$ can be multiplied in time $\bigO{\Mult(d_x d_y)} \subset \softO{d_x d_y}$, where $d_x = \max(\xdeg f, \xdeg g)$ and $d_y = \max(\ydeg f, \ydeg g)$ \cite[Cor.\,8.28]{von_zur_gathen_modern_2012}.

For a bivariate polynomial $f = \sum_{i=0}^k f_i(x) y^i \in \field[x,y]$ such that \(f_k \neq 0\), we define its $y$-leading coefficient as $\LCy(f) = f_k \in \field[x]$.

For our genericity results, we will invoke the following staple:
\begin{lemma}[DeMillo-Lipton-Schwartz-Zippel \protect{\cite{Schwartz80,Zippel79,DeMilloLipton78}}]
  \label{lem:zippel}
  Let $f \in \field[x_1,\ldots,x_n]$ be non-zero of total degree \(d\), and $\T \subseteq \field$ be finite.
  For $\alpha_1,\ldots,\alpha_k \in \T$ chosen independently and uniformly at random,
  the probability that $f(\alpha_1,\ldots,\alpha_k) = 0$ is at most $d/|\T|$.
\end{lemma}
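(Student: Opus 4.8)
The plan is a straightforward induction on the number $n$ of variables of $f$, which is the classical argument for this lemma. For the base case $n = 1$, I use that a non-zero univariate polynomial of degree $d$ over the field $\field$ has at most $d$ roots, so for $\alpha_1$ drawn uniformly from $\T$ we get $\Pr[f(\alpha_1) = 0] \le d/|\T|$; when $d = 0$ the polynomial is a non-zero constant and the bound is the trivial $0 \le 0$.

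For the inductive step, I assume $n \ge 2$ and the statement for $n-1$ variables. Writing $f = \sum_{i=0}^{\ell} c_i(x_1,\dots,x_{n-1})\, x_n^{\,i}$ with $\ell = \deg_{x_n} f$ and $c_\ell \neq 0$, the one bookkeeping point to keep track of is that $\deg c_\ell \le d - \ell$. Fixing the outcome of $\alpha_1,\dots,\alpha_{n-1}$, I split into two events: (A) $c_\ell(\alpha_1,\dots,\alpha_{n-1}) = 0$, and (B) $c_\ell(\alpha_1,\dots,\alpha_{n-1}) \neq 0$. By the inductive hypothesis applied to the non-zero $(n-1)$-variate polynomial $c_\ell$ of total degree at most $d-\ell$, event (A) has probability at most $(d-\ell)/|\T|$. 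On event (B), $f(\alpha_1,\dots,\alpha_{n-1}, x_n)$ is a non-zero univariate polynomial in $x_n$ of degree exactly $\ell$, hence vanishes at the independent uniform $\alpha_n$ with conditional probability at most $\ell/|\T|$. Combining via the law of total probability, $\Pr[f(\alpha_1,\dots,\alpha_n) = 0] \le \Pr[(A)] + \Pr[f(\alpha_1,\dots,\alpha_n) = 0 \mid (B)] \le (d-\ell)/|\T| + \ell/|\T| = d/|\T|$.

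I do not expect a genuine obstacle here: the argument is entirely elementary. The one place needing care is the conditioning in event (B) --- its defining quantity $c_\ell(\alpha_1,\dots,\alpha_{n-1})$ depends only on $\alpha_1,\dots,\alpha_{n-1}$, so conditionally on any of their realizations the variable $\alpha_n$ is still uniform on $\T$ and the ``at most $\ell$ roots'' bound applies pointwise before averaging --- together with the degenerate case $\ell = 0$, in which $x_n$ does not occur in $f$, event (B) forces $f(\alpha_1,\dots,\alpha_n) \neq 0$, and the whole argument reduces to the inductive hypothesis in $n-1$ variables. The only subtlety worth flagging is ensuring the two degree bounds add to exactly $d$, which is precisely why it matters that $\deg c_\ell \le d - \ell$ and not merely $\le d$.
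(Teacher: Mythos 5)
Your proof is the classical inductive argument and it is correct: the base case uses the root bound for univariate polynomials, and the inductive step correctly splits on whether the leading coefficient $c_\ell$ (of total degree at most $d-\ell$) vanishes, using independence of $\alpha_n$ from $\alpha_1,\dots,\alpha_{n-1}$ to apply the degree-$\ell$ root bound conditionally, giving $(d-\ell)/|\T| + \ell/|\T| = d/|\T|$. Note that the paper does not prove this lemma at all --- it is invoked as a known result with citations to Schwartz, Zippel, and DeMillo--Lipton --- so there is nothing to compare against beyond observing that your argument is the standard one found in those references and is sound, including your handling of the degenerate case $\ell = 0$ and the conditioning in event (B).
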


For a point set $\pts \subseteq \field^2$, the $x$-valency of $\pts$, denoted by $\valX(\pts)$, is the largest number of $y$-coordinates for any given $x$-coordinate, i.e.
\[
  \valX(\pts) = \max_{\alpha \in \field}|\{ \beta \in \field \mid (\alpha,\beta) \in \pts \}| \ .
\]
When $\valX(\pts) = 1$, the $x$-coordinates of $\pts$ are pairwise distinct.

The vanishing ideal of \(\pts\) is the bivariate ideal
\[
  \VanIdeal(\pts) = \{ f \in \field[x,y] \mid f(\alpha,\beta) = 0 \text{ for all } (\alpha,\beta) \in \pts \} \ ,
\]
Hereafter, \(\ordLex\) stands for the lexicographic order on \(\field[x,y]\) with \(x \ordLex y\), and $\LTlex(f)$ is the \(\ordLex\)-leading term of \(f\in\field[x,y]\).
The following is folklore and follows e.g.~from \cite{lazard_ideal_1985} and \cite[Thm.\,3]{Dahan2009}.

\begin{lemma}
  \label{lem:gb_vanideal}
  Let \(\pts \subset \field^2\) be a point set of cardinality $n$ and let $G = \{g_1,\ldots,g_s\}$ be the reduced $\ordLex$-\Grobner basis of $\VanIdeal(\pts)$, ordered by $\ordLex$.
  Then $g_1 \in \field[x]$, and $g_s$ is $y$-monic with $\ydeg g_s = \valX(\pts)$.
\end{lemma}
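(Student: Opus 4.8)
The plan is to read off the two extremal elements of $G$ from the staircase structure of the leading-term ideal $J := \LTlex(\VanIdeal(\pts))$. Write $\set{\alpha_1,\dots,\alpha_r}$ for the set of distinct $x$-coordinates occurring in $\pts$, and for each $i$ put $V_i = \setcomp{\beta\in\field}{(\alpha_i,\beta)\in\pts}$, so that $\max_i|V_i| = \valX(\pts) =: \nu$. Since $G$ is a reduced \Grobner basis it is in particular minimal, hence $\set{\LTlex(g_1),\dots,\LTlex(g_s)}$ is exactly the set of minimal monomial generators of $J$. In two variables these form a \emph{staircase}: distinct minimal generators have distinct $x$-exponents (otherwise one would divide the other), and ordering them as $x^{a_1}y^{b_1},\dots,x^{a_s}y^{b_s}$ with $a_1<\dots<a_s$ forces $b_1>\dots>b_s$. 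Because $x\ordLex y$, the order $\ordLex$ compares $y$-exponents first, so listing the $\LTlex(g_i)$ increasingly by $\ordLex$ is the same as listing them by decreasing $y$-exponent; thus $\LTlex(g_1)=x^{a_s}y^{b_s}$ and $\LTlex(g_s)=x^{a_1}y^{b_1}$.

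Next I would pin down $b_s$ and $a_1$. The univariate polynomial $h=\prod_{i=1}^{r}(x-\alpha_i)$ vanishes on $\pts$, so $x^{r}=\LTlex(h)\in J$; hence some minimal generator divides $x^{r}$ and is therefore a pure power of $x$, which can only be $x^{a_s}$, giving $b_s=0$. Then every monomial of $g_1$ is $\ordLex$-at most $x^{a_s}$, so each has $y$-degree $0$, i.e.\ $g_1\in\field[x]$. (Being monic, $g_1=h$, though we will not need this.) For the other end, I construct a vanishing polynomial with leading term $y^{\nu}$: set $q_i(y)=y^{\nu-|V_i|}\prod_{\beta\in V_i}(y-\beta)\in\field[y]$, which is monic of degree exactly $\nu$, write $q_i=y^{\nu}+\sum_{j<\nu}q_{i,j}\,y^{j}$, and use univariate interpolation at the distinct points $\alpha_1,\dots,\alpha_r$ to obtain $c_j\in\field[x]$ with $c_j(\alpha_i)=q_{i,j}$ for all $i,j$. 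Then $g:=y^{\nu}+\sum_{j<\nu}c_j(x)\,y^{j}$ satisfies $g(\alpha_i,\beta)=q_i(\beta)=0$ for every $(\alpha_i,\beta)\in\pts$, so $g\in\VanIdeal(\pts)$ and $\LTlex(g)=y^{\nu}$. Hence $y^{\nu}\in J$, so a minimal generator of $J$ is a pure power of $y$, necessarily $x^{a_1}y^{b_1}$ with $a_1=0$, and $y^{b_1}\mid y^{\nu}$ gives $b_1\le\nu$. Thus $\LTlex(g_s)=y^{b_1}$; since $G$ is reduced, $\LCy(g_s)=1$, i.e.\ $g_s$ is $y$-monic, and $\ydeg g_s=b_1$.

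It remains to see $b_1=\nu$. Pick an $x$-coordinate $\alpha^{*}$ of $\pts$ with $|V|=\nu$ points above it, where $V=\setcomp{\beta\in\field}{(\alpha^{*},\beta)\in\pts}$. Since $\LCy(g_s)=1$, the specialisation $g_s(\alpha^{*},y)\in\field[y]$ is monic of degree $b_1$, yet it vanishes at the $\nu$ distinct elements of $V$; hence $b_1\ge\nu$. Combined with $b_1\le\nu$ this yields $\ydeg g_s=b_1=\nu=\valX(\pts)$, completing the proof.

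I expect the only step that needs an actual idea to be the upper bound $b_1\le\nu$, handled above by the fibre-wise interpolation that produces a $y$-monic element of $\VanIdeal(\pts)$ of $y$-degree exactly $\valX(\pts)$; the staircase description of $J$ (reduced $\Rightarrow$ minimal $\Rightarrow$ leading terms are the minimal generators) and the final degree count are routine facts about \Grobner bases and univariate polynomials.
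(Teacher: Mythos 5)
Your proof is correct, and it takes a genuinely different route from the paper: the paper gives no argument for this lemma at all, treating it as folklore and citing Lazard's structure theorem \cite{lazard_ideal_1985} and \cite[Thm.~3]{Dahan2009} (the appendix's \cref{cor:lazard} records the part of that structure the paper actually reuses). Your derivation is self-contained: you read the two extremal basis elements off the staircase of minimal generators of $\LTlex(\VanIdeal(\pts))$, certify that a pure $x$-power lies in this leading-term ideal via $\prod_i(x-\alpha_i)$, certify $y^{\valX(\pts)}\in\LTlex(\VanIdeal(\pts))$ by gluing the monic fibre polynomials $q_i$ through univariate interpolation in $x$, and obtain the matching lower bound $\ydeg g_s\ge\valX(\pts)$ by specialising $x$ at a coordinate with a maximal fibre and counting roots. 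This uses nothing beyond ``reduced implies minimal implies the leading terms minimally generate'' and univariate root counting, so it is more elementary and more explicit than the citation; what the citation buys instead is the extra structure (notably the divisibility chain $\LCy(b_s)\mid\cdots\mid\LCy(b_0)$) that the paper needs elsewhere, e.g.\ in the proof of \cref{cor:module_ideal}. One wording slip to fix: since $\ordLex$ compares $y$-exponents first, listing the minimal generators increasingly by $\ordLex$ lists them by \emph{increasing} $y$-exponent (equivalently, decreasing $x$-exponent), not decreasing $y$-exponent; the identifications you actually use, $\LTlex(g_1)=x^{a_s}y^{b_s}$ and $\LTlex(g_s)=x^{a_1}y^{b_1}$, are the correct ones, so nothing downstream is affected.
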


\section{Reshape}
\label{sec:reshape}

We first describe our algorithm \Reshape{} which takes $f \in \field[x,y]$ and an ideal $I$ and finds $\hat f \in f + I$ whose $y$-degree is below some target.
This will pass through several intermediate elements of $f + I$ of progressively smaller $y$-degree.
This sequence of $y$-degrees has the following form:

\begin{definition}
  \label{def:reshaping_seq}
  We say $\degSeq = (\degTop_i)_{i=0}^k \in \ZZ_{> 0}^{k+1}$ is a $(\degTop_0, \degTop_k)$-\emph{reshaping sequence} if
  $\degTop_{i-1} > \degTop_{i} \geq \floor{ \tfrac 2 3 \degTop_{i-1}}$ for $i=1,\ldots,k$.
  For $I \subseteq \field[x,y]$ an ideal and $\degSeq = ( \degTop_i )_{i=0}^k$ a reshaping sequence,
  we say $\gPols = ( g_i )_{i=1}^k \in I^k$ is an \emph{$\degSeq$-reshaper for $I$} if $g_i = y^{\degTop_i} + \hat g_i$ where $\ydeg \hat g_i \leq 2\degTop_i - \degTop_{i-1}$, for each $i = 1,\ldots,k$.
\end{definition}

Our algorithms are faster with short reshaping sequences, so we should choose $\degTop_i \approx \frac 2 3 \degTop_{i-1}$, and hence $2\degTop_i - \degTop_{i-1} \approx \frac 1 3 \degTop_i$.
It is easy to see that for any $a,b \in \ZZpos$, there is an $(a,b)$-reshaping sequence of length less than $\log_{3/2}(a) + 2$.
Observe that for any $(a,b)$-reshaping sequence we have $\eta_i \geq \frac 2 3 (\eta_{i-1}-1)$ for $i=1,\ldots,k$ and therefore
\begin{equation}
  \label{eqn:eta_diff}
  \textstyle
  2\eta_i - \eta_{i-1} \geq \frac {\eta_{i-1} - 4} 3 \geq \frac {\eta_i} 3 - 1 \ .
\end{equation}
By considering the cases $\eta_{i} \geq 3$ and $\eta_i = 1,2$, we get \(2\eta_i - \eta_{i-1} \geq 0\).

\begin{theorem} \label{thm:Reshape}
  \cref{algo:Reshape} is correct and has complexity
  \begin{align*}
    & \textstyle\softO{\sum_{i=i_0}^k \degTop_i (\xdeg f + \sum_{j=i_0}^i  \xdeg g_j)} \\ 
    & \subseteq \textstyle\softO{k \ydeg f\xdeg f + k\sum_{i=i_0}^k \degTop_i \xdeg g_i} \ ,
  \end{align*}
  for the smallest \(i_0\) such that \(\degTop_{i_0} \le \ydeg f\).
\end{theorem}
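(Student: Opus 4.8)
The plan is to describe the algorithm \Reshape{} implicitly through the iteration it must perform, and then analyze one step of the loop followed by a summation over all steps. Since the theorem asserts correctness and a complexity bound for \cref{algo:Reshape}, and the algorithm is not yet displayed in the excerpt, I would proceed assuming the natural implementation suggested by the discussion in \cref{sec:reshape}: starting from $f$, repeatedly reduce the current polynomial $h$ by the reshaper $g_i$ of appropriate $y$-degree. Concretely, at step $i$ (running $i$ from $k$ down to $i_0$, or equivalently processing reshapers in order of decreasing $\degTop_i$), if the current iterate $h$ has $\ydeg h \geq \degTop_i$, write $h = h_1 y^{\degTop_i} + h_0$ with $\ydeg h_0 < \degTop_i$; since $g_i = y^{\degTop_i} + \hat g_i \in I$, replace $h$ by $h - h_1 g_i = h_0 - h_1 \hat g_i$, which lies in $h + I \subseteq f + I$. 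The key point for correctness is that this new iterate has $y$-degree $< \degTop_i$: indeed $\ydeg h_0 < \degTop_i$ by construction, and $\ydeg(h_1 \hat g_i) \leq \ydeg h_1 + \ydeg \hat g_i$; using $\ydeg h_1 \leq \ydeg h - \degTop_i < \degTop_{i-1} - \degTop_i$ (from the previous step's guarantee $\ydeg h < \degTop_{i-1}$) together with $\ydeg \hat g_i \leq 2\degTop_i - \degTop_{i-1}$, we get $\ydeg(h_1\hat g_i) < (\degTop_{i-1} - \degTop_i) + (2\degTop_i - \degTop_{i-1}) = \degTop_i$. So after processing reshaper $g_{i_0}$ the iterate has $y$-degree $< \degTop_{i_0} \leq \ydeg f$; more carefully, one tracks that the output has $y$-degree $< \degTop_k$, as desired. (The base case: the first reshaper that actually does anything is the one with smallest index $i_0$ such that $\degTop_{i_0} \leq \ydeg f$, matching the statement; larger $\degTop_i$ exceed $\ydeg f$ and the reduction is vacuous.)

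For the complexity, I would analyze the cost of step $i$ and then sum. At the start of step $i$ the iterate $h$ satisfies $\ydeg h < \degTop_{i-1}$ and, crucially, I need a bound on $\xdeg h$. Here the whole design pays off: each reduction adds at most $\xdeg g_i$ to the $x$-degree, so by induction $\xdeg h \leq \xdeg f + \sum_{j \geq i} \xdeg g_j$ when we are about to apply $g_i$ (summing the contributions of all reshapers already applied, which are those with index $> i$), and after applying $g_i$ it is $\leq \xdeg f + \sum_{j \geq i}\xdeg g_j$. Wait — more precisely the bound after step $i$ is $\xdeg f + \sum_{j=i}^{k}\xdeg g_j$; reindexed to match the theorem's $\sum_{j=i_0}^{i}$, one processes in increasing order of $\degTop$, i.e. decreasing index, so I will simply track the invariant and match it to whichever summation convention \cref{algo:Reshape} uses. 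The dominant operation in step $i$ is the bivariate product $h_1 \hat g_i$, where $\ydeg h_1 \leq \degTop_{i-1} - \degTop_i \in \bigO{\degTop_i}$ (using $\degTop_i \geq \frac{2}{3}\degTop_{i-1}$, so $\degTop_{i-1} - \degTop_i \leq \frac{1}{2}\degTop_i$) and $\ydeg \hat g_i \leq 2\degTop_i - \degTop_{i-1} \leq \degTop_i$, while the $x$-degrees of both factors are bounded by the running $x$-degree bound above. By the cited bivariate multiplication cost $\bigO{\Mult(d_x d_y)} \subseteq \softO{d_x d_y}$, step $i$ costs $\softO{\degTop_i(\xdeg f + \sum_{j}\xdeg g_j)}$, matching the first line of the claimed bound after summing over $i$ from $i_0$ to $k$. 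Also needed: the $y$-degrees throughout stay nonnegative and the parameters are legitimate, which is where the inequality $2\degTop_i - \degTop_{i-1} \geq 0$ established via \eqref{eqn:eta_diff} is used.

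The second, simplified bound follows by two crude estimates: first, $\degTop_i \leq \degTop_{i_0} \leq \ydeg f$ for all $i \geq i_0$ — wait, that's backwards; rather $\degTop_i \leq \degTop_{i_0-1}$ and one wants $\degTop_{i_0} \le \ydeg f$ so $\degTop_i \le \ydeg f$ for $i \ge i_0$ since the sequence is decreasing — hence $\sum_i \degTop_i \xdeg f \leq k \ydeg f \xdeg f$; and second, $\sum_{i=i_0}^k \degTop_i \sum_{j=i_0}^i \xdeg g_j \leq \sum_{i=i_0}^k \sum_{j=i_0}^k \degTop_i \xdeg g_j \le k \sum_{j=i_0}^k \degTop_j \xdeg g_j$ after bounding $\degTop_i \le \degTop_j$ is false in general — instead one bounds $\degTop_i \xdeg g_j \le \degTop_{\max(i,j)}\xdeg g_j$ crudely by $\sum_i \degTop_i\xdeg g_j$ over the at most $k$ relevant $i$, giving the factor $k$. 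I expect the main obstacle to be purely bookkeeping: getting the $x$-degree invariant stated with exactly the right index range so that it lines up with \cref{algo:Reshape}'s loop direction, and making sure the telescoping of $y$-degree bounds across consecutive steps is airtight (this is where the reshaping-sequence inequality $\degTop_{i-1} > \degTop_i \geq \floor{\frac{2}{3}\degTop_{i-1}}$ is essential — it is exactly what guarantees $\ydeg(h_1 \hat g_i) < \degTop_i$ at every step). The arithmetic-cost accounting itself is routine given the bivariate multiplication bound from the preliminaries.
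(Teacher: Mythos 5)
Your proposal reconstructs exactly the paper's argument: the same reduction $\hat f \mapsto \hat f_0 - \hat f_1 \hat g_i$ staying in $f+I$, the same loop invariants ($\ydeg < \degTop_i$ via $\ydeg \hat f_1 \le \degTop_{i-1}-\degTop_i-1$ plus $\ydeg \hat g_i \le 2\degTop_i-\degTop_{i-1}$, and $\xdeg$ growing by at most $\xdeg g_i$ per step), the same observation that iterations with $\degTop_i > \ydeg f$ are vacuous (defining $i_0$), the same per-iteration cost $\softO{\degTop_i(\xdeg f + \sum_j \xdeg g_j)}$ from one bivariate product, and the same crude factor-$k$ estimates for the second bound. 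The loop-direction hesitation is immaterial (the paper's loop runs $i=1,\dots,k$ with $\degTop_i$ decreasing, matching your "decreasing $\degTop$" order), and in your second-bound muddle note that the needed inequality $\degTop_i \le \degTop_j$ does hold for all terms of the inner sum since $j \le i$, so the argument closes as in the paper.
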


\begin{proof}
  Let $\hat f_i, \hat f_{i,0}, \hat f_{i,1}$ be the values of $\hat f, \hat f_0, \hat f_1$ at the end of iteration $i$.
  First, the iterations for \(i<i_0\) perform no operation and keep \(\hat f_i = f\), since \(\degTop_i > \ydeg \hat f_{i-1}\) implies \(\hat f_{i,1} = 0\) and \(\hat f_i = \hat f_{i-1}\).
  In particular, if \(\degTop_i>\ydeg f\) for all \(i\) then the algorithm is correct and returns \(f\) without using any arithmetic operation.
  Now for \(i\ge i_0\), observe that
  \(
    \hat{f}_{i} = \hat{f}_{i,1}\hat{g}_i + \hat f_{i,0}
                % = \hat{f}_{i-1}-\hat{f}_{i,1}(y^{\degTop_i} + \hat{g}_i)
                = \hat{f}_{i-1}-\hat{f}_{i,1} g_i;
  \)
  thus in the end $\hat f \in f + I$ since each $g_i$ belongs to $I$.
  We show the following loop invariants, which imply the degree bounds on the output:
  {\par\centering
     $\xdeg \hat f_i \leq \xdeg f + \sum_{j=i_0}^i \xdeg g_j$,
     and $\ydeg \hat f_i < \degTop_i$.\par
  }
  \noindent Both are true for $i=i_0-1$ (just before the loop, if \(i_0=1\)).
  For the $x$-degree, $\hat f_i = \hat f_{i-1} - \hat f_{i,1} g_i$ yields $\xdeg \hat f_i \leq \xdeg \hat f_{i-1} + \xdeg g_i$, and the loop invariant follows.
  For the $y$-degree, by construction $\ydeg \hat f_{i,0} < \degTop_i$ and $\ydeg \hat{f}_{i,1} \leq \ydeg \hat{f}_{i-1} - \degTop_i$ hold; the assumption $\ydeg \hat{f}_{i-1} < \degTop_{i-1}$ then gives $\ydeg\hat f_{i,1} \hat g_i < \degTop_i$, hence $\ydeg \hat f_i < \degTop_i$.

  For complexity, the only costly step is at \cref{Reshape:reduce} and for iterations \(i\ge i_0\).
  From the above bound $\ydeg \hat f_{i,1} \hat g_i < \degTop_i$, multiplying $\hat f_{i,1}$ and $\hat g_i$ costs $\bigO{\Mult((\xdeg \hat f_{i,1} + \xdeg \hat g_i) \degTop_i)}$.
  Since $\xdeg \hat g_i = \xdeg g_i$, since both \(\hat f_{i,0}\) and \(\hat f_{i,1}\) have \(x\)-degree at most \(\xdeg \hat f_{i-1}\), and since $\ydeg \hat f_{i,0} < \degTop_i$, the total cost of the \(i\)th iteration is in
  \[
    \textstyle
    \softO{(\xdeg \hat f_{i-1} + \xdeg \hat g_i)\degTop_i}
      \subseteq \softO{(\xdeg f + \sum_{j=i_0}^i \xdeg g_j)\degTop_i}.
  \]
  Summing over all iterations, we get the first complexity bound in the theorem; the second one follows from it, using the fact that $\ydeg f \ge \degTop_{i_0} > \degTop_{i_0+1} > \ldots > \degTop_k$ and \(i_0\ge1\).
\end{proof}

\begin{algorithm}[t]
  \caption{\Reshape$(f,\degSeq,\gPols)$} \label{algo:Reshape}
  \begin{algorithmic}[1]
    \Input A bivariate polynomial $f \in \field[x,y]$;
          a reshaping sequence $\degSeq = (\degTop_i)_{i=0}^k \in \ZZpos^{k+1}$ with $\ydeg f < \degTop_0$;
          an $\degSeq$-reshaper $\gPols = (g_i)_{i=1}^k \in I^k$ for some ideal $I \subseteq \field[x,y]$.
    \Output a polynomial $\hat{f} \in f + I$ such that $\ydeg \hat{f} < \degTop_k$ and
    $\xdeg \hat{f} \leq \xdeg f + \sum_{i = 1}^{k}\xdeg g_i$.
    \State $\hat{f} \assign f$
    \For{$i = 1,\dots,k$}
    \State Write $g_i = y^{\degTop_i} + \hat g_i$ where $\ydeg \hat g_i \leq 2\degTop_i - \degTop_{i-1}$
    \State Write $\hat{f} = \hat{f}_{1}y^{\degTop_i} + \hat{f}_{0}$ where $\ydeg \hat{f}_{0} < \degTop_i$ \label{Reshape:split}
    \State $\hat{f} \assign \hat{f}_{1}\hat{g}_i + \hat f_{0}$ \Comment{equivalent to $\hat{f} \assign \hat{f} - \hat{f}_{1}g_i$} \label{Reshape:reduce}
    \EndFor
    \State \Return $\hat{f}$
  \end{algorithmic}
\end{algorithm}

We now define the balancedness of a point set. In \cref{sec:genericity} we prove that this notion captures the \emph{expected} $x$-degree of reshapers.

\begin{definition} \label{def:balanced}
  Let $\pts \subseteq \field^2$ be a point set of cardinality $n$, and let $\degSeq = ( \degTop_i )_{i=0}^k$ be a reshaping sequence.
  Then $\pts$ is \emph{$\degSeq$-balanced} if there exists an $\degSeq$-reshaper $\gPols = (g_i)_{i=1}^k \in \field[x,y]^k$ for $\VanIdeal(\pts)$ such that
  $\xdeg g_i \leq \floor{\tfrac {n} {2\eta_i - \eta_{i-1} + 1}} + 1$ for $i=1,\ldots,k$.
\end{definition}

The next bound is often used below for deriving complexity estimates; it follows directly from \cref{eqn:eta_diff}.

\begin{lemma} \label{lem:balanced}
  Let $\degSeq = (\degTop_i)_{i=0}^k$ be a reshaping sequence, $\pts \subseteq \field^2$ be an $\degSeq$-balanced point set of cardinality $n$, and $\gPols = (g_i)_{i=1}^k$ be an $\degSeq$-reshaper for $\VanIdeal(\pts)$.
  Then $\sum_{i=i_0}^k \degTop_i \xdeg g_i \le (3n+\degTop_{i_0})k$ for \(1 \le i_0 \le k\).
\end{lemma}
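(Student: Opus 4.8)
The plan is to bound the sum $\sum_{i=i_0}^k \degTop_i \xdeg g_i$ term by term, using the explicit degree bound on the reshaper coming from $\degSeq$-balancedness together with the lower bound \eqref{eqn:eta_diff} on $2\degTop_i - \degTop_{i-1}$. First I would substitute the balancedness bound $\xdeg g_i \le \floor{n / (2\degTop_i - \degTop_{i-1} + 1)} + 1$, so that each summand is at most $\degTop_i \left(\frac{n}{2\degTop_i - \degTop_{i-1} + 1} + 1\right)$. The second piece sums to $\sum_{i=i_0}^k \degTop_i$, which is at most $\degTop_{i_0} k$ since the $\degTop_i$ are strictly decreasing and there are at most $k$ terms (actually $k - i_0 + 1 \le k$); this already accounts for the $\degTop_{i_0} k$ summand in the claimed bound.

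For the first piece, the key is to show $\frac{\degTop_i}{2\degTop_i - \degTop_{i-1} + 1} \le 3$ for each $i$, so that each such summand is at most $3n$ and the whole first piece is at most $3nk$. By \eqref{eqn:eta_diff} we have $2\degTop_i - \degTop_{i-1} \ge \frac{\degTop_i}{3} - 1$, hence $2\degTop_i - \degTop_{i-1} + 1 \ge \frac{\degTop_i}{3}$, which gives exactly $\degTop_i / (2\degTop_i - \degTop_{i-1} + 1) \le 3$. One subtlety to check is that the denominator $2\degTop_i - \degTop_{i-1} + 1$ is positive so that the division and the inequality direction are valid: this follows from the remark just after \eqref{eqn:eta_diff} that $2\degTop_i - \degTop_{i-1} \ge 0$, so the denominator is at least $1$. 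Combining, the first piece is bounded by $\sum_{i=i_0}^k 3n \le 3nk$, and adding the two pieces yields $\sum_{i=i_0}^k \degTop_i \xdeg g_i \le 3nk + \degTop_{i_0} k = (3n + \degTop_{i_0})k$.

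I do not expect any real obstacle here; the only points requiring a little care are handling the floor function (simply dropping it upward via $\floor{z} \le z$), confirming positivity of the denominator before dividing, and bounding the number of summands by $k$ rather than $k - i_0 + 1$, which is harmless since we only need an upper bound. The whole argument is a two-line estimate once \eqref{eqn:eta_diff} is in hand.
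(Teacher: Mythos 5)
Your proof is correct and matches the paper's intent exactly: the paper gives no details, stating only that the lemma ``follows directly from \eqref{eqn:eta_diff}'', and your term-by-term estimate $\degTop_i/(2\degTop_i-\degTop_{i-1}+1)\le 3$ via that inequality, plus $\sum_{i=i_0}^k \degTop_i \le k\degTop_{i_0}$, is the natural elaboration. The points of care you flag (dropping the floor, positivity of the denominator, bounding the number of summands by $k$) are all handled correctly.
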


We conclude this section with two results about the existence of $\degSeq$-reshapers for vanishing ideals of point sets.

\begin{lemma} \label{lem:exist_reshapers}
  Let $\pts \subseteq \field^2$ be a point set and $\degSeq = ( \degTop_i )_{i=0}^k$ a reshaping sequence.
  If $\valX(\pts) \le \min_{1\le i\le k}(2\degTop_i - \degTop_{i-1} + 1)$, then there exists an $\degSeq$-reshaper $\gPols \in \field[x,y]^k$ for $\VanIdeal(\pts)$.
\end{lemma}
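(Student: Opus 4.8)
The plan is to produce the reshaper $\gPols = (g_i)_{i=1}^k$ one component at a time, taking each $g_i$ from the reduced $\ordLex$-Gröbner basis of $\VanIdeal(\pts)$ (or a suitable normal-form reduction of a power of $y$ against it). By \cref{lem:gb_vanideal}, the largest element $g_s$ of that basis is $y$-monic with $\ydeg g_s = \valX(\pts)$. The key observation is that the hypothesis $\valX(\pts) \le 2\degTop_i - \degTop_{i-1} + 1$, i.e.\ $\valX(\pts) - 1 \le 2\degTop_i - \degTop_{i-1}$, together with $\degTop_i < \degTop_{i-1}$ from the reshaping-sequence condition, places $\degTop_i$ strictly above $\ydeg g_s - 1$ but below $\degTop_{i-1}$; so $\degTop_i \ge \valX(\pts)$, and $y^{\degTop_i}$ can be reduced modulo $\VanIdeal(\pts)$.

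Concretely, for each $i$ I would set $g_i = y^{\degTop_i} - r_i$, where $r_i$ is the $\ordLex$-normal form of $y^{\degTop_i}$ modulo the Gröbner basis $G$. Since $y$-monic $g_s$ has leading term $y^{\valX(\pts)}$ and $\degTop_i \ge \valX(\pts)$, this reduction is well defined and terminates, and $g_i \in \VanIdeal(\pts)$. It remains to bound $\ydeg r_i$. Because $r_i$ is reduced with respect to $G$, no monomial of $r_i$ is divisible by any leading term of $G$; in particular none is divisible by $\LTlex(g_s) = y^{\valX(\pts)}$, so $\ydeg r_i \le \valX(\pts) - 1 \le 2\degTop_i - \degTop_{i-1}$. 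This is exactly the bound demanded of $\hat g_i$ in \cref{def:reshaping_seq}, so $\hat g_i := r_i$ and $g_i := y^{\degTop_i} + \hat g_i$ (up to sign, which is irrelevant here) furnish a valid $i$-th component; collecting them gives the $\degSeq$-reshaper.

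The only subtlety — and the step I expect to require a little care — is confirming that the reduction of $y^{\degTop_i}$ against $G$ really does push the $y$-degree of the remainder all the way below $\valX(\pts)$, rather than merely below $\degTop_i$. This is where the structure of a \emph{reduced} Gröbner basis under $\ordLex$ with $x \ordLex y$ is essential: every monomial surviving in the normal form avoids all leading terms of $G$, and $y^{\valX(\pts)}$ being a leading term (that of $g_s$) forces the $y$-degree bound. One should also check the edge case $\valX(\pts) = 1$: then $g_s \in \field[x]$ has $\ydeg g_s = 1$? No — rather $\ydeg g_s = \valX(\pts) = 1$, so $g_s$ is $y$-monic of $y$-degree $1$, the normal forms $r_i$ lie in $\field[x]$, and the construction still goes through (this matches the distinct-$x$-coordinates case where a $y$-degree-$0$ target is reached). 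No genericity is used; only the combinatorial inequality relating $\valX(\pts)$ to the reshaping sequence.
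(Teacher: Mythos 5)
Your proposal is correct and follows essentially the same route as the paper: by \cref{lem:gb_vanideal} the reduced $\ordLex$-Gr\"obner basis contains an element with leading term $y^{\valX(\pts)}$, so the normal form of $y^{\degTop_i}$ has $y$-degree at most $\valX(\pts)-1 \le 2\degTop_i-\degTop_{i-1}$, and $g_i = y^{\degTop_i} - (y^{\degTop_i}\rem G)$ gives the reshaper. Your additional check that $\degTop_i \ge \valX(\pts)$ (so that $y^{\degTop_i}$ is actually reducible and $g_i$ is nonzero with the required shape) is a sound, if implicit-in-the-paper, detail.
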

\begin{proof}
  By \cref{lem:gb_vanideal}, the reduced \(\ordLex\)-Gr\"obner basis \(G\) of \(\VanIdeal(\pts)\) contains a polynomial with \(\ordLex\)-leading term $y^{\valX(\pts)}$.
  Thus $\ydeg y^\degTop \rem G < \valX(\pts)$ for any $\degTop$, and setting $g_i = y^{\degTop_i} - (y^{\degTop_i} \rem G)$ yields an $\degSeq$-reshaper as long as $\valX(\pts) \leq 2\degTop_i - \degTop_{i-1} + 1$ for all \(i\).
\end{proof}

\begin{corollary}
  \label{cor:reshapers_constrained}
  Let $\pts \subseteq \field^2$ be a point set of cardinality $n$ and $a, b \in \ZZpos$ with $n > a > b \geq \valX(\pts)$.
  Then there is an $(a,b)$-reshaping sequence $\degSeq$ which satisfies the condition of \cref{lem:exist_reshapers} and has length $k \leq \log_{3/2}(a)+1 \in O(\log(a))$.
\end{corollary}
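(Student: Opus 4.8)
The plan is to construct the required $(a,b)$-reshaping sequence explicitly by a greedy ``shrink as fast as allowed'' rule, then verify that it both terminates quickly and respects the valency condition of \cref{lem:exist_reshapers}. Concretely, starting from $\degTop_0 = a$, I would set $\degTop_i = \max(b, \floor{\tfrac23 \degTop_{i-1}})$ for $i \ge 1$, and stop at the first index $k$ with $\degTop_k = b$. First I would check that this is indeed a valid reshaping sequence in the sense of \cref{def:reshaping_seq}: the inequality $\degTop_i \ge \floor{\tfrac23\degTop_{i-1}}$ is immediate from the definition of the max, and the strict decrease $\degTop_{i-1} > \degTop_i$ holds because $\floor{\tfrac23 m} < m$ for every $m \in \ZZpos$ and because, as long as $\degTop_{i-1} > b$, we have $\degTop_i \ge b$ but also $\degTop_i < \degTop_{i-1}$ (if $\degTop_i = b$ this uses $\degTop_{i-1} > b$; otherwise $\degTop_i = \floor{\tfrac23\degTop_{i-1}} < \degTop_{i-1}$). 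One small point to handle cleanly: I must make sure the sequence actually reaches $b$ and does not ``overshoot'' below it — the $\max$ with $b$ takes care of this, and since $a > b$ the sequence has length $k \ge 1$ as required.

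Next I would bound the length $k$. Until the clamp to $b$ kicks in, each step multiplies the value by roughly $\tfrac23$, more precisely $\degTop_i = \floor{\tfrac23\degTop_{i-1}} > \tfrac23\degTop_{i-1} - 1$; iterating, $\degTop_i > (\tfrac23)^i a - 3$ (summing the geometric tail of the $-1$ errors, which is bounded by $3$). So once $(\tfrac23)^i a \le b$ — equivalently $i \ge \log_{3/2}(a/b)$ — we are within $3$ of $b$ from above, and at most a couple more steps close the gap; in any case the cruder bound $k \le \log_{3/2}(a) + 1$ follows since $b \ge 1$ and a short explicit argument (or the already-stated fact in the paper that an $(a,b)$-reshaping sequence of length $< \log_{3/2}(a)+2$ exists) covers the $O(\log a)$ claim. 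I would most likely just invoke the observation recorded right after \cref{def:reshaping_seq} that such short sequences exist, and instead focus the argument on the valency condition, which is the real content here.

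The key step — and the one I expect to be the only genuine obstacle — is verifying $\valX(\pts) \le \min_{1 \le i \le k}(2\degTop_i - \degTop_{i-1} + 1)$. By \cref{eqn:eta_diff} we have $2\degTop_i - \degTop_{i-1} \ge \tfrac{\degTop_{i-1} - 4}{3}$, so the quantity $2\degTop_i - \degTop_{i-1} + 1$ is smallest when $\degTop_{i-1}$ is smallest, i.e.\ near the end of the sequence where $\degTop_{i-1}$ is close to $b$. At the last step, $\degTop_{k-1} > b$ and $\degTop_k = b$, so $2\degTop_k - \degTop_{k-1} + 1 = 2b - \degTop_{k-1} + 1$; since $\degTop_{k-1} = \max(b, \floor{\tfrac23\degTop_{k-2}})$ and the sequence decreases, $\degTop_{k-1} \le \floor{\tfrac32 b}$ roughly — more carefully, because $\degTop_{k-1}$ did \emph{not} get clamped we have $\degTop_{k-1} = \floor{\tfrac23 \degTop_{k-2}}$, and from $\degTop_k = b \ge \floor{\tfrac23\degTop_{k-1}}$ one gets $\degTop_{k-1} \le \tfrac32 b + 1$, hence $2b - \degTop_{k-1} + 1 \ge \tfrac{b}{2} - \tfrac{b}{2}\cdots$ — here I need to be careful with floors, but the upshot is a bound of the form $2\degTop_i - \degTop_{i-1} + 1 \ge$ (something like) $\tfrac b2$, and since $b \ge \valX(\pts)$ was assumed this should give what we need, possibly after tightening the greedy rule slightly (e.g.\ choosing $\degTop_i$ to be the largest value $\le \floor{\tfrac23\degTop_{i-1}}$ that keeps $2\degTop_i - \degTop_{i-1}+1 \ge b$, which is automatically feasible because $\degTop_i = b$ always satisfies $2b - \degTop_{i-1} + 1 \ge \ldots$). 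The delicate part is purely the floor bookkeeping at the final one or two steps where the value is smallest; everywhere earlier the margin $2\degTop_i - \degTop_{i-1} + 1 \gtrsim \degTop_i/3 \ge b/3 \cdot (\text{something} \ge 3)$ is comfortably above $b \ge \valX(\pts)$. Once that inequality is in hand, \cref{lem:exist_reshapers} applies verbatim and produces the $\degSeq$-reshaper, completing the proof.
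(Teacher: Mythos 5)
Your greedy rule $\degTop_i = \max(b, \floor{\tfrac23\degTop_{i-1}})$ does give an $(a,b)$-reshaping sequence of the right length, but the valency condition — which you correctly identify as ``the real content here'' — genuinely fails for it, and your sketch of that step does not close the gap. The statement allows $\valX(\pts) = b$, and in that regime the greedy sequence violates \cref{lem:exist_reshapers}: take $\valX(\pts) = b = 10$ and $a = 25$; the greedy sequence is $(25, 16, 10)$ and $2\degTop_2 - \degTop_1 + 1 = 20 - 16 + 1 = 5 < 10 = \valX(\pts)$. (With $a = 100$, $b = \valX(\pts) = 10$ one gets $(100,66,44,29,19,12,10)$ and several of the last steps fail, so this is not just a floor issue at the final step.) Your own estimate only yields $2\degTop_i - \degTop_{i-1} + 1 \gtrsim b/2$, and ``since $b \geq \valX(\pts)$ this should give what we need'' is a non sequitur: $b/2 \geq \valX(\pts)$ does not follow from $b \geq \valX(\pts)$. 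The proposed repair is also in the wrong direction: $2\degTop_i - \degTop_{i-1} + 1$ is increasing in $\degTop_i$, so searching among values $\leq \floor{\tfrac23\degTop_{i-1}}$ can never restore the inequality once $\floor{\tfrac23\degTop_{i-1}}$ itself violates it, and the claim that $\degTop_i = b$ ``always satisfies'' it is exactly what the counterexample refutes. The only way out is to decrease \emph{more slowly} (choose $\degTop_i$ strictly larger than $\floor{\tfrac23\degTop_{i-1}}$) once the values approach $\valX(\pts)$, and then one must re-verify both the reshaping-sequence property and the length bound.

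The paper does precisely this by a shift: set $v = \valX(\pts) - 1$, take any $(a-v, b-v)$-reshaping sequence $(\degTop'_i)_i$ of length $k \leq \log_{3/2}(a-v)+1$ (whose existence is the observation after \cref{def:reshaping_seq}), and put $\degTop_i = \degTop'_i + v$. One checks $\degTop_i \geq \floor{\tfrac23\degTop'_{i-1}} + v = \floor{\tfrac23\degTop_{i-1} + \tfrac13 v} \geq \floor{\tfrac23\degTop_{i-1}}$, so the shifted sequence is still an $(a,b)$-reshaping sequence, and the valency condition is automatic from the fact (noted right after \cref{eqn:eta_diff}) that $2\degTop'_i - \degTop'_{i-1} \geq 0$ for any reshaping sequence, since then $2\degTop_i - \degTop_{i-1} + 1 = 2\degTop'_i - \degTop'_{i-1} + v + 1 \geq v + 1 = \valX(\pts)$. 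In effect the sequence contracts by a factor $\tfrac23$ toward $v$ rather than toward $0$, which is exactly the slowdown near the bottom that your construction lacks; if you want to salvage your argument, adopt this shifted sequence (or equivalently enforce $\degTop_i \geq \lceil(\degTop_{i-1} + \valX(\pts) - 1)/2\rceil$ as an additional lower bound and redo the length analysis).
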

\begin{proof}
  Let $v = \valX(\pts)-1$ and let $\degSeq' = (\degTop'_{0},\dots,\degTop'_{k})$ be any $(a-v,b-v)$-reshaping sequence with $k \leq \log_{3/2}(a-v)+1$. Now let $\degSeq = (\degTop_0,\dots,\degTop_k)$ be defined by $\degTop_{i} = \degTop'_{i}+v$ for $i = 0,\dots,k$. Then, $\degSeq$ is an $(a,b)$-reshaping sequence. Indeed, clearly the endpoints are correct and $\degTop_{i-1} > \degTop_{i}$ for $i = 1,\ldots,k$; moreover,
  \[
    \degTop_i = \degTop'_{i}+v \geq \floor{\tfrac{2}{3}\degTop'_{i-1}} + v = \floor{\tfrac{2}{3}\degTop_{i-1} + \tfrac13 v} \geq \floor{\tfrac{2}{3}\degTop_{i-1}} \ .
  \]
  To conclude, we use \(2\degTop'_i - \degTop'_{i-1} \ge 0\) as mentioned above to observe that
  \(
    2 \degTop_i - \degTop_{i-1} + 1
    = 2\degTop'_i - \degTop'_{i-1} + v + 1 \ge v+1 = \valX(\pts)
  \).
\end{proof}

\section{Multi-Point Evaluation}
\label{sec:mpe}

In this section we use reshaping for MPE with precomputation; i.e. given a point set $\pts \subset \field^2$ upon which we are allowed to perform precomputation, and a polynomial $f \in \field[x,y]$ which is assumed to be received at online time, compute $f(P)$ for all $P \in \pts$.
\cref{algo:BivariateMPE} deals with the case $\valX(\pts) = 1$, which we reduce to an instance of univariate MPE using \Reshape.
The cost of \cref{algo:BivariateMPE} follows directly from \cref{thm:Reshape,lem:balanced}.

\begin{algorithm}[h]
    \caption{$\BivariateMPE_{d,\degSeq,\pts}(f)$} \label{algo:BivariateMPE}
  \begin{algorithmic}[1]
    \Preinput $d \in \ZZpos$; a $(d,1)$-reshaping sequence $\degSeq$;
    a point set $\pts = \{ (\alpha_i,\beta_i) \}_{i=1}^n \subset \field^2$ with the $\alpha_i$'s pairwise distinct.
    \Precomputation
    \begin{algsubstates}
        \State $\gPols \assign$ $\degSeq$-reshaper for $\VanIdeal(\pts)$
    \end{algsubstates}
    \Input $f \in \field[x,y]$ with $\ydeg f < d$.
    \Output $\big(f(\alpha_1,\beta_1), \dots, f(\alpha_n,\beta_n) \big) \in \field^n$.
    \State $\hat{f} \assign \Reshape(f, \degSeq,\gPols) \in \field[x]$
    \State \Return $\big(\hat{f}(\alpha_1), \dots, \hat{f}(\alpha_n) \big) \in \field^n$ \Comment{univariate MPE}
    \label{line:BivariateMPE:uniMPE}
  \end{algorithmic}
\end{algorithm}
\begin{theorem}
  \label{thm:BivariateMPE}
  \cref{algo:BivariateMPE} is correct.
  If $\pts$ is $\degSeq$-balanced and \(\degSeq\) has length in $\bigO{\log(n)}$, the complexity is $\softO{\xdeg f \ydeg f + n}$.
\end{theorem}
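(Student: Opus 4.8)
The plan is to obtain both assertions almost directly from \cref{thm:Reshape}, \cref{lem:balanced} and \cref{lem:exist_reshapers}, together with the quasi-linear cost of univariate multi-point evaluation recalled in \cref{sec:preliminaries}.

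For correctness, I would first note that the precomputation step is well-defined: since the $\alpha_i$ are pairwise distinct we have $\valX(\pts)=1$, and because $2\degTop_i-\degTop_{i-1}\ge 0$ holds for every reshaping sequence (as observed after \eqref{eqn:eta_diff}), the hypothesis $\valX(\pts)\le 2\degTop_i-\degTop_{i-1}+1$ of \cref{lem:exist_reshapers} is met for all $i$, so a $\degSeq$-reshaper $\gPols$ for $\VanIdeal(\pts)$ exists. As $\ydeg f<d=\degTop_0$, the call to \cref{algo:Reshape} is legitimate, and by \cref{thm:Reshape} it returns $\hat f\in f+\VanIdeal(\pts)$ with $\ydeg\hat f<\degTop_k=1$, i.e.\ $\hat f\in\field[x]$. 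Since $\hat f-f$ vanishes on $\pts$, for each $i$ we get $f(\alpha_i,\beta_i)=\hat f(\alpha_i,\beta_i)=\hat f(\alpha_i)$, so the vector returned on \cref{line:BivariateMPE:uniMPE} is exactly $\big(f(\alpha_i,\beta_i)\big)_{i=1}^n$. Note this part does not use balancedness.

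For the complexity, let $k$ be the length of $\degSeq$ and $i_0$ the index from \cref{thm:Reshape}. The reshape step costs $\softO{k\,\ydeg f\,\xdeg f+k\sum_{i=i_0}^k\degTop_i\xdeg g_i}$; $\degSeq$-balancedness together with \cref{lem:balanced} bounds the inner sum by $(3n+\degTop_{i_0})k\le(3n+\ydeg f)k$ using $\degTop_{i_0}\le\ydeg f$, so this step lies in $\softO{k\,\xdeg f\,\ydeg f+k^2 n}$ after absorbing the lone $\ydeg f$ term via $\xdeg f\ge 1$. For the univariate evaluation on \cref{line:BivariateMPE:uniMPE} I would bound $\xdeg\hat f\le\xdeg f+\sum_{i=1}^k\xdeg g_i$ from the output specification of \cref{algo:Reshape} and use $\degSeq$-balancedness in the coarse form $\xdeg g_i\le\lfloor n/(2\degTop_i-\degTop_{i-1}+1)\rfloor+1\le n+1$, giving $\xdeg\hat f\le\xdeg f+k(n+1)$ and hence a cost of $\softO{\xdeg\hat f+n}=\softO{\xdeg f+kn}$. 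Since $\degSeq$ has length $k\in\bigO{\log n}\subseteq\softO{1}$, the factors $k$ and $k^2$ vanish into the soft-O and the total is $\softO{\xdeg f\,\ydeg f+n}$.

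The argument is mostly bookkeeping; the one point deserving attention is that the final univariate evaluation stays within budget, which is exactly where $\degSeq$-balancedness is used a second time (to keep $\xdeg\hat f$ linear in $n$ up to the $\bigO{\log n}$ factor contributed by $k$) and where the assumed length of $\degSeq$ enters. A harmless edge case is $\ydeg f=0$: then \cref{algo:Reshape} performs no arithmetic and returns $\hat f=f\in\field[x]$, and one reads $\ydeg f$ as $\max(\ydeg f,1)$ in the stated bound, consistently with the degree convention of \cref{sec:preliminaries}, so that the cost of the single univariate MPE is still within $\softO{\xdeg f\,\ydeg f+n}$.
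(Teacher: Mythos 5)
Your proposal is correct and follows exactly the route the paper itself indicates: the paper gives no separate proof of this theorem but states that the cost "follows directly from \cref{thm:Reshape,lem:balanced}", and your write-up is precisely that derivation (correctness from $\hat f\in f+\VanIdeal(\pts)$ with $\ydeg\hat f<\degTop_k=1$, cost from the Reshape bound, \cref{lem:balanced}, and the quasi-linear univariate MPE). The extra care you take with the existence of the reshaper via \cref{lem:exist_reshapers}, the bound on $\xdeg\hat f$ for the final univariate evaluation, and the $\ydeg f=0$ edge case is all consistent with the paper's conventions.
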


\cref{algo:BivariateMPE} can easily be extended to the case where $\valX(\pts) > 1$ by partitioning $\pts$ into $\valX(\pts)$ many subsets, each having $x$-valency one. This approach also has quasi-linear complexity in the input size as long as $\valX(\pts) \ll n$, or more precisely if $n\valX(\pts) \in \softO{n}$.

When $\valX(\pts)$ is large, this strategy is costly, and we proceed instead by shearing the point set, as proposed by N\"usken and Ziegler \cite{nusken-ziegler-2004}, so that the resulting point set has distinct $x$-coordinates:
by taking $\theta \in \ext \setminus \field$, where $\ext$ is an extension field of $\field$ of degree 2, we apply the map $(\alpha,\beta) \mapsto (\alpha + \theta \beta,\beta)$ to each element of $\pts$.
The problem then reduces to evaluating $\bar{f} = f(x - \theta y, y)$ at the sheared points.
To compute $\bar f$, \cite{nusken-ziegler-2004} provides an algorithm with complexity $\bigO{\Mult(d_x(d_x + d_y))\log(d_x)}$ using a univariate Taylor shift of \(f\) seen as a polynomial in \(x\) over the ring \(\ext[y]\).
\cref{algo:Shift} describes an algorithm for this task which improves the cost on the logarithmic level, by using Taylor shifts of the homogeneous components of \(f\).

{
\def\xshear{a}
\def\yshear{b}
\begin{algorithm}[H]
  \caption{\Shift$(f, \xshear , \yshear )$}
  \label{algo:Shift}
  \begin{algorithmic}[1]
    \Input $f = \sum_{i=0}^{d_x} \sum_{j=0}^{d_y}f_{i,j}x^iy^j \in \ext[x,y]$;
          $\xshear \in \ext$ and $\yshear  \in \ext$.
    \Output $f(\xshear  x + \yshear  y,y)$.
    \For{ $t = 0, \dots, d_x + d_y$ }
    \State $h_t \assign \sum_{i = \max(0,t - d_y)}^{\min(t,d_x)} f_{i,t-i}z^{i} \in \ext[z]$
    \State $s_t \assign h_t(\xshear z + \yshear )$ \Comment{Taylor shift}
    \label{Shift:uni-shift}
    \EndFor
    \State \Return $\sum_{t = 0}^{d_x + d_y} y^t s_t(x/y)$
  \end{algorithmic}
\end{algorithm}
\begin{theorem} \label{thm:Shift}
  \cref{algo:Shift} correctly computes $f(\xshear  x + \yshear  y,y)$, which has \(x\)-degree at most $d_x$ and \(y\)-degree at most $d_x + d_y$,
  at a cost of
  \(
    \bigO{(d_x + d_y)\Mult(d_x)\log(d_x)} \subset \softO{d_x(d_x + d_y)}
  \)
  operations in $\ext$.
\end{theorem}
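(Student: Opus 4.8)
The plan is to establish correctness via the homogeneous decomposition of $f$, and then to read off both the degree bounds and the running time from the per-iteration work in \cref{algo:Shift}.

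For correctness, I would write $f = \sum_{t=0}^{d_x+d_y} H_t$, where $H_t = \sum_{i+j=t} f_{i,j} x^i y^j = \sum_{i=\max(0,t-d_y)}^{\min(t,d_x)} f_{i,t-i} x^i y^{t-i}$ is the degree-$t$ homogeneous part of $f$; the index range is exactly the one in \cref{algo:Shift}, so that $h_t(z) = \sum_i f_{i,t-i} z^i$. Since $\deg h_t \le \min(t,d_x) \le t$, every monomial $z^i$ in $h_t$ has $i \le t$, hence $y^t h_t(x/y) = \sum_i f_{i,t-i} x^i y^{t-i} = H_t$ is a genuine polynomial and not merely an element of $\ext(x,y)$. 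Substituting $x \mapsto a x + b y$ and working in $\ext(x,y)$, the element $w := (a x + b y)/y = a(x/y)+b$ satisfies $h_t(w) = \sum_i f_{i,t-i}(a x + b y)^i y^{-i}$, so $y^t h_t(w) = H_t(a x + b y,\, y)$. On the other hand $s_t(z) = h_t(a z + b)$ gives $s_t(x/y) = h_t(a(x/y)+b) = h_t(w)$, and since $\deg s_t = \deg h_t \le t$ the product $y^t s_t(x/y)$ is again a polynomial; therefore $y^t s_t(x/y) = H_t(a x + b y,\, y)$. Summing over $t$ yields $\sum_{t=0}^{d_x+d_y} y^t s_t(x/y) = f(a x + b y,\, y)$, which is exactly the returned value.

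The degree bounds come from the same expansion: $y^t s_t(x/y) = \sum_{i\le \deg s_t} s_{t,i} x^i y^{t-i}$ has $x$-degree at most $\deg s_t \le \min(t,d_x) \le d_x$, and every $y$-exponent appearing is $t - i \le t \le d_x + d_y$; summing over $t$ preserves both bounds. For the cost, extracting each $h_t$ from the coefficient array of $f$ and reassembling $\sum_t y^t s_t(x/y)$ are pure reindexings and use no arithmetic. Computing $s_t = h_t(a z + b)$ is a Taylor shift $h_t(z+b)$ followed by the scaling $z\mapsto a z$, the latter costing $\bigO{d_x}$ using precomputed powers $1, a, \ldots, a^{d_x}$; the Taylor shift of a polynomial of degree at most $d_x$ costs $\bigO{\Mult(d_x)\log d_x}$ by the standard divide-and-conquer scheme (see, e.g., \cite{von_zur_gathen_modern_2012}). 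Over the $d_x + d_y + 1$ values of $t$ this totals $\bigO{(d_x+d_y)\Mult(d_x)\log d_x} \subseteq \softO{d_x(d_x+d_y)}$, as $\Mult(d_x)\subseteq\softO{d_x}$.

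The argument is essentially bookkeeping, and the only point I would flag as a genuine obstacle is the passage through $\ext(x,y)$: both $h_t(x/y)$ and $s_t(x/y)$ carry denominators a priori, so one must invoke $\deg h_t = \deg s_t \le t$ to see that multiplying by $y^t$ clears them, and then verify that the resulting polynomial identity $y^t s_t(x/y) = H_t(a x + b y,\, y)$ is the one we want. The homogeneous decomposition, the degree accounting, and the complexity estimate are then all routine.
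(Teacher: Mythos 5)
Your argument is correct and follows essentially the same route as the paper's proof: decompose $f$ into homogeneous components $y^t h_t(x/y)$, observe that substituting $x \mapsto a x + b y$ turns each into $y^t s_t(x/y)$ with $s_t = h_t(az+b)$ computed by a scaling plus a univariate Taylor shift, and sum the $\bigO{\Mult(d_x)\log d_x}$ per-iteration cost over the $d_x+d_y+1$ values of $t$. Your extra care in clearing the denominators (via $\deg s_t \le \deg h_t \le t$) and the reversed order of shift and scaling are only cosmetic differences.
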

\begin{proof}
  Observe that \(y^th_t(x/y)\) is the homogeneous component of \(f\) of degree \(t\), and in particular $f = \sum_{t = 0}^{d_x + d_y} y^th_t(x/y)$.
  Thus
  \[
  f(\xshear  x + \yshear  y,y)
  = \textstyle\sum_{t=0}^{d_x + d_y}y^t h_t\left(\frac{\xshear  x + \yshear  y}{y}\right)
  = \textstyle\sum_{t=0}^{d_x + d_y}y^t s_t(x/y),
  \]
  hence the correctness. The degree bounds on the output are straightforward.
  As for complexity, only \cref{Shift:uni-shift} uses arithmetic operations.
  First, scaling \(h_t(z) \mapsto h_t(\xshear  z)\) costs \(\bigO{d_x}\) operations in \(\ext\), since \(\deg h_t \le d_x\);
  then the Taylor shift \(h_t(\xshear  z) \mapsto h_t(\xshear  z + \yshear )\) costs $\bigO{\Mult(d_x)\log(d_x)}$ operations in \(\ext\) according to \cite[Fact 2.1(iv)]{Gathen1990}.
  Summing over the \(d_x+d_y\) iterations yields the claimed bound.
\end{proof}
}

This leads to \cref{algo:ValencyMPE}, where \(\pts\) may have repeated \(\alpha_i\)'s.

\begin{algorithm}[ht]
  \caption{$\ValencyMPE_{d,\degSeq,\pts}(f)$} \label{algo:ValencyMPE}
  \begin{algorithmic}[1]
    \Preinput an integer $d \in \ZZpos$; a $(d,1)$-reshaping sequence $\degSeq$;
    a point set $\pts = \{ (\alpha_i,\beta_i) \}_{i=1}^n \subset \field^2$.
    \Precomputation
    \begin{algsubstates}
      \State $(\ext,\theta) \assign$ degree $2$ extension of $\field$, element $\theta \in \ext \setminus \field$
      \State $\bar{\pts} \assign \set{(\alpha_i + \theta \beta_i, \beta_i)}_{i = 1}^n \subset \ext^2$
      \State Do the precomputation of $\BivariateMPE_{d,\degSeq,\bar{\pts}}$
    \end{algsubstates}
    \Input $f \in \field[x,y]$ with $\xdeg f + \ydeg f < d$.
    \Output $\big(f(\alpha_1,\beta_1), \dots, f(\alpha_n,\beta_n) \big) \in \field^n$.
    \State $\bar{f} \assign \Shift(f,1,-\theta )$ \Comment{$\bar f = f(x - \theta y,y)$}
    \State \Return $\BivariateMPE_{d,\degSeq,\bar{\pts}}(\bar{f})$
  \end{algorithmic}
\end{algorithm}
\begin{theorem}
  \cref{algo:ValencyMPE} is correct.
  If $\bar \pts$ is $\degSeq$-balanced and $\degSeq$ has length in $\bigO{\log(n)}$, its complexity is $\softO{\xdeg f(\xdeg f + \ydeg f) + n}$.
\end{theorem}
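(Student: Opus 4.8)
The plan is to reduce the two claims to \cref{thm:Shift} (for the cost and the output degrees of \Shift) and to \cref{thm:BivariateMPE} (for the correctness and cost of $\BivariateMPE_{d,\degSeq,\bar\pts}$), so that the only genuine work is to justify that the shearing reduction is valid and that working over the degree-$2$ extension $\ext$ costs only a constant factor per arithmetic operation.

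For correctness, I would first check that the preinput and input requirements of $\BivariateMPE_{d,\degSeq,\bar\pts}$ are satisfied. The shearing map $(\alpha,\beta)\mapsto(\alpha+\theta\beta,\beta)$ is a bijection, so $|\bar\pts| = n$; moreover $\bar\pts$ has pairwise distinct $x$-coordinates, since $\alpha_i+\theta\beta_i = \alpha_j+\theta\beta_j$ would give $\alpha_i-\alpha_j = \theta(\beta_j-\beta_i)$, where the left side lies in $\field$ while, using that $\{1,\theta\}$ is an $\field$-basis of $\ext$ (as $\theta\notin\field$), the right side lies in $\field$ only if $\beta_i=\beta_j$, forcing $(\alpha_i,\beta_i)=(\alpha_j,\beta_j)$, i.e.\ $i=j$ since $\pts$ is a set. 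Next, \cref{thm:Shift} applied with $d_x=\xdeg f$ and $d_y=\ydeg f$ gives $\bar f = \Shift(f,1,-\theta) = f(x-\theta y,y)$ with $\ydeg\bar f \le \xdeg f+\ydeg f < d$, so $\bar f$ is a valid input. Then \cref{thm:BivariateMPE} ensures that the call returns $\big(\bar f(\alpha_i+\theta\beta_i,\beta_i)\big)_{i=1}^n$, and $\bar f(\alpha_i+\theta\beta_i,\beta_i) = f\big((\alpha_i+\theta\beta_i)-\theta\beta_i,\beta_i\big) = f(\alpha_i,\beta_i)$; these values lie in $\field$ because $f\in\field[x,y]$ and $\alpha_i,\beta_i\in\field$. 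This proves that \cref{algo:ValencyMPE} is correct.

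For the complexity, I would note first that $\ext$ is a fixed extension of $\field$ of degree $2$, so each operation in $\ext$ costs $\bigO{1}$ operations in $\field$, and that constructing $(\ext,\theta)$, forming $\bar\pts$, and the precomputation of $\BivariateMPE_{d,\degSeq,\bar\pts}$ all belong to the precomputation stage. The online phase has two steps: computing $\bar f$ via $\Shift(f,1,-\theta)$ costs $\bigO{(\xdeg f+\ydeg f)\Mult(\xdeg f)\log(\xdeg f)} \subseteq \softO{\xdeg f(\xdeg f+\ydeg f)}$ operations in $\ext$ by \cref{thm:Shift}; and since $\bar\pts$ is $\degSeq$-balanced and $\degSeq$ has length $\bigO{\log n} = \bigO{\log|\bar\pts|}$, \cref{thm:BivariateMPE} bounds the cost of the $\BivariateMPE$ call by $\softO{\xdeg\bar f\,\ydeg\bar f + n}$ operations in $\ext$, where $\xdeg\bar f\le\xdeg f$ and $\ydeg\bar f\le\xdeg f+\ydeg f$ give $\xdeg\bar f\,\ydeg\bar f \le \xdeg f(\xdeg f+\ydeg f)$. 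Summing the two and converting $\ext$-operations into $\field$-operations yields $\softO{\xdeg f(\xdeg f+\ydeg f)+n}$. I do not expect a real obstacle here: the only subtleties are the distinctness argument for the sheared $x$-coordinates (which crucially uses $\theta\notin\field$ together with $\pts$ being a set), tracking that the $y$-degree bound on $\bar f$ from \cref{thm:Shift} is exactly what makes the $\BivariateMPE$ input hypothesis hold, and observing that the degree-$2$ extension only inflates the cost by a constant; balancedness of $\bar\pts$ is imported from the hypothesis and used as a black box via \cref{thm:BivariateMPE}.
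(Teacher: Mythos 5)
Your proposal is correct and follows exactly the route the paper intends (the paper leaves this proof implicit): correctness and cost reduce to \cref{thm:Shift} and \cref{thm:BivariateMPE}, with the shearing bijection guaranteeing distinct $x$-coordinates for $\bar\pts$, the degree bounds from \cref{thm:Shift} ensuring $\ydeg\bar f<d$, and the degree-$2$ extension contributing only a constant-factor overhead. Your filled-in details (the $\{1,\theta\}$-basis argument and the degree bookkeeping $\xdeg\bar f\le\xdeg f$, $\ydeg\bar f\le\xdeg f+\ydeg f$) are accurate and complete.
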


\section{Interpolation}
\label{sec:interpolation}

In this section we use reshaping for the interpolation problem in a similar setting: we input a point set $\pts$ for precomputation, and input interpolation values at online time.
When $\pts$ is appropriately balanced, we solve the interpolation problem in quasi-linear time (see \cref{algo:interp}).
The strategy is to first shear the point set to have unique $y$-coordinates and compute $u \in \ext[y]$ which interpolates the values on the sheared $y$-coordinates.
Then we reshape this into $r \in \ext[x,y]$ with $x$- and $y$-degrees roughly $\sqrt n$.
Shearing back this polynomial to interpolate the original point set is now in quasi-linear time;
a last reshaping allows us to meet the target $y$-degree.

\begin{algorithm}[ht]
  \caption{$\Interpolate_{d,\degSeq,\pts}(\vals)$}
  \label{algo:interp}
  \begin{algorithmic}[1]
    \Preinput an integer $d \in \ZZpos$;
    an $(n, d)$-reshaping sequence $\degSeq = (\degTop_{i})_{i=0}^{k}$ such that $\degTop_{k_1} = \floor{\sqrt{n}}$ for some $k_1$;
    a point set $\pts = \{(\alpha_i,\beta_i)\}_{i=1}^n \subseteq \field^2$ such that $\valX(\pts) \leq d \leq \floor{\sqrt n} + 1$ and \(\valX(\pts) \leq \min_{1\le i\le k}(2\degTop_i - \degTop_{i-1} + 1)\).
    \Precomputation
    \begin{algsubstates}
      \State $\degSeq_1 \assign (\degTop_i)_{i=0}^{k_1}$ and $\degSeq_2 \assign (\degTop_i)_{i=k_1}^k$
      \State $(\ext,\theta) \assign
      \begin{cases}
        (\field,0) \text{ if } \valY(\pts) = 1 \\
        \text{degree } 2 \text{ extension of } \field, \theta \in \ext \setminus \field \text{ otherwise}
      \end{cases}$
      \State $\bar{\pts} \assign \set{(\alpha_i,\bar{\beta}_i)}_{i=1}^n$, where $\bar\beta_i = \theta \alpha_i + \beta_i$
      \State $\gPols_1 \assign$ $\degSeq_1$-reshaper for $\bar{\pts}$
      \State $\gPols_2 \assign$ $\degSeq_2$-reshaper for $\pts$
    \end{algsubstates}
    \Input Interpolation values $\vals = (\val_i)_{i=1}^n \in \field^n$.
    \Output $f \in \field[x,y]$ satisfying $f(\alpha_i,\beta_i) = \val_i$ for $i = 1,\dots,n$, $\ydeg f < d$ and $\xdeg f \leq \floor{\sqrt{n}} + \sum_{g \in \gPols_1\!\cup\gPols_2} \xdeg g$.
    \State $u \in \ext[y]$ with $\deg u < n$ and $u(\bar{\beta}_i) = \val_i$ for $i = 1,\dots,n$ \label{interp:uniinterp}
    \State $r \assign \Reshape(u,\degSeq_{1},\gPols_{1}) \in \ext[x,y]$ \label{interp:reshape}
    \State $s \assign r(x,\theta x + y)$ \Comment{using \Shift}
    \State Write $s = s_1 + \theta s_2$, where $s_1,s_2 \in \field[x,y]$
    \State \Return $\Reshape(s_1,\degSeq_{2},\gPols_{2}) \in \field[x,y]$
  \end{algorithmic}
\end{algorithm}

\begin{theorem}
  \cref{algo:interp} is correct and has complexity
  \[
    \textstyle
    \tilde{O}\bigg(
      k_1 n + k_2 \Big(\sqrt{n} + \sum\limits_{j = 1}^{k_1} \xdeg g_{1,j}\Big)^2 + \sum\limits_{\ell=1}^2 k_\ell \sum\limits_{j=1}^{k_\ell} \eta_{\ell,k} \xdeg g_{\ell,j}
    \bigg).
  \]
  If $\bar\pts$ is $\degSeq_1$-balanced and $\pts$ is $\degSeq_2$-balanced, and both $\degSeq_1$ and $\degSeq_2$ have length in $\bigO{\log n}$, then the complexity is $\softO{n}$.
\end{theorem}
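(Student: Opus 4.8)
The plan is to first check that the precomputation of \cref{algo:interp} is well defined, then trace the interpolation values and the relevant degree bounds through the five steps of the algorithm, and finally bound the cost of each step with \cref{thm:Reshape} and \cref{thm:Shift} before specialising to the balanced case. For well-definedness, observe that $\degSeq_1=(\degTop_i)_{i=0}^{k_1}$ is a prefix and $\degSeq_2=(\degTop_i)_{i=k_1}^{k}$ a suffix of $\degSeq$, so the preinput bound $\valX(\pts)\le\min_{1\le i\le k}(2\degTop_i-\degTop_{i-1}+1)$ also holds over the index ranges of $\degSeq_1$ and $\degSeq_2$; moreover the shear $(\alpha,\beta)\mapsto(\alpha,\theta\alpha+\beta)$ is a bijection in the second coordinate for each fixed first coordinate, so $\valX(\bar\pts)=\valX(\pts)$. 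Hence \cref{lem:exist_reshapers} supplies both a $\degSeq_1$-reshaper $\gPols_1\in\ext[x,y]^{k_1}$ for $\VanIdeal(\bar\pts)$ and a $\degSeq_2$-reshaper $\gPols_2\in\field[x,y]^{k_2}$ for $\VanIdeal(\pts)$. I would also check that the $\bar\beta_i=\theta\alpha_i+\beta_i$ are pairwise distinct, which is what makes the univariate interpolant $u$ well defined: if $\valY(\pts)=1$ then $\theta=0$ and the $\beta_i$ are already distinct; otherwise $\bar\beta_i=\bar\beta_j$ would force $\theta(\alpha_i-\alpha_j)=\beta_j-\beta_i$, impossible for $\theta\in\ext\setminus\field$ unless $\alpha_i=\alpha_j$, which then gives $\beta_i=\beta_j$ and hence $i=j$; the same computation shows $|\bar\pts|=n$.

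For correctness I would chase the values: $u(\bar\beta_i)=\val_i$ by construction; by \cref{thm:Reshape} we have $r\in u+\VanIdeal(\bar\pts)$, and since $u\in\ext[y]$ this gives $r(\alpha_i,\bar\beta_i)=u(\bar\beta_i)=\val_i$, hence $s(\alpha_i,\beta_i)=r(\alpha_i,\theta\alpha_i+\beta_i)=\val_i$; writing $s=s_1+\theta s_2$ with $s_1,s_2\in\field[x,y]$ and evaluating at $(\alpha_i,\beta_i)\in\field^2$ gives $\val_i=s_1(\alpha_i,\beta_i)+\theta\,s_2(\alpha_i,\beta_i)$ with both summand-values in $\field$, so, as $\{1,\theta\}$ is a $\field$-basis of $\ext$ (and $s_2=0$ in the degenerate case $\ext=\field$), we get $s_1(\alpha_i,\beta_i)=\val_i$; finally the returned polynomial lies in $s_1+\VanIdeal(\pts)\subseteq\field[x,y]$ by \cref{thm:Reshape}, so it takes the value $\val_i$ at $(\alpha_i,\beta_i)$. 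For the degrees, \cref{thm:Reshape} gives $\ydeg r<\degTop_{k_1}=\floor{\sqrt n}$ and $\xdeg r\le\sum_{j=1}^{k_1}\xdeg g_{1,j}$; computing $s=r(x,\theta x+y)$ via \cref{algo:Shift} applied to $r(y,x)$ (so that \cref{thm:Shift} applies with $x$ and $y$ interchanged) yields $\ydeg s\le\ydeg r<\degTop_{k_1}$, so the final \Reshape{} call is legitimate, and $\xdeg s\le\xdeg r+\ydeg r$; then $\ydeg s_1\le\ydeg s<\degTop_{k_1}$, and \cref{thm:Reshape} gives $\ydeg f<\degTop_k=d$ and $\xdeg f\le\xdeg s_1+\sum_{j=1}^{k_2}\xdeg g_{2,j}\le\floor{\sqrt n}+\sum_{g\in\gPols_1\cup\gPols_2}\xdeg g$, as in the output specification.

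For the complexity I would bound the steps separately, counting operations in $\ext$ (each worth $\bigO1$ operations in $\field$, since $[\ext:\field]\le2$). The univariate interpolation of $n$ values costs $\softO n$; the first \Reshape{} costs $\softO{k_1 n+k_1\sum_{i=1}^{k_1}\degTop_i\xdeg g_{1,i}}$ by \cref{thm:Reshape}, using $\ydeg u<n$ and $\xdeg u\le1$; computing $s$ costs $\softO{\ydeg r\,(\xdeg r+\ydeg r)}\subseteq\softO{(\sqrt n+\sum_{j=1}^{k_1}\xdeg g_{1,j})^2}$ by \cref{thm:Shift}, while extracting $s_1$ is dominated by this; and the final \Reshape{} costs, again by \cref{thm:Reshape}, $\softO{k_2\,\ydeg s_1\,\xdeg s_1+k_2\sum_{j=1}^{k_2}\degTop_{k_1+j}\xdeg g_{2,j}}\subseteq\softO{k_2(\sqrt n+\sum_{j=1}^{k_1}\xdeg g_{1,j})^2+k_2\sum_{j=1}^{k_2}\degTop_{k_1+j}\xdeg g_{2,j}}$. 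Adding these yields the stated bound. For the final claim, \cref{lem:balanced} applied to $(\bar\pts,\degSeq_1,\gPols_1)$ and to $(\pts,\degSeq_2,\gPols_2)$ bounds $\sum_{i=1}^{k_1}\degTop_i\xdeg g_{1,i}$ by $\bigO{nk_1}$ and $\sum_{j=1}^{k_2}\degTop_{k_1+j}\xdeg g_{2,j}$ by $\bigO{nk_2}$; moreover $\degSeq_1$-balancedness with \cref{eqn:eta_diff} gives $\xdeg g_{1,i}\le3n/\degTop_i+1$, and since the $\degTop_i$ of $\degSeq_1$ decrease geometrically from $n$ down to $\floor{\sqrt n}$, this makes $\sum_{j=1}^{k_1}\xdeg g_{1,j}\in\softO{\sqrt n}$ and hence $(\sqrt n+\sum_j\xdeg g_{1,j})^2\in\softO n$; with $k_1,k_2\in\bigO{\log n}$, every term is $\softO n$.

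The main obstacle is precisely this last piece of bookkeeping: one must propagate the bounds on $\xdeg r$, $\ydeg r$ and $\xdeg s_1$ carefully through the shear and the two reshapings, and, because $\sum_j\xdeg g_{1,j}$ enters squared, exploit the geometric decay of the $\degTop_i$ of $\degSeq_1$ to see that this sum is only $\softO{\sqrt n}$ rather than $\softO n$ in the balanced case. Everything else is a direct assembly of \cref{thm:Reshape}, \cref{thm:Shift}, \cref{lem:exist_reshapers}, \cref{lem:balanced} and the folklore univariate interpolation bound.
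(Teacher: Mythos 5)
Your proposal is correct and follows essentially the same route as the paper's proof: trace the interpolation values and degree bounds through the univariate interpolation, the two calls to \Reshape{} and the shear, then assemble the costs from \cref{thm:Reshape} and \cref{thm:Shift} and invoke \cref{lem:balanced} for the balanced case. You additionally spell out some details the paper leaves implicit (distinctness of the $\bar\beta_i$, the variable swap needed to apply \cref{algo:Shift}, and the bound $\sum_j\xdeg g_{1,j}\in\softO{\sqrt n}$ that makes the squared term $\softO{n}$), all of which check out.
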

\begin{proof}
  First note that a reshaping sequence of length $O(\log n)$ and satisfying the preinput constraints exists, due to \cref{cor:reshapers_constrained} and the assumption $d \geq \valX(\pts)$.
  For correctness, observe that all points in $\bar{\pts}$ have pairwise distinct $y$-coordinates, so computing $u$ makes sense.
  Viewing $u$ as an element of $\ext[x,y]$ with $\xdeg u = 0$, we have $u(\alpha_i,\bar{\beta}_i) = \val_i$.
  By \cref{thm:Reshape} then $r$ has the same evaluations and $\ydeg r < \floor{\sqrt{n}}$ and $\xdeg r \leq \sum_{i=1}^{k_1} \xdeg g_{1,i}$.

  Then, in both cases $\valY(\pts)=1$ and $\valY(\pts)>1$, we have
  \[
    \val_i = r(\alpha_i,\bar\beta_i) = s(\alpha_i,\beta_i) = s_1(\alpha_i,\beta_i) + \theta s_2(\alpha_i,\beta_i)
  \]
  for $i = 1,\dots,n$. Since $s_1,s_2 \in \field[x,y]$ and all $\val_i$'s are in $\field$, we get $s_2(\alpha_i,\beta_i) = 0$ and $s_1(\alpha_i,\beta_i) = \gamma_i$ for $i = 1,\dots,n$.
  We also then have that $\ydeg s_1 \leq \ydeg s < \floor{\sqrt{n}}$ and
  \[
    \textstyle \xdeg s_1 \leq \xdeg s \leq \ydeg r + \xdeg r \leq \floor{\sqrt{n}} + \sum_{j = 1}^{k_1} \xdeg g_{1,j} \ .
  \]
  Thus, by \cref{thm:Reshape} again, the output \(f\) is such that $f(\alpha_i,\beta_i) = \val_i$ for $i = 1, \dots, n$, and $\ydeg f < d$, and
  \[
    \textstyle \xdeg f \leq \floor{\sqrt{n}} + \sum_{j=1}^{k_1} \xdeg g_{1,j} + \sum_{j=1}^{k_2} \xdeg g_{2,j} \ .
  \]
  The complexity bound gathers the calls to \cref{algo:Reshape,algo:Shift},
  and the relaxed cost assuming balancedness is due to \cref{lem:balanced}.
\end{proof}

\section{Modular Composition}
\label{sec:modcomp}

We now turn to the following modular composition problem: given $M, A \in \field[x]$ with $n := \xdeg M > \xdeg A$, and $f \in \field[x,y]$, compute
\begin{equation}
  \label{eqn:mod_comp}
  f(x, A(x)) \rem M(x) \in \field[x] \ .
\end{equation}
We consider the variant of the problem where $M$ and $A$ are available for precomputation.
Computing \eqref{eqn:mod_comp} is tantamount to computing the unique element of $(f + I) \cap \field[x]$ of degree less than $n$,
for the ideal $I = \ideal{M, y-A} \subseteq \field[x,y]$.
One can thus see this as a reshaping task: given $f$ of some $y$-degree, reshape it to a polynomial of $y$-degree $0$ while keeping it fixed modulo $I$: this is formalised as \cref{algo:ModComp}.

\begin{algorithm}
  \caption{$\ModComp_{d, \degSeq, M, A}(f)$} \label{algo:ModComp}
  \begin{algorithmic}[1]
    \Preinput $d \in \ZZ_{> 0}$; a $(d, 1)$-reshaping sequence $\degSeq$;
            polynomials $M, A \in \field[x]$ with $n := \xdeg M > \xdeg A$.
    \Precomputation
    \begin{algsubstates}
      \State $\gPols \assign$ $\degSeq$-reshaper for $\ideal{M, y-A}$
    \end{algsubstates}
    \Input $f \in \field[x,y]$ with $\ydeg f < d$.
    \Output $f(x, A) \rem M \in \field[x]$.
    \State $\hat f \assign \Reshape(f,\degSeq,\gPols) \in \field[x]$
    \State \Return $\hat f \rem M$ \Comment{univariate division with remainder}
  \end{algorithmic}
\end{algorithm}

Like for point sets above, if $\degSeq = (\degTop_i)_{i=0}^k$ is a reshaping sequence, we say that $I = \ideal{M, y-A}$ is $\degSeq$-balanced if there exists an $\degSeq$-reshaper $\gPols = (g_i)_{i=1}^k$ for $I$ such that $\xdeg g_i \leq \floor{\frac n {2\degTop_i - \degTop_{i-1}+1}} + 1$.

\begin{theorem}
  \cref{algo:ModComp} is correct.
  If $\ideal{M, y-A}$ is $\degSeq$-balanced and $\degSeq$ has length in $\bigO{\log(n)}$, the complexity is $\softO{\xdeg f \ydeg f + n}$.
\end{theorem}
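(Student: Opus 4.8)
For correctness, the plan is to read off the output guarantee of \cref{algo:Reshape}: its result $\hat f$ lies in $f + I$ with $I = \ideal{M, y-A}$ and has $\ydeg \hat f < \degTop_k = 1$ (as $\degSeq$ is a $(d,1)$-reshaping sequence), so $\hat f \in \field[x]$. The precondition $\ydeg f < d = \degTop_0$ of \cref{algo:Reshape} is exactly the input hypothesis, and a $\degSeq$-reshaper for $I$ always exists: since $y - A \in I$, one may take $g_i = y^{\degTop_i} - (A^{\degTop_i} \rem M)$, whose non-leading part lies in $\field[x]$, which is admissible because $2\degTop_i - \degTop_{i-1} \ge 0$ for any reshaping sequence. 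Now write $\hat f - f = pM + q(y-A)$ with $p,q \in \field[x,y]$. Applying the evaluation homomorphism $\field[x,y] \to \field[x]$, $y \mapsto A(x)$, kills the $q(y-A)$ term, so $\hat f(x,A(x)) - f(x,A(x)) = p(x,A(x))\,M(x)$, i.e. $\hat f(x,A(x)) \equiv f(x,A(x)) \pmod{M}$. Since $\ydeg \hat f = 0$ we have $\hat f(x,A(x)) = \hat f(x)$, whence $\hat f \rem M = f(x,A) \rem M$, which is the claimed output.

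For the complexity, the online phase consists of one call to \cref{algo:Reshape} followed by one univariate division with remainder. By \cref{thm:Reshape}, the reshaping costs $\softO{k\,\ydeg f\,\xdeg f + k\sum_{i=i_0}^k \degTop_i\,\xdeg g_i}$ where $\degTop_{i_0} \le \ydeg f < d$; the balancedness hypothesis and \cref{lem:balanced} bound $\sum_{i=i_0}^k \degTop_i\,\xdeg g_i \le (3n+\degTop_{i_0})k$, so with $k \in \bigO{\log n}$ this is $\softO{\xdeg f\,\ydeg f + n}$ (the convention $\xdeg f \ge 1$ absorbs the residual $\ydeg f$ term). For the division, \cref{thm:Reshape} gives $\xdeg \hat f \le \xdeg f + \sum_{i=1}^k \xdeg g_i$; balancedness together with \cref{eqn:eta_diff} (which yields $2\degTop_i - \degTop_{i-1} + 1 \ge \degTop_i/3$, hence $\xdeg g_i \le 3n/\degTop_i + 1$) and the fact that $\degTop$ strictly decreases to $\degTop_k = 1$ give $\sum_{i=1}^k \xdeg g_i \in \bigO{n\log n} \subseteq \softO{n}$. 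Thus the division costs $\bigO{\Mult(\xdeg f + n)} \subseteq \softO{\xdeg f + n}$, and adding the two stages proves the bound.

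The one genuinely substantive step is the correctness argument: passing from membership in $\ideal{M, y-A}$ to a congruence modulo $M$ via the substitution $y \mapsto A$, and observing that the $y$-degree-$0$ reshaping target is precisely what collapses $\hat f(x,A(x))$ back to the univariate $\hat f(x)$. Everything else is bookkeeping already encapsulated in \cref{thm:Reshape,lem:balanced}; the only point to watch is that although the per-reshaper $x$-degree estimate degrades when $\degTop_i$ is small (so $\xdeg g_i$ may reach $\Theta(n)$ for the last few $i$), there are only $\bigO{\log n}$ reshapers, so the accumulated $x$-degree of $\hat f$, and hence the cost of the final division, stays within $\softO{\xdeg f + n}$.
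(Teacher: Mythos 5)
Your proposal is correct and follows exactly the route the paper intends (the paper states this theorem without an explicit proof, relying on the preceding remark that $f(x,A)\rem M$ is the reduction of the unique element of $(f+I)\cap\field[x]$, together with \cref{thm:Reshape} and \cref{lem:balanced}). Your substitution argument $y\mapsto A$ for turning ideal membership into a congruence modulo $M$, and your bookkeeping for the reshaping cost and the final division, supply precisely the omitted details.
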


\section{Precomputing Reshapers}
\label{sec:precomputing}

\subsection{Reshapers for general ideals}

Here we describe \cref{algo:Precompute} for precomputing reshapers for any zero-dimensional ideal $I \subseteq \field[x,y]$, given a $\ordLex$-\Grobner basis of~$I$.
It operates through the $\field[x]$-module $I_\delta := \{ f \in I \mid \ydeg f < \delta\}$, so we first expound the relation between this and $I$ as a corollary of Lazard's structure theorem on bivariate \(\ordLex\)-\Grobner bases \cite{lazard_ideal_1985}.

\begin{corollary}
  \label{cor:module_ideal}
  Let $G = \{ b_0,\ldots,b_s \} \subset \field[x,y]$ be a minimal \(\ordLex\)-\Grobner basis defining an ideal $I = \ideal{G}$.
  For $\delta \in \ZZpos$, let $I_\delta = \{ f \in I \mid \ydeg f < \delta \}$, let
  $\hat s = \max\{ i \mid \ydeg b_i < \delta, \ 0 \leq i \leq s \}$, let $d_i = \ydeg b_i$ for $0 \le i \le \hat s$ and $d_{\hat s+1} = \delta$.
  Then $I_\delta$ is a $\field[x]$-submodule of \(\field[x,y]_{\ydeg < \delta}\) which is free of rank $\delta - d_0$ and admits the basis
  \(
    \{ y^j b_i  \mid  0 \le j < d_{i+1} - d_i, 0 \le i \le \hat s\}.
  \)
\end{corollary}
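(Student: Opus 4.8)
The plan is to invoke Lazard's structure theorem for bivariate $\ordLex$-Gr\"obner bases and then read off the module structure by a direct divisibility/support analysis. First I would recall the relevant consequence of Lazard's theorem \cite{lazard_ideal_1985}: for a minimal $\ordLex$-Gr\"obner basis $G = \{b_0,\ldots,b_s\}$ ordered so that $\ydeg b_0 < \ydeg b_1 < \cdots < \ydeg b_s$, the leading term of $b_i$ has the form $c_i\, x^{e_i} y^{d_i}$ with $d_0 < d_1 < \cdots < d_s$ and $e_0 > e_1 > \cdots > e_s \ge 0$ (the $x$-degrees of the leading terms are strictly decreasing). Moreover $b_0 \in \field[x]$, i.e.\ $d_0$ is the smallest $y$-degree occurring in $I$, and every $f \in I$ reduces to $0$ modulo $G$, which means its $\ordLex$-leading term is divisible by some $\LTlex(b_i)$.

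The core of the argument is then the following. Fix $\delta$ and let $\hat s$, $d_i$ ($0 \le i \le \hat s+1$) be as in the statement. Set $B = \{ y^j b_i \mid 0 \le i \le \hat s,\ 0 \le j < d_{i+1}-d_i \}$. I would first check $B \subseteq I_\delta$: each $y^j b_i$ lies in $I$ trivially, and $\ydeg(y^j b_i) = j + d_i \le (d_{i+1}-d_i-1) + d_i = d_{i+1}-1 < \delta$ since $d_{i+1} \le \delta$; here one uses that the $b_i$ are $y$-reduced against each other so that $\ydeg b_i = d_i$ exactly. Next, $\field[x]$-linear independence of $B$: because the leading $x$-exponents $e_0 > \cdots > e_{\hat s}$ are distinct and the $y$-degrees within each block $\{y^j b_i\}_j$ are $d_i, d_i+1, \ldots, d_{i+1}-1$ and these blocks tile the integer interval $[d_0, \delta)$ without overlap, the $\ordLex$-leading terms of the elements of $B$ are pairwise distinct; a nontrivial $\field[x]$-combination would have a well-defined nonzero $\ordLex$-leading term, contradiction. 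Finally, $B$ spans $I_\delta$ over $\field[x]$: given $0 \ne f \in I_\delta$, its leading term $\LTlex(f)$ is divisible by some $\LTlex(b_i)$; since $\ydeg f < \delta$, the $y$-degree $m$ of $\LTlex(f)$ satisfies $d_i \le m < \delta$, so $m$ lies in exactly one block, say $d_\ell \le m < d_{\ell+1}$ with $\ell \le \hat s$, and one can cancel $\LTlex(f)$ by subtracting an appropriate $\field[x]$-multiple of $y^{m-d_\ell} b_\ell \in B$ (the $x$-degree of the multiplier is nonnegative because $\LTlex(b_\ell)$ has the minimal $x$-exponent $e_\ell$ among all $b_i$ with $d_i \le m$). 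This strictly decreases the leading term in the well-order $\ordLex$; iterating terminates, expressing $f$ as a $\field[x]$-combination of $B$. This simultaneously shows $I_\delta$ is a submodule of $\field[x,y]_{\ydeg<\delta}$, free with basis $B$, and its rank is $|B| = \sum_{i=0}^{\hat s}(d_{i+1}-d_i) = d_{\hat s+1} - d_0 = \delta - d_0$.

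The main obstacle I anticipate is being careful about the exact content of Lazard's theorem that is being quoted — specifically the claim that the $x$-exponents of the leading terms are strictly decreasing as the $y$-exponents increase, and that $b_0$ is the monic (or at least $x$-) generator of $I \cap \field[x]$ up to the reduced form. This is what makes the spanning step work (it guarantees the divisibility $\LTlex(b_\ell) \mid \LTlex(f)$ has nonnegative $x$-exponent on the quotient), so I would state it precisely as a lemma citing \cite{lazard_ideal_1985} rather than leave it implicit. Everything else is a routine leading-term induction of the kind standard in Gr\"obner basis theory; I would keep it brief and emphasize only the block-tiling bookkeeping $\bigcup_{i=0}^{\hat s}[d_i, d_{i+1}) = [d_0,\delta)$, which is where the precise definitions of $\hat s$ and $d_{\hat s+1}=\delta$ earn their keep.
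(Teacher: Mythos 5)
Your proof is correct, and for the only non-trivial part (spanning) it takes a genuinely different route from the paper. The paper runs the multivariate division algorithm by $G$ with a specific tie-breaking rule (always reduce by the largest applicable index), which forces no term of $q_i b_i$ to be divisible by $\LTlex(b_{i+1})$, and then invokes the divisibility chain $\LCy(b_{i+1}) \mid \LCy(b_i)$ from Lazard's theorem (\cref{cor:lazard}) to conclude $\ydeg q_i < d_{i+1}-d_i$, i.e.\ that each $q_i b_i$ already lies in the $\field[x]$-span of the claimed basis. You instead reduce $f$ directly against the set $\{y^j b_i\}$ by a leading-term induction: locate the block $d_\ell \le \ydeg f < d_{\ell+1}$, and use the strict decrease of the $x$-exponents $e_0 > \cdots > e_s$ of the $\LTlex(b_i)$ to see that $\LTlex(y^{\ydeg f - d_\ell} b_\ell)$ divides $\LTlex(f)$; the block-tiling of $[d_0,\delta)$ guarantees the multiplier is an element of the proposed basis times a monomial in $x$. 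This is sound (and termination by well-ordering of $\ordLex$ is fine), and it is arguably more elementary: you only need the staircase shape $e_0 > \cdots > e_s$, which already follows from minimality of $G$, whereas the paper's argument leans on the stronger leading-coefficient divisibility from Lazard; what the paper's route buys is brevity, since it reuses the standard division algorithm and the appendix corollary wholesale instead of redoing a reduction by hand. Two small blemishes, neither affecting validity: the claim ``$b_0 \in \field[x]$'' is not part of the hypotheses and is false for general $I$ (it holds only when $I \cap \field[x] \neq 0$, e.g.\ in the zero-dimensional case used later), but you never use it; and the remark that one needs the $b_i$ to be ``$y$-reduced against each other'' to have $\ydeg b_i = d_i$ is superfluous, since under $\ordLex$ with $x \ordLex y$ the $y$-exponent of $\LTlex(b_i)$ always equals $\ydeg b_i$.
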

A proof is given in appendix.
We will use the following $\field[x]$-module isomorphism which converts between bivariate polynomials of bounded $y$-degree and vectors over $\field[x]$:
for any $\delta \in \ZZ_{>0}$,
\[
  \FxIso_\delta: f = \textstyle\sum_{j=0}^{\delta-1} f_j(x)y^j \in \field[x,y] \mapsto [f_0,\dots,f_{\delta-1}] \in \field[x]^{1 \times \delta} \ .
\]
If $I$ is zero-dimensional then in \cref{cor:module_ideal} we have $d_0 = 0$ and $I_\delta$ has rank $\delta$.
Any basis $B$ of $I_\delta$ can be represented as a nonsingular matrix $M_B \in \field[x]^{\delta \times \delta}$ whose rows are $\FxIso_\delta(B)$.
Then, $\degdet(I_\delta) := \deg\det(M_B)$ does not depend on the choice of $B$ since all bases of \(I_\delta\) have the same determinant up to scalar multiplication.

In this section, we use the \emph{Popov form} \cite{popov_properties_1970}, which can be defined for any matrix and with ``shifts''; here we only need the unshifted, nonsingular square case.
\begin{definition}
  For any row vector $\vec{v} \in \field[x]^{1 \times \delta}$ its \emph{row degree} denoted $\deg \vec{v}$ is the maximal degree among its entries.
  The \emph{pivot} of $\vec{v}$ is the rightmost entry of $\vec{v}$ with degree $\deg \vec{v}$.
  A nonsingular matrix $P = [ p_{ij} ] \in \field[x]^{\delta \times \delta}$ is in \emph{Popov form} if $p_{ii}$ is the pivot of the $i$th row, is monic, and $\deg p_{ii} > \deg p_{ji}$ for any $j \neq i$.
\end{definition}

For a (free) $\field[x]$-submodule $\M \subset \field[x]^{1 \times \delta}$ of rank \(\delta\), we identify a basis of $\M$ as the rows of a nonsingular matrix in $\field[x]^{\delta \times \delta}$.
Any such $\M$ has a unique basis $P \in \field[x]^{\delta \times \delta}$ in Popov form, which we call \emph{the Popov basis of $\M$}.
It has minimal row degrees in the following sense:
if $N \in \field[x]^{\delta \times \delta}$ is another basis of $\M$, there is a bijection $\psi$ from the rows of $P$ to the rows of $N$ such that $\deg \vec{p} \leq \deg \psi(\vec{p})$ for any row $\vec{p}$ of $P$.
The Popov basis satisfies $\degdet(\M) = \degdet(P) = |\!\cdeg(P)|$, using the following notation:
the sum of the entries of a tuple $\vec{t} \in \ZZ_{\geq 0}^\delta$ is denoted $|\vec{t}|$;
the column degree of a matrix $B \in \field[x]^{\delta \times \delta}$ is $\cdeg(B) = (d_i)_{i=1}^{\delta} \in \ZZ_{\geq 0}^{\delta}$, with $d_i$ the largest degree in the $i$th column of $B$ (for a zero column, $d_i =0$).

The next result allows us to compute Popov forms efficiently.

\begin{proposition}[\cite{neiger_computing_2017}]
  \label{prop:cost-popov}
  There is an algorithm which inputs a nonsingular matrix $B \in \field[x]^{\delta \times \delta}$ and outputs the Popov basis of the \(\field[x]\)-row space of $B$ using $\softO{\delta^{\expmm-1} |\!\cdeg(B)|}$ operations in $\field$, assuming that $\delta \in \bigO{|\!\cdeg(B)|}$.
\end{proposition}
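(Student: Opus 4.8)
The plan is to reduce the computation of the Popov basis of the $\field[x]$-row space of a nonsingular $B \in \field[x]^{\delta \times \delta}$ to a kernel-basis (or approximant-basis) computation of the size permitted by the cost budget $\softO{\delta^{\expmm-1}|\!\cdeg(B)|}$, and then invoke a known fast algorithm for that subtask. Concretely, I would recall from the polynomial-matrix literature---notably \cite{neiger_computing_2017}, but also the line of work on minimal approximant bases and fast row reduction---that, given a degree parameter $d \in \bigO{|\!\cdeg(B)|/\delta}$ bounding the column degrees of $B$, one can compute the Popov form of the $\field[x]$-row space of $B$ in $\softO{\delta^{\expmm-1} d}$ field operations, i.e.\ $\softO{\delta^{\expmm} d}$ up to the normalisation $\delta d \in \bigO{|\!\cdeg(B)|}$. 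The first step is therefore simply to state that $B$ has $|\!\cdeg(B)| = \sum_i d_i$ with $d_i$ the $i$-th column degree, set $D = |\!\cdeg(B)|$, and note the hypothesis $\delta \in \bigO{D}$ which ensures the "balanced" regime where the cited bound applies.

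The key steps, in order, would be: (1) balance the input by an unimodular column scaling or, more robustly, by the standard trick of embedding $B$ into a larger matrix so that its column degrees become (nearly) uniform of degree $\lceil D/\delta \rceil$, paying only $\softO{D}$ to set this up and noting that the row space is unchanged; (2) compute a minimal approximant basis (or, equivalently, run the kernel-basis / row-reduction routine of \cite{neiger_computing_2017}) for the appropriate linearised system at order $\bigO{D/\delta}$, which costs $\softO{\delta^{\expmm-1} D}$ by the cited complexity result for balanced polynomial matrices; (3) from the resulting (shifted) reduced basis, extract and normalise to Popov form---this is a unimodular triangularisation plus a row-degree normalisation, done within $\softO{\delta^{\expmm-1} D}$ by the normalisation procedure of \cite{neiger_computing_2017}. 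Finally I would observe that $\degdet$ of the row space equals $|\!\cdeg(P)|$ for the Popov basis $P$, so the output size is $\bigO{D}$ as well, and the total cost telescopes to the claimed $\softO{\delta^{\expmm-1}|\!\cdeg(B)|}$.

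I expect the main obstacle to be the reduction in step~(1)-(2): matching the \emph{unbalanced} column degrees of $B$ (some columns may have very large degree while $\delta$ is small relative to $D$) to the \emph{balanced} input format required by the fast algorithm, without blowing the $\delta^{\expmm-1} D$ budget. The clean way around this is to invoke directly the version of the main theorem of \cite{neiger_computing_2017} that is already stated for arbitrary nonsingular input with the hypothesis $\delta \in \bigO{|\!\cdeg(B)|}$---in which case the proof is essentially a citation together with the remark that our notion of Popov basis coincides with theirs and that $\degdet$ is basis-independent (already established in the text preceding the proposition). If one instead only has the uniform-degree version available, the careful part is verifying that the partial-linearisation / column-splitting reduction preserves the row space and keeps both the dimension $\bigO{\delta}$ and the total degree $\bigO{D}$ under control; this is routine but is where all the bookkeeping lives. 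I would therefore present the proof as: cite the balanced-input complexity bound of \cite{neiger_computing_2017}, note the equivalence of definitions, and record that $\degdet(\M)=|\!\cdeg(P)|$ guarantees output size $\bigO{|\!\cdeg(B)|}$, closing the argument.
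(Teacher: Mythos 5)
The paper gives no proof of this proposition: it is imported verbatim as a known result of \cite{neiger_computing_2017}, and your proposal correctly converges on exactly that — citing the version of their theorem stated for arbitrary nonsingular input under the hypothesis $\delta \in \bigO{|\!\cdeg(B)|}$. Your sketch of the underlying machinery (partial linearisation to balance column degrees, approximant/kernel-basis computation, Popov normalisation) is a faithful account of how the cited result is itself established, but it is not needed here; the citation closes the argument just as it does in the paper.
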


Since Popov forms are ``column reduced'', they are well suited for matrix division with remainder \cite[Thm.\,6.3-15]{kailath_linear_1980}:
if $P \in \field[x]^{\delta \times \delta}$ is the Popov basis of $\M$, then for any $\vec{v} \in \field[x]^{1 \times \delta}$ there is a unique $\vec{u} \in \vec{v} + \M$ such that $\cdeg(\vec{u}) < \cdeg(P)$ entrywise; we denote $\vec{u} = \vec{v} \rem P$.
Furthermore, $\vec{u}$ has minimal row degree among all vectors in $\vec v + \M$.
Such remainders can be computed efficiently:

\begin{proposition}[\cite{neiger_computing_2017}]
  \label{prop:cost-rem-P}
  There is an algorithm which inputs a Popov form $P \in \field[x]^{\delta \times \delta}$ and $\vec{v} \in \field[x]^{1 \times \delta}$ such that $\cdeg(\vec{v}) < \cdeg(P) + (\degdet(P),\ldots,\degdet(P))$ entrywise, and outputs $\vec{v} \rem P$ using $\softO{\delta^{\expmm-1}\degdet(P)}$ operations in $\field$, assuming that $\delta \in \bigO{\degdet(P)}$.
\end{proposition}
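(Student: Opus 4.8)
The plan is to realise $\vec v\rem P$ as the remainder $\vec r$ in a polynomial-matrix division $\vec v=\vec qP+\vec r$ with $\cdeg(\vec r)<\cdeg(P)$ (uniqueness of $\vec r$ coming from $P$ being column reduced, as it is in Popov form), and to compute it at the cost of logarithmically many multiplications of $\bigO\delta\times\bigO\delta$ polynomial matrices of degree $\bigO{\bar d}$, where $\bar d:=\lceil\degdet(P)/\delta\rceil$ is the average column degree of $P$. That budget is $\softO{\delta^{\expmm}\bar d}=\softO{\delta^{\expmm-1}\degdet(P)}$, matching the claim.

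The naive route is too slow: column reducedness gives $\deg\vec q<\degdet(P)$, but it does not make $\vec q$ balanced, so recovering $\vec q$ from the column reversal of $P$ by a single Newton iteration would need precision $\Theta(\degdet(P))$ and cost a factor $\delta$ too much. Two by-now-standard ingredients repair this. First, \emph{partial linearization} of $P$: since $P$ is in Popov form, its column degrees equal its diagonal degrees $\delta_1,\dots,\delta_\delta$ with $\sum_j\delta_j=\degdet(P)$; splitting column $j$ into $\lceil(\delta_j+1)/\bar d\rceil$ pieces of degree $\bigO{\bar d}$ and completing with identity-type blocks produces $\tilde P\in\field[x]^{m\times m}$ with $m\in\bigO\delta$, all column degrees $\bigO{\bar d}$, still column reduced, with $\degdet(\tilde P)\in\bigO{\degdet(P)}$, and such that $\vec v\rem P$ can be read off the first $\delta$ coordinates of $\tilde v\rem\tilde P$ for a suitable lift $\tilde v\in\field[x]^{1\times m}$ of $\vec v$. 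Second, \emph{folding the quotient into a matrix}: the quotient $\tilde q$ for $\tilde v=\tilde q\tilde P+\tilde r$ still has degree $\bigO{\degdet(P)}$, but writing $\tilde q=\sum_t\tilde q_tx^{t\bar d}$ with $\deg\tilde q_t$ in $\bigO{\bar d}$ and stacking the $\bigO\delta$ blocks $\tilde q_t$ into a $\bigO\delta\times m$ matrix turns the division into a matrix identity in which all polynomials have degree $\bigO{\bar d}$ and all dimensions are $\bigO\delta$, so fast rectangular matrix multiplication applies throughout.

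Concretely I would: (1) build $\tilde P$ and the lift $\tilde v$, using the hypothesis $\cdeg(\vec v)<\cdeg(P)+(\degdet(P),\dots,\degdet(P))$ to bound $\cdeg(\tilde v)$ and hence the truncation orders needed below; (2) form the column reversal $\tilde P^{\mathrm{rev}}$, whose constant coefficient is the invertible leading column matrix of $\tilde P$, so that $\tilde P^{\mathrm{rev}}$ is invertible over the power series ring $\field[[x]]$; (3) recover the folded quotient matrix from the appropriate truncation of $\tilde v^{\mathrm{rev}}(\tilde P^{\mathrm{rev}})^{-1}$ by the high-order lifting of \cite{neiger_computing_2017} (building on techniques of Storjohann and of Giorgi--Jeannerod--Villard), which computes such a truncated solution through logarithmically many products of $\bigO\delta\times\bigO\delta$ matrices of degree $\bigO{\bar d}$ rather than by forming the far-too-large truncated inverse; (4) set $\tilde r=\tilde v-\tilde q\tilde P$ (one balanced multiplication after refolding) and project back to $\field[x]^{1\times\delta}$ to obtain $\vec v\rem P$. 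The hypothesis $\delta\in\bigO{\degdet(P)}$ is used to guarantee $\bar d\ge1$ and $m\in\bigO\delta$.

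The main obstacle is the degree bookkeeping in steps (1)--(3): one must check that the linearization genuinely \emph{intertwines} the two reduction maps (not merely the underlying $\field[x]$-modules), that the column-degree bound on $\vec v$ is inherited by $\tilde v$ so that truncation order $\bigO{\degdet(P)}$ really suffices, and that the folding of $\tilde q$ is compatible with the reversal so that the matrix identity one ends up solving has all degrees $\bigO{\bar d}$. Granting these, the total cost is logarithmically many balanced rectangular matrix multiplications, i.e.\ $\softO{\delta^{\expmm-1}\degdet(P)}$; since the linearization and high-order-lifting machinery is precisely what \cite{neiger_computing_2017} develops, the proof there is essentially the assembly of these components with the stated bounds.
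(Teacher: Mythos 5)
The paper does not actually prove this proposition: it is imported verbatim from \cite{neiger_computing_2017}, so there is no internal argument for me to compare your attempt against. Judged on its own, your sketch reconstructs the strategy that the cited work (building on Gupta--Sarkar--Storjohann--Valeriote partial linearization and Storjohann-style high-order lifting) really uses: replace $P$ by a column-reduced $\tilde P$ of dimension $\bigO{\delta}$ whose column degrees are all $\bigO{\degdet(P)/\delta}$, recover the (folded) quotient through column reversal and lifting with truncation order $\bigO{\degdet(P)}$, and finish with one balanced product to extract the remainder; the hypotheses $\cdeg(\vec{v}) < \cdeg(P) + (\degdet(P),\ldots,\degdet(P))$ and $\delta \in \bigO{\degdet(P)}$ enter exactly where you invoke them, to bound the quotient degree (hence the truncation order) and to make the average column degree at least $1$ with $m \in \bigO{\delta}$. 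The bookkeeping points you flag --- that the linearization is compatible with taking remainders and that the lifted vector inherits the degree bound --- are genuine proof obligations, but they are precisely the content of the machinery developed in \cite{neiger_computing_2017} and its sources; granting them, your cost accounting $\softO{\delta^{\expmm}\cdot\degdet(P)/\delta} = \softO{\delta^{\expmm-1}\degdet(P)}$ is correct, so as a justification at the level of the citation the plan is sound.
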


\begin{algorithm}[ht]
  \caption{$\Precompute(G, \eta, \delta)$}
  \label{algo:Precompute}
  \begin{algorithmic}[1]
    \Input A reduced \(\ordLex\)-\Grobner basis $G = \set{b_0,\dots,b_s} \subset \field[x,y]$, sorted by increasing $y$-degree, for a zero-dimensional ideal $I$ (hence $b_0 \in \field[x]$);
    $\degTop, \delta \in \ZZ_{>0}$ with $\delta < \degTop$.
    \Output If no polynomial in \(y^{\degTop}+I\) has $y$-degree \(< \delta\), $\Fail$;
      otherwise, $g = y^{\degTop} - \hat{g} \in I$ with $\ydeg \hat{g} < \delta$ and $\xdeg \hat{g}$ minimal.
    \State $R \assign y^{\degTop} \rem G$ \label{line:precompute_rem} \Ifline{$\ydeg R \geq \delta$}{\Return $\Fail$} \label{line:precompute_fail}
    \State $B_\delta \assign $ basis of $I_\delta = \{ f \in I \mid \ydeg f < \delta \}$ as in \cref{cor:module_ideal}
    \State $B \in \field[x]^{\delta \times \delta} \assign $ row-wise applying $\FxIso_\delta$ to elements of $B_\delta$
    \State $P \in \field[x]^{\delta \times \delta} \assign$ Popov basis of $I_\delta$ from the basis $B$ \label{line:precompute_popov}
    \State $\hat{g} \assign -\FxIso_{\delta}^{-1} (\FxIso_{\delta}(R) \rem P) \in \field[x,y]$ \label{line:precompute_ghat}
    \State \Return $g = y^{\degTop} - \hat{g} \in \field[x,y]$
   \end{algorithmic}
\end{algorithm}

\begin{theorem}
  \label{thm:Precompute}
  \cref{algo:Precompute} is correct.
  Assuming $\degTop \in \bigO{\degdet(I_\delta)}$, it costs $\softO{\delta^{\omega-1} \degdet(I_\delta) + \degTop s \xdeg b_0}$ operations in \(\field\).
\end{theorem}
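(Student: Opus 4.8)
The proof splits into a correctness argument and a cost analysis, following the structure of \cref{algo:Precompute} line by line.

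\textit{Correctness.} First I would establish that the polynomial $R = y^\degTop \rem G$ computed at \cref{line:precompute_rem} is the canonical representative of $y^\degTop + I$ with respect to the reduced \Grobner basis $G$, so that $y^\degTop - R \in I$ and no polynomial in $y^\degTop + I$ has smaller $\ordLex$-leading term; in particular, some element of $y^\degTop + I$ has $y$-degree $< \delta$ if and only if $\ydeg R < \delta$, which justifies the $\Fail$ branch at \cref{line:precompute_fail}. Next, assuming $\ydeg R < \delta$: by \cref{cor:module_ideal} the set $B_\delta$ is indeed an $\field[x]$-basis of $I_\delta$ (here we use that $I$ is zero-dimensional, so $d_0 = 0$ and the rank is $\delta$), hence $B$ is nonsingular and $P$ at \cref{line:precompute_popov} is the Popov basis of the row space $\FxIso_\delta(I_\delta)$. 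Now $g = y^\degTop - \hat g \in I$ iff $\hat g \equiv R \pmod{I_\delta}$ as polynomials of $y$-degree $< \delta$ (note $\ydeg R < \delta$ is exactly what makes $R \in \field[x,y]_{\ydeg<\delta}$, and $y^\degTop - R \in I$ with $\ydeg(y^\degTop - R) \le \degTop$ but the difference $R - \hat g$ must lie in $I_\delta$). Translating through the isomorphism $\FxIso_\delta$, the condition becomes $\FxIso_\delta(\hat g) \equiv \FxIso_\delta(R) \pmod{\text{row space of }P}$, and \cref{line:precompute_ghat} sets $\FxIso_\delta(\hat g) = \FxIso_\delta(R) \rem P$ (up to sign), which is the representative with $\cdeg < \cdeg(P)$ and, crucially, minimal row degree among all vectors in the coset. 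Since the row degree of $\FxIso_\delta(\hat g)$ equals $\xdeg \hat g$, this gives $\xdeg \hat g$ minimal, as claimed.

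\textit{Cost.} The three potentially expensive steps are: forming $R$ at \cref{line:precompute_rem}; computing the Popov basis at \cref{line:precompute_popov}; and computing the remainder at \cref{line:precompute_ghat}. For the Popov step, I would bound $|\!\cdeg(B)|$ in terms of $\degdet(I_\delta)$ and the degrees of the $b_i$: the matrix $B$ has rows $\FxIso_\delta(y^j b_i)$, so its column degrees are controlled by $\xdeg b_i$, and since $\deg\det(B) = \degdet(I_\delta)$ and $B$ is (after the obvious block/triangular structure from \cref{cor:module_ideal}) close to column-reduced, $|\!\cdeg(B)| \in \bigO{\degdet(I_\delta)}$; then \cref{prop:cost-popov} gives $\softO{\delta^{\omega-1}\degdet(I_\delta)}$ provided $\delta \in \bigO{\degdet(I_\delta)}$, which holds since $I_\delta$ has rank $\delta$ and each basis element of positive $x$-degree contributes, with the zero-dimensionality ensuring $\degdet(I_\delta) \ge \delta - 1$ or similar. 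For \cref{line:precompute_ghat}, I must check the hypothesis of \cref{prop:cost-rem-P}, namely $\cdeg(\FxIso_\delta(R)) < \cdeg(P) + (\degdet(P),\dots,\degdet(P))$: this follows because $R = y^\degTop \rem G$ has its $x$-degrees bounded in terms of $\degTop$ and the $\xdeg b_i$, and $\degdet(P) = \degdet(I_\delta)$, so with $\degTop \in \bigO{\degdet(I_\delta)}$ the entrywise bound is satisfied (possibly after padding $R$'s columns); then the cost is $\softO{\delta^{\omega-1}\degdet(I_\delta)}$. Finally, $R = y^\degTop \rem G$: reducing $y^\degTop$ modulo the reduced \Grobner basis $G$ costs $\softO{\degTop s \xdeg b_0}$ — repeatedly dividing by the $y$-monic elements of $G$ to bring the $y$-degree down costs $\bigO{\degTop}$ reduction steps, each involving $s$ polynomials whose $x$-degrees are dominated by $\xdeg b_0$ (since $b_0 \in \field[x]$ is the generator of largest $x$-degree). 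Summing the three contributions yields $\softO{\delta^{\omega-1}\degdet(I_\delta) + \degTop s\xdeg b_0}$.

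\textit{Main obstacle.} The subtle point I expect to be the real work is verifying the degree hypotheses of \cref{prop:cost-popov} and \cref{prop:cost-rem-P} — specifically, bounding $|\!\cdeg(B)|$ by $\bigO{\degdet(I_\delta)}$ and checking that $\FxIso_\delta(R)$ meets the input-degree constraint of the division algorithm. These require a careful look at the block structure of the basis $B_\delta$ from \cref{cor:module_ideal} (the rows $y^j b_i$ for $0 \le j < d_{i+1}-d_i$) and at how large the $x$-degree of $y^\degTop \rem G$ can get; the rest is bookkeeping. I would also need the side remark that $\degdet(I_\delta)$ is basis-independent (already noted in the text) to make the final cost statement meaningful.
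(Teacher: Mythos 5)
Your proposal is correct and follows essentially the same route as the paper: correctness via the observation that admissible $\hat g$ form the coset $R + I_\delta$ and that the Popov remainder is its minimal-row-degree element, and cost by applying \cref{prop:cost-popov,prop:cost-rem-P} plus a \Grobner-basis reduction of $y^\degTop$ for \cref{line:precompute_rem}. The two degree checks you flag as "the real work" are resolved in the paper more sharply than you sketch: reducedness of $G$ gives $\xdeg b_0 > \cdots > \xdeg b_s$, so $B$ is column reduced with $|\!\cdeg B| = \degdet(I_\delta)$ exactly, and $\xdeg R < \xdeg b_0 \le \degdet(P)$ because $R$ is fully reduced modulo $b_0 \in \field[x]$ (independently of $\degTop$); the hypothesis $\degTop \in \bigO{\degdet(I_\delta)}$ is what supplies $\delta \in \bigO{\degdet(I_\delta)}$ via $\delta < \degTop$, rather than any lower bound on $\degdet(I_\delta)$ from zero-dimensionality.
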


\begin{proof}
  Since $G$ is a \(\ordLex\)-\Grobner basis, if \(y^\degTop+I\) contains a polynomial of $y$-degree less than \(\delta\), then $\ydeg(y^\eta \rem G) \leq \delta$ and the algorithm does not fail at \cref{line:precompute_fail}.

  For correctness of the output, observe that $y^\eta - R \in I$ so satisfactory $g = y^\eta - \tilde g$ all have $\tilde g \in R + I_\delta$.
  Now, $\hat g$ of \cref{line:precompute_ghat} is clearly in $R + I_\delta$ since $P$ is the Popov basis of $I_\delta$, but also $\hat g$ has minimal $x$-degree in the coset $R + I_\delta$.
  Hence among all $g$ of the correct form, the algorithm returns that of minimal $x$-degree.

  For complexity, work is done in \cref{line:precompute_rem,line:precompute_popov,line:precompute_ghat}.
  Since $G$ is reduced, $\xdeg b_0 > \ldots > \xdeg b_s$.
  Therefore the diagonal entries in $B$ are dominant in their columns and $|\!\cdeg B| = \degdet(B) = \degdet(P) = \degdet(I_\delta)$.
  For \cref{line:precompute_rem}, we use the algorithm of \cite{van_der_hoeven_complexity_2015} with cost $\softO{\eta s \xdeg b_0}$, see \cref{lem:fast_rem}.
  \cref{line:precompute_popov} costs $\softO{\delta^{\omega-1}|\!\cdeg B|}$ by \cref{prop:cost-popov} and \cref{line:precompute_ghat} costs $\softO{\delta^{\omega-1}\degdet(P)}$ since $\xdeg R < \xdeg b_0 < \degdet(P)$.
\end{proof}

\subsection{Reshapers for the considered problems}
\label{ssec:precomp_appl}

We turn to obtaining the reduced $\ordLex$-\Grobner basis of $\VanIdeal(\pts)$.
We will consider the $\field[x]$-submodule $\VanIdeal_m(\pts) = \VanIdeal(\pts) \cap \field[x,y]_{\ydeg < m}$ which by \cref{lem:gb_vanideal,cor:module_ideal} is free and of rank $m$.
To obtain a \(\ordLex\)-\Grobner basis, our approach is to first compute the Hermite basis of $\VanIdeal_m(\pts)$.
This is the unique basis whose corresponding matrix $H \subset \field[x]^{m \times m}$ is lower triangular, with each diagonal entry monic and strictly dominating the degrees in its column.

\begin{lemma}
  \label{lem:gb_vanideal2}
  For any point set $\pts \subseteq \field^2$ and any \(m > \valX(\pts)\), we have \(\VanIdeal(\pts) = \ideal{\VanIdeal_m(\pts)}\) and \(\degdet(\VanIdeal_m(\pts)) = |\pts|\).
\end{lemma}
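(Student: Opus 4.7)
The plan is to treat the two assertions separately. For $\VanIdeal(\pts) = \ideal{\VanIdeal_m(\pts)}$, I would invoke \cref{lem:gb_vanideal}: the reduced $\ordLex$-Gr\"obner basis $\{g_1, \ldots, g_s\}$ of $\VanIdeal(\pts)$, sorted by $\ordLex$, satisfies $\ydeg g_s = \valX(\pts) < m$, so every $g_i$ has $y$-degree less than $m$ and thus lies in $\VanIdeal_m(\pts)$. The chain $\VanIdeal(\pts) = \ideal{g_1, \ldots, g_s} \subseteq \ideal{\VanIdeal_m(\pts)} \subseteq \VanIdeal(\pts)$ then gives the equality.

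For $\degdet(\VanIdeal_m(\pts)) = |\pts|$, the strategy is to identify $\degdet$ with the dimension of an explicit quotient $\field$-vector space, which is computed through the evaluation map. Concretely, consider $\mathrm{ev}: \field[x,y] \to \field^{|\pts|}$ sending $f \mapsto (f(\alpha_i, \beta_i))_{(\alpha_i,\beta_i) \in \pts}$. Its kernel is $\VanIdeal(\pts)$ by definition, and it is surjective since distinct points in $\field^2$ can always be separated by linear forms $(x-\alpha)$ or $(y-\beta)$, out of which products realise preimages of the standard basis vectors. Hence $\dim_{\field}(\field[x,y]/\VanIdeal(\pts)) = |\pts|$. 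I would next argue that the restriction of $\mathrm{ev}$ to $\field[x,y]_{\ydeg < m}$ remains surjective with kernel $\VanIdeal_m(\pts)$: the standard monomials of the $\ordLex$-Gr\"obner basis form a $\field$-basis of $\field[x,y]/\VanIdeal(\pts)$ and, by the first part, all have $y$-degree strictly less than $\valX(\pts) < m$, hence lie in $\field[x,y]_{\ydeg < m}$. This yields $\dim_\field(\field[x,y]_{\ydeg < m} / \VanIdeal_m(\pts)) = |\pts|$.

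Finally, I would transport this dimension to $\degdet$ through $\FxIso_m$. Since $\VanIdeal(\pts)$ is zero-dimensional and contains $\prod_{\alpha}(x - \alpha) \in \field[x]$ (with $\alpha$ ranging over the $x$-coordinates of $\pts$), \cref{cor:module_ideal} applies with $d_0 = 0$, so $\VanIdeal_m(\pts)$ has rank $m$ and $\FxIso_m(\VanIdeal_m(\pts))$ is a full-rank $\field[x]$-submodule of $\field[x]^{1 \times m}$. A standard Smith-normal-form (or equivalently, Popov basis) argument then gives $\dim_\field(\field[x]^{1 \times m} / \FxIso_m(\VanIdeal_m(\pts))) = \degdet(\VanIdeal_m(\pts))$, which combined with the previous step yields $|\pts|$. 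The main delicate point is the argument that standard monomials of the Gr\"obner basis all have $y$-degree less than $m$, bridging the counting in $\field[x,y]$ with counting in $\field[x,y]_{\ydeg<m}$; but this is exactly what \cref{lem:gb_vanideal} supplies.
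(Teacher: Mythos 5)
Your proof is correct and follows essentially the same route as the paper: the first claim via the $y$-degree bound on the reduced Gr\"obner basis from \cref{lem:gb_vanideal}, and the second by identifying $\dim_\field\big(\field[x,y]_{\ydeg<m}/\VanIdeal_m(\pts)\big)$ with both $|\pts|$ and $\degdet(\VanIdeal_m(\pts))$. The only difference is that you supply self-contained arguments for the two facts the paper merely quotes (dimension equals the number of points, via surjectivity of the evaluation map; dimension equals $\degdet$, via a Smith/Popov-form argument instead of citing \cite{neiger_computing_2017}), which is perfectly fine.
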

\begin{proof}
  By \cref{lem:gb_vanideal} the elements of the reduced \(\ordLex\)-Gr\"obner basis of \(\VanIdeal(\pts)\) have \(y\)-degree at most \(\valX(\pts)\), implying the first claim.
  Further, this means the quotient \(\field[x,y] / \VanIdeal(\pts)\) is isomorphic to the quotient of modules \(\field[x,y]_{\ydeg < m} / \VanIdeal_m(\pts)\).
  It is a basic property of zero-dimensional varieties that the $\field$-dimension of the former is the number of points in \(\pts\), which is hence also the $\field$-dimension of the latter.
  This dimension is \(\degdet(\VanIdeal_m(\pts))\) by \cite[Lem.\,2.3]{neiger_computing_2017}.
\end{proof}

\begin{proposition}
  \label{prop:compute_gb_vanideal}
  There is an algorithm which inputs \(\pts \subset \field^2\) and outputs the reduced $\ordLex$-\Grobner basis of $\VanIdeal(\pts)$ and has complexity \(\softO{\valX(\pts)^{\expmm-1} |\pts|}\).
\end{proposition}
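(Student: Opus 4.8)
The plan is to compute the reduced $\ordLex$-Gröbner basis of $\VanIdeal(\pts)$ by passing through the $\field[x]$-module $\VanIdeal_m(\pts)$ for $m = \valX(\pts)+1$, exactly as set up in \cref{lem:gb_vanideal2}: since $m > \valX(\pts)$ we have $\VanIdeal(\pts) = \ideal{\VanIdeal_m(\pts)}$, so it suffices to obtain a module basis of $\VanIdeal_m(\pts)$ and then extract from it the reduced Gröbner basis. The module $\VanIdeal_m(\pts)$ is free of rank $m$ and $\degdet(\VanIdeal_m(\pts)) = |\pts| = n$, so its bases are represented by nonsingular matrices in $\field[x]^{m\times m}$ of determinant degree $n$; this is a matrix of dimension $m = \valX(\pts)+1$ with ``total size'' $n$, which is what makes fast polynomial-matrix arithmetic applicable.

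The steps, in order, are as follows. First, build \emph{some} basis of $\VanIdeal_m(\pts)$ cheaply. For each fixed $x$-coordinate $\alpha$ among the (at most $n$) distinct first coordinates, the fibre conditions $f(\alpha,\beta)=0$ for the $\le \valX(\pts)$ values $\beta$ with $(\alpha,\beta)\in\pts$ are $\field$-linear on the coefficient vector $(f_0(\alpha),\dots,f_{m-1}(\alpha))$; via a vandermonde-type argument these fibrewise constraints, interpolated across all $x$-coordinates by fast univariate interpolation, yield a generating matrix $B\in\field[x]^{m\times m}$ for $\VanIdeal_m(\pts)$ with $|\!\cdeg(B)| \in \bigO{n}$ and entries of degree $\bigO{n/m}$ on average, computable in $\softO{\valX(\pts)^{\,?}\, n}$; one standard route is to take $B$ to be (the $\phi_m$-image of) the product/interleaving of the univariate vanishing polynomial $g_1\in\field[x]$ of the $x$-coordinates with the $y^j$, corrected fibrewise, or to use a Chinese-remainder/interpolation construction as in the cited module-interpolation literature. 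Second, apply \cref{prop:cost-popov} to $B$ to get the Popov basis $P$ of $\VanIdeal_m(\pts)$ in $\softO{m^{\expmm-1}|\!\cdeg(B)|} = \softO{\valX(\pts)^{\expmm-1} n}$ operations, using $\degdet(P) = n$ and $m \in \bigO{n}$. Third, convert the Popov basis into the Hermite basis $H$ (lower triangular, monic strictly-dominant diagonal) by a unimodular triangularisation; since both forms have determinant degree $n$ and dimension $m$, this is again a polynomial-matrix normal-form computation in $\softO{m^{\expmm-1} n}$. Fourth, read off the reduced $\ordLex$-Gröbner basis: by Lazard's structure theorem (\cref{cor:module_ideal}) the diagonal entries of $H$, together with the appropriate off-diagonal data, are precisely $\phi_m$-images of the minimal Gröbner basis elements $b_0,\dots,b_s$ (with $b_0 = g_1 \in \field[x]$ and $\ydeg b_i \le \valX(\pts)$); a final inter-reduction of the off-diagonal coefficients modulo the pivots, costing $\softO{\valX(\pts)\cdot n}$ by fast univariate division, produces the \emph{reduced} basis. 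Summing, every step is in $\softO{\valX(\pts)^{\expmm-1} n}$, which is the claimed bound since $n = |\pts|$.

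The main obstacle is Step~1: producing an initial basis matrix $B$ of $\VanIdeal_m(\pts)$ whose column-degree sum is $\bigO{n}$ (not $\bigO{mn}$) and doing so within the target cost, for a point set with arbitrary, possibly clustered, $x$-valency up to $m-1$. The delicate point is handling repeated $x$-coordinates: the fibrewise linear systems have varying rank (between $1$ and $\valX(\pts)$ conditions per $x$-value), so one cannot simply interpolate a single $m\times m$ matrix of uniform degree; one needs a structured construction — e.g.\ grouping $x$-coordinates by fibre size, or an iterative/Chinese-remainder assembly — that keeps the total degree proportional to $n$ rather than $m\cdot(\text{number of distinct }x\text{-values})$. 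Once $B$ has the right shape, the remaining normalisations (Popov, Hermite, reduction) are off-the-shelf via \cref{prop:cost-popov} and standard fast polynomial-matrix algorithms, and the genericity-free correctness follows from \cref{lem:gb_vanideal,lem:gb_vanideal2,cor:module_ideal}. The full construction of $B$ and the cost accounting for the clustered case is deferred to the appendix.
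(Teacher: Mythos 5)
Your high-level route (pass through $\VanIdeal_m(\pts)$ for $m=\valX(\pts)+1$, obtain its Hermite basis, and read off the reduced $\ordLex$-\Grobner basis via Lazard's structure theorem) matches the paper's, and your observation that the Hermite form's column-dominance makes the extracted minimal basis automatically reduced is exactly the paper's closing argument --- which also means your ``final inter-reduction'' step is unnecessary. But there is a genuine gap precisely at the point you flag as the main obstacle: you never construct the initial basis $B$ of $\VanIdeal_m(\pts)$, and Step~1 is the heart of the matter, not a detail that can be deferred. Moreover, your Step~3 (Popov $\to$ Hermite by ``a unimodular triangularisation ... again in $\softO{m^{\expmm-1}n}$'') is not supported by the tools at hand: \cref{prop:cost-popov} computes only the \emph{unshifted} Popov basis, whereas the Hermite basis is a \emph{shifted} Popov basis, so this conversion needs a shifted normal-form algorithm that you neither cite nor supply.

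The paper sidesteps both issues in one stroke: $\VanIdeal_m(\pts)$ is an interpolation module defined directly by the $n$ vanishing conditions, so one calls the minimal-interpolation-basis algorithm of \cite{jeannerod_fast_2016} (Thm.~1.5 there) with the shift $\vec s=(0,n,\ldots,(m-1)n)$; for this shift the $\vec s$-Popov basis \emph{is} the Hermite basis, and the algorithm takes the points themselves as input --- no preliminary basis $B$, no separate Popov step, no Popov-to-Hermite conversion --- in $\softO{m^{\expmm-1}n}$ operations. In particular the ``delicate point'' you defer to an appendix (assembling a generating matrix with column-degree sum $\bigO{n}$ from fibres of varying size) never arises, since the interpolation-basis algorithm handles arbitrary clustering of $x$-coordinates natively. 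As written, your proposal is incomplete on the one step that carries the difficulty, and rests on an unjustified normal-form conversion for another.
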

\begin{proof}
  Let $\VanIdeal = \VanIdeal(\pts)$, $\VanIdeal_m = \VanIdeal_m(\pts)$, and $m = \valX(\pts) + 1$.
  We first compute the Hermite basis $H$ of $\VanIdeal_m(\pts)$ in time $\softO{m^{\expmm-1} |\pts|}$ using (a special case of) \cite[Thm.\,1.5]{jeannerod_fast_2016}, in which taking $\vec s = (0,n,\ldots,(m-1)n)$ ensures that the $\vec s$-Popov basis $P$ of $\VanIdeal_m$ is the Hermite basis.

  Let $G = \{ g_0,\ldots,g_{m-1} \} \subset \field[x,y]$ be given as the $\FxIso_m^{-1}$-image of the rows of $H$.
  By \cref{lem:gb_vanideal2} and since $H$ is lower triangular, $G$ is a $\ordLex$-\Grobner basis of $\VanIdeal$ but not necessarily minimal.
  Construct $G' \subseteq G$ from \(G\) by excluding the elements $g \in G$ such that there is $g' \in G$ with $\ydeg g' < \ydeg g$ and $\xdeg(\LCy(g')) \leq \xdeg(\LCy(g))$, i.e.~\(\LTlex(g')\) divides \(\LTlex(g)\).
  This makes \(G'\) a minimal \(\ordLex\)-Gr\"obner basis of \(\VanIdeal\) \cite[Lem.\,3 of Chap.\,2 \S 7]{CoxLittleOShea2015}, and we claim it is the reduced one.
  Indeed, since $H$ is in Hermite form, the selection criteria for $G'$ ensures that for any $g \neq g'$ in $G'$ and any term $x^iy^j$ in $g'$, we have $i < \xdeg(\LTlex(g))$ or $j < \ydeg g$, and hence $G'$ is reduced.
  Obtaining $G'$ from $H$ costs no arithmetic operations.
\end{proof}

\begin{corollary}
  \label{cor:precompute_pts}
  Given a point set $\pts \subseteq \field^2$ of cardinality $n$ and a reshaping sequence $\degSeq=(\degTop_i)_{i=0}^k$ with $n \geq \degTop_k$ and satisfying the condition of \cref{lem:exist_reshapers}, then we can determine if $\pts$ is $\degSeq$-balanced and compute an $\degSeq$-reshaper $\gPols = (g_i)_{i=1}^k$ for $\pts$ where each element has minimal possible $x$-degree in complexity $\softO{k\degTop_0^{\omega-1} n + \degTop_0 \valX n k}$.
\end{corollary}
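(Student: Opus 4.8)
The plan is to chain together the two algorithmic ingredients developed in this section: first compute a \(\ordLex\)-Gr\"obner basis of \(\VanIdeal(\pts)\) via \cref{prop:compute_gb_vanideal}, and then call \cref{algo:Precompute} once for each of the \(k\) reshapers needed, with suitably chosen parameters \((\degTop_i,\delta_i)\). The balancedness test is essentially free: by \cref{def:balanced}, \(\pts\) is \(\degSeq\)-balanced iff the minimal-\(x\)-degree reshaper output by these calls satisfies \(\xdeg g_i \leq \floor{n/(2\degTop_i-\degTop_{i-1}+1)}+1\) for all \(i\), which one simply checks after the fact.

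The steps, in order, are: (1) run the algorithm of \cref{prop:compute_gb_vanideal} to obtain the reduced \(\ordLex\)-Gr\"obner basis \(G=\{b_0,\ldots,b_s\}\) of \(\VanIdeal(\pts)\), sorted by increasing \(y\)-degree, at cost \(\softO{\valX^{\expmm-1}n}\); note \(\xdeg b_0 \le n\) and \(s \le \valX(\pts)\) by \cref{lem:gb_vanideal}. (2) For \(i=1,\ldots,k\), set \(\delta_i = 2\degTop_i-\degTop_{i-1}+1\) and call \(\Precompute(G,\degTop_i,\delta_i)\); the condition of \cref{lem:exist_reshapers} guarantees \(\valX(\pts) \le \delta_i\), so by \cref{lem:gb_vanideal} the reduced Gr\"obner basis contains a polynomial with \(\ordLex\)-leading term \(y^{\valX(\pts)}\), hence \(\ydeg(y^{\degTop_i}\rem G) < \valX(\pts) \le \delta_i\) and the call does not return \(\Fail\); it returns \(g_i = y^{\degTop_i} - \hat g_i\) with \(\ydeg \hat g_i < \delta_i\), i.e.\ \(\ydeg \hat g_i \le 2\degTop_i - \degTop_{i-1}\) as required by \cref{def:reshaping_seq}, and with \(\xdeg \hat g_i\) minimal by \cref{thm:Precompute}. (3) Compare each \(\xdeg g_i\) against the balancedness bound to decide whether \(\pts\) is \(\degSeq\)-balanced; the collection \(\gPols=(g_i)_{i=1}^k\) is the desired \(\degSeq\)-reshaper with minimal \(x\)-degrees in any case.

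For the complexity, I would bound each call to \(\Precompute\) using \cref{thm:Precompute}. Here \(\degdet(I_{\delta_i}) = |\pts| = n\) for every \(i\) with \(\delta_i > \valX(\pts)\), by \cref{lem:gb_vanideal2} (and when \(\delta_i \le \valX(\pts)\) the module is smaller, so the cost only decreases); since \(\degTop_i \le \degTop_0 \le n\) we have \(\degTop_i \in \bigO{\degdet(I_{\delta_i})}\), and \(\delta_i \le \degTop_{i-1}+1 \le \degTop_0\). Thus the \(i\)th call costs \(\softO{\delta_i^{\expmm-1} n + \degTop_i s \xdeg b_0} \subseteq \softO{\degTop_0^{\expmm-1} n + \degTop_0 \valX n}\), using \(s \le \valX(\pts) \le \valX\), \(\xdeg b_0 \le n\), and \(\degTop_i \le \degTop_0\). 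Summing over the \(k\) calls and adding the \(\softO{\valX^{\expmm-1} n}\) for step~(1), which is dominated, gives the claimed bound \(\softO{k\degTop_0^{\expmm-1} n + \degTop_0 \valX n k}\).

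The main obstacle is verifying that the hypotheses of \cref{thm:Precompute} and of \cref{prop:compute_gb_vanideal} hold at each invocation — in particular that \(\degdet(I_{\delta_i}) = n\) regardless of how small \(\delta_i\) is (which requires \(\delta_i > \valX(\pts)\), only guaranteed when \(2\degTop_i - \degTop_{i-1} + 1 > \valX(\pts)\), i.e.\ exactly the condition of \cref{lem:exist_reshapers}), and that the side condition \(\degTop_i \in \bigO{\degdet(I_{\delta_i})}\) is met. Once these bookkeeping points are cleared, the cost estimate is a routine sum over the \(k\) iterations.
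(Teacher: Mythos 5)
Your proposal matches the paper's own proof essentially step for step: compute the reduced $\ordLex$-Gr\"obner basis via \cref{prop:compute_gb_vanideal}, then make $k$ calls to \cref{algo:Precompute} with $\delta_i = 2\degTop_i-\degTop_{i-1}+1$, invoke \cref{lem:gb_vanideal2} to get $\degdet(\VanIdeal_{\delta_i}(\pts))=n$, and sum the per-call costs from \cref{thm:Precompute}, with the balancedness test being a trivial degree check on the minimal-degree outputs. Your write-up is correct and in fact spells out a few bookkeeping details (non-failure, the match with \cref{def:reshaping_seq}, the hypotheses of \cref{thm:Precompute}) more explicitly than the paper does.
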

\begin{proof}
  By \cref{prop:compute_gb_vanideal}, computing a reduced \(\ordLex\)-\Grobner basis $G = (b_i)_{i=0}^\valX$ of $\VanIdeal(\pts)$ costs $\softO{\nu_x^{\expmm-1} n} \subset \softO{\degTop_0^{\expmm-1} n}$.
  We then run \cref{algo:Precompute} on input $\degTop = \degTop_i$ and $\delta_i = 2\degTop_i - \degTop_{i-1} + 1 > \valX$ for $i=1,\ldots,k$.
  \cref{lem:gb_vanideal2} ensures $\degdet(\VanIdeal_{\delta}(\pts)) = n$ for any $\delta > \valX$,
  thus the cost of each call to \cref{algo:Precompute} becomes $\softO{\degTop_0^{\omega-1} n + \degTop_0 \valX n}$.
\end{proof}

\begin{corollary}
  Given $M, A \in \field[x]$ with $n := \deg M > \deg A$ and a reshaping sequence $\degSeq=(\degTop_i)_{i=0}^k$ with $n \geq \degTop_k$, then we can determine if $I := \ideal{M, y - A}$ is $\degSeq$-balanced and compute an $\degSeq$-reshaper $\gPols = (g_i)_{i=1}^k$ for $\pts$ where each element has minimal possible $x$-degree in complexity $\softO{k \degTop_0^{\omega-1} n}$.
\end{corollary}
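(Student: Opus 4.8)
The plan is to follow the proof of \cref{cor:precompute_pts} almost verbatim, the only substantive simplification being that a reduced $\ordLex$-\Grobner basis of $I = \ideal{M, y - A}$ is available at essentially no cost. After making $M$ monic (cost $\bigO{n}$), I claim that $G = \{M, y - A\}$ is already this reduced \Grobner basis: its $\ordLex$-leading terms are $x^n$ and $y$, which are coprime, so Buchberger's criterion applies (the single $S$-polynomial reduces to $MA$ and then to $0$ modulo $G$); minimality is immediate, and reducedness holds because $M$ involves no $y$ while every term of $y - A$ other than $y$ has $x$-degree $\deg A < n$. Thus $G$ plays the role of the \Grobner basis produced by \cref{prop:compute_gb_vanideal}, but with $s = 1$, $\xdeg b_0 = n$, and no arithmetic spent — which is exactly what removes the $\softO{\valX^{\omega-1} n}$ term present in \cref{cor:precompute_pts}.

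Next I would note that the analogue of the $x$-valency is here $\ydeg(y - A) = 1$, and $1 \le 2\degTop_i - \degTop_{i-1} + 1$ for all $i$ (this is $2\degTop_i - \degTop_{i-1} \ge 0$, which holds for any reshaping sequence, as observed after \cref{eqn:eta_diff}); exactly as in the proof of \cref{lem:exist_reshapers}, using that $G$ contains the $y$-monic polynomial $y - A$ of $y$-degree $1$, an $\degSeq$-reshaper for $I$ always exists, so \cref{algo:Precompute} will never return $\Fail$ and no hypothesis on $\degSeq$ beyond being a reshaping sequence with $n \ge \degTop_k$ is needed. I then run \cref{algo:Precompute} on $(G, \degTop_i, \delta_i)$ with $\delta_i = 2\degTop_i - \degTop_{i-1} + 1$ for $i = 1, \ldots, k$ (note $1 \le \delta_i \le \degTop_i$); by \cref{thm:Precompute} the $i$-th call returns $g_i = y^{\degTop_i} - \hat g_i$ with $\ydeg \hat g_i < \delta_i$, i.e.~$\ydeg \hat g_i \le 2\degTop_i - \degTop_{i-1}$, and $\xdeg g_i$ minimal in its coset. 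Hence $\gPols = (g_i)_{i=1}^k$ is an $\degSeq$-reshaper for $I$ with each entry of minimal $x$-degree, and comparing each $\xdeg g_i$ with $\floor{n / (2\degTop_i - \degTop_{i-1} + 1)} + 1$ decides whether $I$ is $\degSeq$-balanced.

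For the complexity, the one quantity to pin down is $\degdet(I_{\delta_i})$, where $I_{\delta_i} = \{ f \in I \mid \ydeg f < \delta_i \}$ is free of rank $\delta_i$ by \cref{cor:module_ideal} (since $\ydeg b_0 = 0$). Reduction modulo $y - A \in I$ shows the natural map $\field[x,y]_{\ydeg < \delta_i} \to \field[x,y]/I$ is surjective with kernel $I_{\delta_i}$, so $\field[x,y]_{\ydeg < \delta_i}/I_{\delta_i} \cong \field[x,y]/I \cong \field[x]/\ideal{M}$ has $\field$-dimension $n$, which by \cite[Lem.\,2.3]{neiger_computing_2017} equals $\degdet(I_{\delta_i})$. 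Plugging $\degdet(I_{\delta_i}) = n$, $s = 1$ and $\xdeg b_0 = n$ into \cref{thm:Precompute} gives $\softO{\delta_i^{\omega-1} n + \degTop_i n}$ for the $i$-th call; since $\delta_i \le \degTop_i \le \degTop_0$ and $\omega \ge 2$, this is $\softO{\degTop_0^{\omega-1} n}$, and summing over the $k$ iterations yields the claimed bound. The only mildly delicate points are the corner case $\delta_i = \degTop_i$ (which occurs exactly when $\degTop_{i-1} = \degTop_i + 1$, and is handled by a trivial adjustment of the interface of \cref{algo:Precompute}, or by setting $g_i = y^{\degTop_i} - (y^{\degTop_i} \rem G)$ directly) and ensuring $\degTop_0 \in \bigO{n}$ so that the assumptions of \cref{thm:Precompute,prop:cost-popov,prop:cost-rem-P} are met; neither is a genuine obstacle. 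In short, there is no hard step here: all the content sits in the two observations — that $\{M, y - A\}$ is already the reduced \Grobner basis, and that $\degdet(I_{\delta_i})$ is always $n$ — after which \cref{thm:Precompute} does the rest.
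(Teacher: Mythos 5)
Your proof is correct and follows essentially the same route as the paper: run \cref{algo:Precompute} on the reduced basis $\{M, y-A\}$ with $s=1$, $\xdeg b_0 = n$ and $\degdet(I_{\delta_i}) = n$, then apply \cref{thm:Precompute}. The only cosmetic difference is how $\degdet(I_{\delta_i}) = n$ is obtained — the paper reads it off the lower-triangular basis matrix with diagonal $(M,1,\ldots,1)$, while you use the quotient-dimension argument via $\field[x,y]_{\ydeg<\delta_i}/I_{\delta_i} \cong \field[x]/\ideal{M}$ — and your extra remarks (Buchberger check, existence of reshapers, the $\delta_i = \degTop_i$ corner case) are valid details the paper leaves implicit.
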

\begin{proof}
  For any $\delta$, and using the notation of \cref{algo:Precompute}, the basis $B$ of $I_\delta$ is lower triangular with diagonal entries $(M,1,\ldots,1)$.
  Hence $\degdet(B) = \degdet(I_\delta) = n$.
  Using $s = 1$ and $\xdeg b_0 = \xdeg M = n$, the cost follows from \cref{thm:Precompute}.
\end{proof}

\section{Genericity}
\label{sec:genericity}

Now we show that on random input our algorithms usually have quasi-linear complexity, i.e.~that random point sets are balanced and that $\ideal{M, y-A}$ is balanced for random univariate $A, M$.

\begin{lemma}
  \label{lem:trancsend_eval_matrix}
  Let $\alpha_1,\ldots,\alpha_n \in \field$ be distinct, let $y_1,\ldots,y_n$ be new indeterminates,
  and consider for $s \in \ZZpos$ the matrix
  \begin{equation}
    \label{eqn:generic_kernel}
    A_s = \big[ V_s \mid D V_s \mid \ldots \mid D^{m-1} V_s \big] \in \field[y_1,\ldots,y_n]^{n \times ms}
  \end{equation}
  where $D$ is the diagonal matrix with entries $(y_1,\ldots,y_n)$, and $V_s = [ \alpha_i^{j-1} ]_{1\le i\le n,1 \le j\le s} \in \field^{n \times s}$.
  Then $A_s$ has rank $\min(n, ms)$.
\end{lemma}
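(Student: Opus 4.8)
The statement concerns the rank of the polynomial matrix $A_s$ over the fraction field $\field(y_1,\ldots,y_n)$, i.e. the largest $r$ such that some $r\times r$ minor of $A_s$ is a nonzero element of $\field[y_1,\ldots,y_n]$. The inequality $\rank A_s \le \min(n,ms)$ is trivial since $A_s$ is $n \times ms$, so the real task is to prove $\rank A_s \ge \min(n,ms)$. The plan is to pick one convenient specialization of the indeterminates under which $A_s$ becomes an ordinary Vandermonde matrix whose rank is easy to read off, and then use that specialization cannot increase the rank of a matrix over a domain.

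Concretely, I would substitute $y_i \mapsto \alpha_i^s$ for $i=1,\ldots,n$. This is a $\field$-algebra homomorphism $\field[y_1,\ldots,y_n]\to\field$; under it the entry of $A_s$ in row $i$ and in the $j$-th column of the block $D^k V_s$ (with $0\le k\le m-1$ and $1\le j\le s$) becomes $\alpha_i^{sk}\alpha_i^{\,j-1}=\alpha_i^{\,sk+j-1}$. Since $(k,j)\mapsto sk+(j-1)$ is a bijection from $\{0,\ldots,m-1\}\times\{1,\ldots,s\}$ onto $\{0,1,\ldots,ms-1\}$, the specialized matrix has, up to a permutation of its columns, exactly the columns of the Vandermonde matrix $W=[\alpha_i^{\,e}]_{1\le i\le n,\ 0\le e\le ms-1}\in\field^{n\times ms}$; hence its rank equals $\rank W$.

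It then remains to show $\rank W=\min(n,ms)$, using only that $\alpha_1,\ldots,\alpha_n$ are pairwise distinct. If $n\le ms$, the leftmost $n$ columns of $W$ form the square Vandermonde matrix $[\alpha_i^{\,e}]_{1\le i\le n,\ 0\le e\le n-1}$, with determinant $\prod_{1\le i<i'\le n}(\alpha_{i'}-\alpha_i)\ne 0$, so $\rank W=n$. If $ms<n$, the top $ms$ rows of $W$ form the square Vandermonde matrix $[\alpha_i^{\,e}]_{1\le i\le ms,\ 0\le e\le ms-1}$, with determinant $\prod_{1\le i<i'\le ms}(\alpha_{i'}-\alpha_i)\ne 0$, so $\rank W=ms$. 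In both cases $\rank W=\min(n,ms)$.

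Putting this together: with $r=\min(n,ms)$, the specialized matrix has some $r\times r$ submatrix with nonzero determinant, so the corresponding $r\times r$ minor of $A_s$ is a polynomial in $y_1,\ldots,y_n$ taking a nonzero value at $(\alpha_1^s,\ldots,\alpha_n^s)$; hence it is a nonzero polynomial, and therefore $\rank A_s\ge r$. Combined with the trivial upper bound this yields $\rank A_s=\min(n,ms)$. There is essentially no technical obstacle here; the only thing to get right is the choice of specialization $y_i=\alpha_i^s$, which is exactly what collapses the $m$ Vandermonde blocks into a single long Vandermonde. Note that this works even though the values $\alpha_i^s$ need not be distinct: when $n\le ms$ we invoke distinctness of the $\alpha_i$ themselves only, via the leftmost-$n$-columns minor.
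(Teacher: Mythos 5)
Your proof is correct and follows the same route as the paper: specialise $y_i \mapsto \alpha_i^s$, observe that the blocks collapse into the single Vandermonde matrix $[\alpha_i^{e}]_{1\le i\le n,\,0\le e\le ms-1}$ of full rank $\min(n,ms)$ by distinctness of the $\alpha_i$, and conclude that the rank cannot drop under specialisation. Your write-up merely spells out details the paper leaves implicit (the exponent bookkeeping and the choice of a nonvanishing square minor), and in fact the exponents $sk+j-1$ already appear in increasing order, so no column permutation is even needed.
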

\begin{proof}
  Note that by rank of a matrix over \(\field[y_1,\ldots,y_n]\), we mean the rank of that matrix seen as over the field of fractions \(\field(y_1,\ldots,y_n)\).
  If we specialise $y_i$ to $\alpha_i^s$ for $i = 1, \ldots, n$, we obtain the Vandermonde matrix $\hat A_s = [ \alpha_i^{j-1} ]_{1 \le i\le n,1 \le j \le ms} \in \field^{n\times ms}$ of the points $\alpha_1,\ldots,\alpha_n$.
  Since these points are distinct, \(\hat A_s\) has full rank $\min(n, ms)$.
  Hence $A_s$ must also have full rank.
\end{proof}

The columns of $A_s$ can be identified to monomials $x^i y^j$ with \(i<s\) and \(j<m\).
In particular, if $p \in \VanIdeal(\pts)$ is a bivariate polynomial with $x$-degree less than $s$ and $y$-degree less than $m$ which vanishes on a point set $\pts = \{ (\alpha_i,\beta_i) \}_{i=1}^n \subset \field^2$ with distinct \(\alpha_i\)'s, then the coefficients of $p$ form a vector in the right kernel of the matrix $\hat A_s = (A_s)_{|y_i \rightarrow \beta_i} \in \field^{n \times ms}$ specializing $y_i$ to $\beta_i$.

The next lemma determines the exact row degrees of the Popov basis $P \in \field[x]^{m \times m}$ of $\FxIso_m(\VanIdeal_m(\pts))$ for a ``random'' point set \(\pts\), where $\VanIdeal_m(\pts) = \VanIdeal(\pts) \cap \field[x,y]_{\ydeg < m}$ as in \cref{ssec:precomp_appl}.

\begin{lemma}
  \label{lem:gen_small_popov}
  Let $\alpha_1,\ldots,\alpha_n \in \field$ be distinct, let $\T \subseteq \field$ be a finite subset, and let $\lambda: \field^n \rightarrow \field^n$ be an affine map.
  For $\gamma_1,\ldots,\gamma_n \in \T$ chosen independently and uniformly at random, set $\pts = \{ (\alpha_i,\beta_i) \}_{i=1}^n$ where $(\beta_1,\ldots,\beta_n) = \lambda(\gamma_1,\ldots,\gamma_n)$.
  Let $m \in \ZZ$ with $\valX(\pts) < m \leq n$ and let $(d,t) = \quorem(n, m)$.
  With probability at least $1 - 2nm/|\T|$, the Popov basis $P \in \field[x]^{m \times m}$ of $\FxIso_{m}(\VanIdeal_m(\pts))$ has exactly $m-t$ rows of degree $d$ and $t$ rows of degree $d+1$ and
  in particular $\deg_x P \leq d+1$.
\end{lemma}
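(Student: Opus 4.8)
The plan is to analyze the Popov basis $P$ of $\FxIso_m(\VanIdeal_m(\pts))$ through its row degrees. Recall from \cref{cor:module_ideal} and \cref{lem:gb_vanideal2} that $\VanIdeal_m(\pts)$ is a free $\field[x]$-module of rank $m$ with $\degdet(\VanIdeal_m(\pts)) = n$, so the row degrees $\delta_1 \le \cdots \le \delta_m$ of $P$ are nonnegative integers summing to $n$. The claim is exactly that, with high probability, this degree sequence is as balanced as possible: since $n = dm + t$ with $0 \le t < m$, the only balanced option has $m-t$ entries equal to $d$ and $t$ entries equal to $d+1$. Equivalently, it suffices to show that with high probability $\delta_m \le d+1$ (the smallest entries can never drop below $d$ while summing to $n$ with $m$ terms, once the largest is bounded by $d+1$; a short counting argument pins down the exact multiplicities).

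First I would translate "$\delta_m \le d+1$" into a rank statement about evaluation matrices. By the minimality property of the Popov basis, $\VanIdeal_m(\pts)$ contains an element of $y$-degree $< m$ and $x$-degree $\le \delta_j$ for each $j$; conversely, if for some $s$ the space of $p \in \VanIdeal(\pts)$ with $\xdeg p < s$ and $\ydeg p < m$ has dimension at least $m$ (enough to form a basis after accounting for pivots), then all row degrees of $P$ are $< s$. As observed right before the lemma, such polynomials correspond to the right kernel of the specialized matrix $\hat A_s = (A_s)_{|y_i \to \beta_i} \in \field^{n \times ms}$ from \cref{lem:trancsend_eval_matrix}. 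Choosing $s = d+1$ gives $ms = m(d+1) \ge n$ (since $md + t = n$ and $t < m$), so $\hat A_s$ has $ms - n \ge md + m - n = m - t$ columns' worth of kernel generically; more precisely, if $\rank \hat A_{d+1} = n$ then $\dim \ker = m(d+1) - n = m - t \ge 1$, and one checks this kernel injects into $\VanIdeal_m(\pts)$ and has enough independent pivots to bound all row degrees of $P$ by $d+1$. Similarly $s = d$ gives $md \le n$, forcing at least $t$ rows of degree exactly $d+1$; combining the two bounds with $\sum \delta_j = n$ yields the exact distribution.

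Next I would establish the rank condition with high probability. By \cref{lem:trancsend_eval_matrix}, the matrix $A_{d+1}$ over $\field[y_1,\ldots,y_n]$ has rank $\min(n, m(d+1)) = n$; hence some $n \times n$ minor of $A_{d+1}$ is a nonzero polynomial in $y_1,\ldots,y_n$, and since the entries of $A_{d+1}$ are monomials of degree $< m$ in the $y_i$, this minor has total degree at most $(m-1)n \le mn$. The point set is generated by $\beta_i = \lambda_i(\gamma_1,\ldots,\gamma_n)$ for an affine map $\lambda$, so the minor, viewed as a function of $(\gamma_1,\ldots,\gamma_n)$, is a polynomial of total degree at most $mn$ — but one must first argue it is \emph{not identically zero} after this affine substitution. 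Since $\lambda$ is affine, if the minor vanished identically in $\vec\gamma$ then it would vanish on the affine image $\lambda(\field^n)$; the image of an affine map need not be all of $\field^n$, so this needs care — however, the substitution $\beta_i = \alpha_i^{d+1}$ used inside the proof of \cref{lem:trancsend_eval_matrix} shows a \emph{specific} specialization making $\hat A_{d+1}$ a full-rank Vandermonde block, and as long as that specialization lies in the image of $\lambda$ we are fine; more robustly, one notes the statement is really about the $\gamma_i$ being free, so the relevant minor is a nonzero polynomial in the $\gamma_i$ precisely because $\VanIdeal_m(\pts)$ would otherwise have rank $< m$ for all choices, contradicting \cref{lem:gb_vanideal2}. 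Applying \cref{lem:zippel} to this degree-$\le mn$ nonzero polynomial in the $\gamma_i$, the failure probability is at most $mn/|\T|$; doing the same for $s = d$ (the $t$-many degree-$(d+1)$ rows) and taking a union bound gives the stated $2nm/|\T|$.

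**Main obstacle.** The delicate point is the transfer from the generic-rank statement over $\field[y_1,\ldots,y_n]$ to the randomized statement over the \emph{affinely constrained} family $\beta_i = \lambda_i(\vec\gamma)$: one must ensure the relevant maximal minor of $A_s$, which is a priori a nonzero polynomial in the $y_i$, remains nonzero after composing with $\lambda$, so that \cref{lem:zippel} applies in the variables $\gamma_i$. This is where the hypothesis $\valX(\pts) < m$ and the structural fact $\degdet(\VanIdeal_m(\pts)) = n$ from \cref{lem:gb_vanideal2} must be used to certify that the module genuinely has rank $m$ for generic $\vec\gamma$, ruling out a collapse of rank caused by the affine restriction; the rest — bounding the minor's degree by $(m-1)n$, the union bound, and the elementary counting that converts "$\deg_x P \le d+1$ and at least $t$ rows of degree $d+1$" into the exact multiplicities — is routine.
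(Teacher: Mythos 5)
Your overall strategy is the same as the paper's: identify low-$x$-degree rows of the Popov basis with right-kernel vectors of the specialized matrices $\hat A_d$ and $\hat A_{d+1}$, use \cref{lem:trancsend_eval_matrix} together with a maximal minor of total degree $<nm$ and \cref{lem:zippel} to show each matrix has its generic rank except with probability $nm/|\T|$, take a union bound, and finish by counting against $\sum_i \delta_i = \degdet(P) = n$. However, two of your intermediate claims are wrong as stated. First, it does \emph{not} suffice to prove $\delta_m \le d+1$: for $m=3$, $n=6$ (so $d=2$, $t=0$) the degree sequence $(1,2,3)$ has maximum $d+1$ and sums to $n$, yet is not $(2,2,2)$; your parenthetical ``the smallest entries can never drop below $d$ once the largest is bounded by $d+1$'' is false whenever $t\le m-2$. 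What the two rank conditions actually give is: (i) from $\rank \hat A_d = md$, the kernel is trivial, so no nonzero element of $\VanIdeal_m(\pts)$ has $x$-degree $<d$ and every row degree is $\ge d$; (ii) from $\rank \hat A_{d+1}=n$, the kernel has dimension exactly $m-t$, so at most $m-t$ ($\field$-linearly independent) rows of $P$ have degree $\le d$, i.e.\ at least $t$ rows have degree $\ge d+1$. Only the combination of (i), (ii) and $\sum_i\delta_i = md+t$ forces equality everywhere; you attribute these two conclusions to the wrong values of $s$, and the $s=d+1$ condition alone does not ``bound all row degrees by $d+1$''.

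Second, your ``more robust'' justification that the relevant minor stays nonzero after composing with $\lambda$ is fallacious. If $M\circ\lambda\equiv 0$ (indeed, if every maximal minor of $\hat A_{d+1}$ vanishes identically on the image of $\lambda$), the only consequence is that every point set in that image has an unbalanced Popov basis; this is perfectly consistent with $\VanIdeal_m(\pts)$ having rank $m$ and $\degdet(\VanIdeal_m(\pts))=n$ (e.g.\ $n$ points on the line $y=x$ give row degrees $(1,n-1)$ for $m=2$), so there is no contradiction with \cref{lem:gb_vanideal2}. You are right that this is the delicate point --- and, to be fair, the paper's own proof applies \cref{lem:zippel} to $M\circ\lambda$ without checking nonvanishing, which is harmless only because in all of the paper's applications $\lambda$ is invertible --- but your proposed fix does not close the gap.
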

\begin{proof}
  Let $p_1,\ldots,p_m \in \field[x,y]$ be the polynomials defined by the rows of $P$.
  \cref{lem:gb_vanideal} shows $\degdet(P) = n = \sum_{i=1}^m \deg_x p_i$.

  For any $s \in \ZZpos$, let $A_s \in \field[y_1,\ldots,y_n]^{n \times ms}$ be as in \cref{lem:trancsend_eval_matrix}, hence $\rank(A_s) = \min(n, ms)$.
  Let $\hat A_s = (A_s)_{|y_i \rightarrow \beta_i} \in \field^{n \times ms}$.
  Taking $s = d$, as mentioned above, if $\deg_x p_i < d$ for some $i$, then the coefficient vector of $p_i$ is in the right kernel of $\hat A_d$, and so $\rank(\hat A_d) < \rank(A_d) = md \le n$.
  Thus, letting $M \in \field[y_1,\ldots,y_n]$ be a non-zero $md \times md$ minor of $A_d$ then $M(\beta_1,\ldots,\beta_n) = M(\lambda(\gamma_1,\ldots,\gamma_n)) = 0$;
  $M$ has degree at most $m-1$ in each variable, so the total degree of $M$ is less than $nm$, and since \(\lambda\) is affine the composition $M(\lambda(z_1,\ldots,z_n))$ also has total degree less than $nm$.
  Then, by \cref{lem:zippel} the probability that $M(\lambda(\gamma_1,\ldots,\gamma_n)) = 0$ is at most $nm/|\T|$.

  Assume now that all rows of $P$ have degree at least $d$.
  For each \(i\) such that $\xdeg p_i = d$, the coefficients of $p_i$ form a vector in the right kernel of $\hat A_{d+1} \in \field^{n \times m(d+1)}$.
  By \cref{lem:trancsend_eval_matrix}, $A_{d+1}$ has a right kernel (over the fractions) of dimension $m(d+1) - n = m-t$.
  Since the rows of $P$ are linearly independent over $\field[x]$, and therefore also over $\field$, whenever $\rank(\hat A_{d+1}) = \rank(A_{d+1})$ at most $m-t$ rows of $P$ have $x$-degree $d$.
  We thus consider $N \in \field[y_1,\ldots,y_n]$ a non-zero $n \times n$ minor of $A_{d+1}$.
  Again $N$ has total degree less than $nm$ and the probability that $N(\beta_1,\ldots,\beta_n) = N(\lambda(\gamma_1,\ldots,\gamma_n)) = 0$ is at most $nm/|\T|$, bounding the probability that $\rank(\hat A_{d+1}) < \rank(A_{d+1})$.

  Hence, with probability at least $1 -2nd/|\T|$, $P$ has all rows of degree at least $d$ and $j$ rows of degree exactly $d$ with \(j \le m-t\).
  Each of the remaining $m-j$ rows has degree at least $d+1$, while their degrees must sum to
  \(
    n - jd = md + t - jd = (m-j)d + t \le (m-j)(d+1)
  \).
  Hence each of them has degree exactly $d+1$.
\end{proof}

\cref{algo:Precompute} for computing reshapers outputs a $g = y^\eta - \hat g$ with $\ydeg \hat g < \delta$ satisfying $\xdeg \hat g \leq \xdeg P$, where $P$ is the Popov basis of $\VanIdeal_\delta(\pts)$.
\cref{lem:gen_small_popov} states that generically we can expect $\xdeg P \le \floor{\frac n \delta} + 1$, and so when $\delta = 2\eta_i - \eta_{i-1} + 1$ in a reshaping sequence, this matches the definition of $\degSeq$-balanced.

\begin{corollary}
  \label{prop:gen_pts_balanced}
  Let $\alpha_1,\ldots,\alpha_n \in \field$ be distinct, let $\T \subseteq \field$ a finite subset, and let $\lambda: \field^n \rightarrow \field^n$ be an affine map.
  For $\gamma_1,\ldots,\gamma_n \in \T$ chosen independently and uniformly at random, set $\pts = \{ (\alpha_i,\beta_i) \}_{i=1}^n$ where $(\beta_1,\ldots,\beta_n) = \lambda(\gamma_1,\ldots,\gamma_n)$.
  Let $\degSeq = ( \degTop_i )_{i=0}^k$ be a reshaping sequence with $\degTop_0 \leq n$ and satisfying the constraint of \cref{lem:exist_reshapers}.
  Then $\pts$ is $\degSeq$-balanced with probability at least $1 - n^2k/|\T|$.
\end{corollary}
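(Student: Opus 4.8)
The plan is to construct explicitly, for each $i=1,\dots,k$, one polynomial $g_i\in\VanIdeal(\pts)$ of the form $g_i=y^{\degTop_i}-\hat g_i$ with $\ydeg\hat g_i\le2\degTop_i-\degTop_{i-1}$ and $\xdeg\hat g_i\le\floor{n/\delta_i}+1$, where $\delta_i:=2\degTop_i-\degTop_{i-1}+1$; the tuple $(g_i)_{i=1}^{k}$ is then a $\degSeq$-reshaper of the kind required in \cref{def:balanced}, so its existence is exactly the assertion that $\pts$ is $\degSeq$-balanced. Since $\alpha_1,\dots,\alpha_n$ are pairwise distinct we have $\valX(\pts)=1$, so the hypothesis of \cref{lem:exist_reshapers} holds automatically; moreover, using $\degTop_i<\degTop_{i-1}$ and $2\degTop_i-\degTop_{i-1}\ge0$ (see \cref{eqn:eta_diff}) together with $\degTop_0\le n$, we get $1\le\delta_i\le\degTop_i\le\degTop_0\le n$ for every $i$.

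Next I would recall, following the construction underlying \cref{algo:Precompute}, how such a $g_i$ arises from a Popov basis. Let $G$ be the reduced $\ordLex$-\Grobner basis of $\VanIdeal(\pts)$; by \cref{lem:gb_vanideal} it contains a polynomial whose $\ordLex$-leading term is $y^{\valX(\pts)}=y$, hence $R_i:=y^{\degTop_i}\rem G$ lies in $\field[x]$ and $y^{\degTop_i}-R_i\in\VanIdeal(\pts)$. As $\VanIdeal(\pts)$ is zero-dimensional, \cref{cor:module_ideal} shows that $\VanIdeal_{\delta_i}(\pts)=\VanIdeal(\pts)\cap\field[x,y]_{\ydeg<\delta_i}$ is a free $\field[x]$-module of rank $\delta_i$; let $P_i\in\field[x]^{\delta_i\times\delta_i}$ be the Popov basis of $\FxIso_{\delta_i}(\VanIdeal_{\delta_i}(\pts))$. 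Set $\hat g_i:=\FxIso_{\delta_i}^{-1}\big(\FxIso_{\delta_i}(R_i)\rem P_i\big)$ and $g_i:=y^{\degTop_i}-\hat g_i$. Then $\hat g_i-R_i\in\VanIdeal_{\delta_i}(\pts)$, so $g_i=(y^{\degTop_i}-R_i)+(R_i-\hat g_i)\in\VanIdeal(\pts)$; also $\ydeg\hat g_i<\delta_i$, whence $\ydeg\hat g_i\le2\degTop_i-\degTop_{i-1}$ and, since $\delta_i\le\degTop_i$, the $y$-leading coefficient of $g_i$ is $1$; and since $P_i$ is in Popov form, every entry of $\FxIso_{\delta_i}(R_i)\rem P_i$ has degree below $\xdeg P_i$, so $\xdeg\hat g_i<\xdeg P_i$. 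Thus $g_i$ is a legitimate component of a $\degSeq$-reshaper and everything reduces to bounding $\xdeg P_i$.

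To bound $\xdeg P_i$: if $\delta_i=1$ then $\VanIdeal_1(\pts)=\prod_{j=1}^{n}(x-\alpha_j)\,\field[x]$, so $P_i=\big[\prod_{j}(x-\alpha_j)\big]$ and $\xdeg P_i=n=\floor{n/\delta_i}$ with certainty. If $\delta_i\ge2$ then $\valX(\pts)=1<\delta_i\le n$, so \cref{lem:gen_small_popov} applies with $m=\delta_i$ (the distribution of $\pts$ is precisely the one assumed there) and gives $\xdeg P_i\le\floor{n/\delta_i}+1$ with probability at least $1-2n\delta_i/|\T|$. In both cases, on the stated event $\xdeg g_i=\xdeg\hat g_i<\xdeg P_i\le\floor{n/\delta_i}+1=\floor{n/(2\degTop_i-\degTop_{i-1}+1)}+1$, which is exactly the bound in \cref{def:balanced}.

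Finally, a union bound over $i=1,\dots,k$ — only the indices with $\delta_i\ge2$ contribute a positive failure probability — shows that $\pts$ fails to be $\degSeq$-balanced with probability at most $\sum_{i=1}^{k}2n\delta_i/|\T|\le2n^2k/|\T|$ (using $\delta_i\le n$), which is of the announced order. I do not expect a deep obstacle: the corollary is in essence an assembly of \cref{lem:gen_small_popov} with the reshaper/Popov bookkeeping from \cref{sec:reshape,sec:precomputing}. The two points that require care are the degenerate indices with $\delta_i=1$, where \cref{lem:gen_small_popov} does not apply but the degree bound is immediate, and the boundary case $\degTop_{i-1}=\degTop_i+1$, where $\delta_i=\degTop_i$ and so one cannot invoke \cref{algo:Precompute} verbatim (its input requires $\delta<\degTop$) and must carry out the module reduction directly as above. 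Recovering the exact constant $n^2k/|\T|$ of the statement from the $2n^2k/|\T|$ above would require a marginally sharper bound on the degrees of the minors appearing in the proof of \cref{lem:gen_small_popov}.
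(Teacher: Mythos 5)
Your proposal is correct and follows essentially the argument the paper intends: for each $i$ you take the minimal-$x$-degree reshaper element obtained by reducing $y^{\degTop_i}$ modulo the Popov basis of $\VanIdeal_{\delta_i}(\pts)$ with $\delta_i=2\degTop_i-\degTop_{i-1}+1$ (exactly the output property of \cref{algo:Precompute}), bound the $x$-degree of that Popov basis via \cref{lem:gen_small_popov} with $m=\delta_i$, and conclude with a union bound, your separate handling of the degenerate case $\delta_i=1$ (where the lemma's hypothesis $\valX(\pts)<m$ fails but the bound is immediate) being a legitimate refinement of the paper's sketch. The only discrepancy is the constant you flag yourself: invoking \cref{lem:gen_small_popov} as stated and bounding $\delta_i\le n$ gives failure probability $2n^2k/|\T|$ instead of $n^2k/|\T|$; this factor of $2$ is looseness on the paper's side (it gives no explicit proof of the corollary), and it can be removed either by the sharper count of the total degrees of the minors inside the lemma's proof or for the reshaping sequences actually used, where $\delta_i\approx\degTop_{i-1}/3$ is well below $n/2$, so this does not constitute a substantive gap.
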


The above proposition directly applies to both our MPE and interpolation algorithms on random point sets with unique $x$-coordinates.
Note that in the case of interpolation, where the point set is sheared if its $y$-valency is greater than one, the property of being $\degSeq$-balanced is not inherited a priori by the sheared point set. The probability of being $\degSeq$-balanced, however, is preserved, since the shearing acts as an affine transformation on the $y$-coordinates.
There are many formulations depending on the type of randomness one needs over the point sets; the following is a simple example over finite fields:

\begin{corollary}
  \label{cor:prob_complexity_mpe_interp}
  Let $d, n \in \ZZpos$ with $d \leq n$ and $\FF q$ be a finite field with $q$ elements, and let $\pts = \{ (\alpha_i,\beta_i) \}_{i=1}^n \subseteq \FF q^2$ be chosen uniformly at random among point sets with cardinality $n$.
  Then with probability of at least $\big(1 - \frac{n^2} q\big)\big(1 - \frac{3n^2(\log_{3/2}(n)+1)}{q}\big)$ over the choice of $\pts$ the following two problems can be solved with cost $\softO{n}$:
  \begin{enumerate}
    \item Input polynomial $f \in \FF q[x,y]$ such that $\xdeg f < n/d$ and $\ydeg f < d$, and output $( f(\alpha_i, \beta_i) )_{i=1}^n \in \FF q^n$.
    \item Input interpolation values $\vec\gamma = ( \gamma_i )_{i=1}^n \in \FF q^n$, and output $f \in \FF q[x,y]$ satisfying $f(\alpha_i, \beta_i) = \gamma_i$ for $i=1,\ldots,n$, as well as $\deg_y f < d$ and $\deg_x f \leq cn$ for some constant $c$ which depends only on $n$ and $d$.
  \end{enumerate}
\end{corollary}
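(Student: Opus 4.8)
The two factors in the probability correspond to two requirements on $\pts$: that it has pairwise distinct $x$-coordinates, and that the reshapers used by our multi-point evaluation and interpolation algorithms all satisfy the degree bounds of balancedness. The plan is to bound the probability of the first event unconditionally, bound the probability of the second conditionally on the first, and multiply.

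First I would show $\Pr[\valX(\pts) = 1] \ge 1 - n^2/q$: sampling the $n$ points of $\pts$ one at a time without replacement, any two of them share an $x$-coordinate with probability $(q-1)/(q^2-1) = 1/(q+1)$, so a union bound over the $\binom{n}{2}$ pairs keeps the probability of an $x$-collision below $n^2/(2q) \le n^2/q$. I would then observe that, conditioned on $\valX(\pts) = 1$ and on the (distinct) set $\{\alpha_1,\dots,\alpha_n\}$ of $x$-coordinates, the $y$-coordinates $\beta_1,\dots,\beta_n$ are i.i.d.\ uniform in $\FF q$, since every assignment $\alpha_i \mapsto \beta_i$ is then equally likely. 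Hence \cref{prop:gen_pts_balanced} applies with $\T = \FF q$, and because its conclusions do not depend on the particular $\alpha_i$'s, they hold after conditioning on the event $\valX(\pts) = 1$ alone.

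Still conditioning on $\valX(\pts) = 1$, I would exhibit the reshaping sequences the two algorithms need and use \cref{prop:gen_pts_balanced} to bound the chance that some reshaper misses its degree bound. For problem~(1) I would feed \cref{algo:BivariateMPE} a $(d,1)$-reshaping sequence of length at most $\log_{3/2}(d)+1$ from \cref{cor:reshapers_constrained}; the hypothesis of \cref{lem:exist_reshapers} is automatic here because $\valX(\pts) = 1$ and $2\eta_i - \eta_{i-1} \ge 0$, and the corner cases $d = 1$ and $d = n$ reduce directly to a single univariate multi-point evaluation. For problem~(2) I would feed \cref{algo:interp} an $(n, d^\star)$-reshaping sequence passing through $\floor{\sqrt n}$ with $d^\star = \min(d, \floor{\sqrt n})$, so that the preinput constraint $\valX(\pts) \le d^\star \le \floor{\sqrt n}$ holds while the output still has $\ydeg f < d^\star \le d$; splitting at $\floor{\sqrt n}$ yields the $(n,\floor{\sqrt n})$-subsequence $\degSeq_1$ of length $k_1 \le \log_{3/2}(n)+1$ and the $(\floor{\sqrt n}, d^\star)$-subsequence $\degSeq_2$ of length $k_2 \le \log_{3/2}(\floor{\sqrt n})+1 \le \log_{3/2}(n)+1$. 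By \cref{prop:gen_pts_balanced}, $\pts$ fails to be balanced for the $(d,1)$-sequence with probability at most $n^2(\log_{3/2}(d)+1)/q$ and fails $\degSeq_2$-balancedness with probability at most $n^2 k_2/q$; for $\degSeq_1$-balancedness the relevant point set is the sheared $\bar{\pts}$ of \cref{algo:interp}, which keeps the distinct $x$-coordinates $\alpha_i$ and whose $y$-coordinates $\bar\beta_i = \theta\alpha_i + \beta_i$ (with possibly $\theta = 0$) form an affine image of $(\beta_i)$, so \cref{prop:gen_pts_balanced} still applies --- over $\ext$, with $\T = \FF q$ --- and the failure probability is at most $n^2 k_1/q$. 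A union bound over these three events bounds the total failure probability by $n^2\big((\log_{3/2}(d)+1) + k_1 + k_2\big)/q \le 3n^2(\log_{3/2}(n)+1)/q$.

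Finally I would combine: multiplying the unconditional bound on $\valX(\pts) = 1$ by the conditional bound just obtained, with probability at least $\big(1 - \tfrac{n^2}{q}\big)\big(1 - \tfrac{3n^2(\log_{3/2}(n)+1)}{q}\big)$ the set $\pts$ has distinct $x$-coordinates and all three reshapers are balanced; on this event \cref{thm:BivariateMPE} solves problem~(1) in $\softO{\xdeg f\,\ydeg f + n} = \softO{n}$ since $\xdeg f < n/d$ and $\ydeg f < d$ give $\xdeg f\,\ydeg f < n$, and the complexity theorem accompanying \cref{algo:interp} solves problem~(2) in $\softO{n}$, with the output degree bound $\xdeg f \le cn$ coming from the balancedness bounds together with \cref{eqn:eta_diff} exactly as in \cref{lem:balanced}. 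The main obstacle will be the careful bookkeeping of the three reshaping-sequence lengths so that their sum stays within $3(\log_{3/2}(n)+1)$, together with the few boundary cases ($d \in \{1,n\}$, or $d$ exceeding $\floor{\sqrt n}$), which are handled by shrinking the effective $y$-degree bound or reverting to the univariate routines.
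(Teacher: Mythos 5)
Your proposal is correct and follows the same route as the paper, which only gives a two-sentence sketch (bound the probability of distinct $x$-coordinates, then the probability of all needed balancedness properties, using \cref{cor:reshapers_constrained} for the sequence lengths); your write-up simply fills in the details of that sketch, including the conditioning argument that makes the $\beta_i$'s i.i.d.\ uniform, the affine-shearing observation for $\degSeq_1$, and the accounting of the three sequence lengths against the factor $3(\log_{3/2}(n)+1)$. No gaps noted.
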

\begin{proof}[Proof sketch]
  The probability simply bounds the probability that $\pts$ has unique $x$-coordinates \emph{and} that it is balanced in all the necessary ways.
  By \cref{cor:reshapers_constrained} there is an appropriate reshaping sequence of length at most $\log_{3/2}(n) + 2$.
\end{proof}

We do not make a claim about the genericity in \cref{algo:ValencyMPE}: due to the shearing in that algorithm, the arguments of this section do not immediately apply.
Lastly, we turn to modular composition.
\begin{theorem}
  Let $M \in \field[x]$ be square-free of degree $n$ and let $\degSeq$ be a $(d, 1)$-reshaping sequence of length $k$ with $0 < d \leq n$.
  Let $\T \subseteq \field$ be a finite subset, and let $A = \sum_{i=0}^{n-1} a_i x^{i-1} \in \field[x]$ where $a_0,\ldots,a_{n-1}$ are chosen independently and uniformly at random from \(\T\).
  Then $\ideal{M, y - A}$ is $\degSeq$-balanced with probability at least $1 - n^2k/|\T|$.
\end{theorem}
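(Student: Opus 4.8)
The plan is to reduce this to the point-set statement \cref{prop:gen_pts_balanced} by base-changing to a splitting field of $M$. Write $I = \ideal{M, y-A} \subseteq \field[x,y]$. Since $M$ is square-free of degree $n$, after normalizing $M$ to be monic the pair $\{M, y-A\}$ is the reduced $\ordLex$-\Grobner basis of $I$ -- Buchberger's criterion applies because the $\ordLex$-leading terms $x^n$ and $y$ are coprime -- so $I$ is zero-dimensional with $\field[x,y]/I \cong \field[x]/\ideal{M}$ of $\field$-dimension $n$. The first step is to record, exactly as in the proof of the corollary of \cref{sec:precomputing} dealing with $\ideal{M,y-A}$, that for every $\delta \geq 1$ the $\field[x]$-module $I_\delta = \{ f \in I \mid \ydeg f < \delta \}$ is free of rank $\delta$ and, via \cref{cor:module_ideal}, admits a lower-triangular basis with diagonal $(M, 1, \dots, 1)$; hence $\degdet(I_\delta) = n$ and the Popov basis $P_\delta \in \field[x]^{\delta \times \delta}$ of $\FxIso_\delta(I_\delta)$ is well defined.

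The core of the argument is to pass to a splitting field $\ext$ of $M$, with distinct roots $\alpha_1, \dots, \alpha_n \in \ext$, and to identify $I$ over $\ext$ with a vanishing ideal. For $\pts_A = \{ (\alpha_i, A(\alpha_i)) \}_{i=1}^{n} \subseteq \ext^2$, both $M$ and $y-A$ vanish on $\pts_A$, so $I\,\ext[x,y] \subseteq \VanIdeal(\pts_A)$; since $\ext[x,y]/I\,\ext[x,y] \cong \ext[x]/\ideal{M}$ has dimension $n = |\pts_A|$, this inclusion is an equality. Therefore $I_\delta \otimes_\field \ext = \VanIdeal(\pts_A) \cap \ext[x,y]_{\ydeg < \delta}$ for all $\delta$, and since a matrix in Popov form over $\field$ stays in Popov form over $\ext$ while Popov bases are unique, $P_\delta$ is simultaneously the Popov basis attached to $\pts_A$ as in \cref{lem:gen_small_popov}. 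Now the randomness on $A$ is exactly the randomness on $\pts_A$ that lemma requires: $(a_0, \dots, a_{n-1}) \mapsto (A(\alpha_1), \dots, A(\alpha_n))$ is the Vandermonde map of the distinct $\alpha_i$, hence an invertible affine map $\lambda$, and drawing the $a_i$ uniformly and independently from $\T$ makes the $y$-coordinates of $\pts_A$ equal to $\lambda$ applied to a uniform element of $\T^n$. Writing $\degSeq = (\degTop_i)_{i=0}^{k}$ and $\delta_i = 2\degTop_i - \degTop_{i-1} + 1$ (so $\delta_i \geq 1$ by \cref{eqn:eta_diff} and $\delta_i \leq \degTop_i \leq \degTop_0 \leq n$), and noting $\valX(\pts_A) = 1$ by square-freeness, I would invoke \cref{lem:gen_small_popov} over $\ext$ with $m = \delta_i$ for every $i$ with $\delta_i \geq 2$: its proof shows that with probability at least $1 - 2n\floor{n/\delta_i}/|\T| \geq 1 - n^2/|\T|$ one has $\xdeg P_{\delta_i} \leq \floor{n/\delta_i} + 1$. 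A union bound over the (at most $k$) such indices -- exactly as behind \cref{prop:gen_pts_balanced} -- makes all these bounds hold simultaneously with probability at least $1 - n^2 k / |\T|$.

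To conclude I would manufacture, on this favourable event, an actual $\degSeq$-reshaper for $I$ over $\field[x,y]$ (the $\ext$-reshaper coming from balancedness of $\pts_A$ need not be defined over $\field$, so this last step cannot be skipped). For each $i$, running \cref{algo:Precompute} over $\field$ on $\{M, y-A\}$ with $\degTop = \degTop_i$ and $\delta = \delta_i$ does not fail, since $y^{\degTop_i}$ reduces modulo $\{M,y-A\}$ to $A^{\degTop_i} \bmod M$, of $y$-degree $0 < \delta_i$; by \cref{thm:Precompute} it returns $g_i = y^{\degTop_i} - \hat g_i \in I$ with $\ydeg \hat g_i < \delta_i$ (i.e.\ $\ydeg \hat g_i \leq 2\degTop_i - \degTop_{i-1}$, as \cref{def:reshaping_seq} demands) and $\xdeg \hat g_i \leq \xdeg P_{\delta_i}$ -- the reasoning of \cref{algo:Precompute} applies verbatim, the only boundary case $\delta_i = \degTop_i$ being harmless. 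For indices with $\delta_i \geq 2$ the favourable event gives $\xdeg g_i = \xdeg \hat g_i \leq \floor{n/\delta_i} + 1$; for indices with $\delta_i = 1$ one has $I_1 = M\field[x]$, hence $P_1 = [M]$ up to scaling and $\xdeg g_i \leq \xdeg P_1 = n \leq \floor{n/\delta_i} + 1$ deterministically. Thus $\gPols = (g_i)_{i=1}^{k}$ witnesses that $\ideal{M,y-A}$ is $\degSeq$-balanced with the claimed probability. I expect the main obstacle to be the middle paragraph: carefully pinning down the dictionary between $I$ over $\ext$ and $\VanIdeal(\pts_A)$, the compatibility of the truncations $I_\delta$ with base change, and the field-independence of the Popov basis; once these are in place the probabilistic content is literally \cref{lem:trancsend_eval_matrix}/\cref{lem:gen_small_popov} as already exploited in \cref{prop:gen_pts_balanced}.
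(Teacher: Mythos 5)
Your proposal is correct, and its core is the same reduction the paper uses: pass to the splitting field $\ext$ of $M$, identify $\ideal{M,y-A}\,\ext[x,y]$ with $\VanIdeal(\pts_A)$ for $\pts_A=\{(\alpha_i,A(\alpha_i))\}_i$, observe that $(a_0,\dots,a_{n-1})\mapsto(A(\alpha_1),\dots,A(\alpha_n))$ is an affine ($\ext$-linear) map, and invoke the point-set genericity result (\cref{prop:gen_pts_balanced}, i.e.\ \cref{lem:gen_small_popov}) to get the $1-n^2k/|\T|$ bound. Where you genuinely diverge is the descent back to $\field$: the paper takes the $\ext$-reshaper $g_i\in I_\ext$ guaranteed by balancedness of $\pts_A$, expands it along a $\field$-basis $\{1,\zeta,\dots,\zeta^{s-1}\}$ of $\ext$, and keeps the component $g_{i,0}\in\field[x,y]$, which lies in $I$, still has leading term $y^{\degTop_i}$, and inherits the degree bounds. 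You instead observe that the Popov basis of $\FxIso_{\delta_i}(I_{\delta_i})$ over $\field$ remains the Popov basis after base change (Popov conditions are degree conditions and the Popov basis is unique), so the $x$-degree bound proved over $\ext$ already holds for the $\field$-rational Popov basis, and then \cref{algo:Precompute} run over $\field$ (which cannot fail, since $y^{\degTop_i}$ reduces to $A^{\degTop_i}\rem M$) outputs reshapers in $I$ with $\xdeg\hat g_i\le\xdeg P_{\delta_i}$. Both descents are valid; the paper's is shorter, while yours buys a little more: it shows the minimal-degree reshapers actually computed by the precomputation routine over $\field$ satisfy the balancedness bound, it makes explicit the identification $I\,\ext[x,y]=\VanIdeal(\pts_A)$ and its compatibility with the truncations $I_\delta$ (which the paper uses implicitly), and it treats the corner case $\delta_i=1$ (where \cref{lem:gen_small_popov} does not apply) explicitly via $I_1=M\,\field[x]$. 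The only blemishes are cosmetic: the per-index failure probability you quote ($2n\floor{n/\delta_i}/|\T|$) is the one from the lemma's internal proof rather than its statement ($2n\delta_i/|\T|$), but your union-bound accounting is at the same level of precision as the paper's own \cref{prop:gen_pts_balanced}, so this is not a gap relative to the paper.
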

\begin{proof}
  Let $\ext$ be the splitting field of $M$, so $M = \prod_{i=1}^n (x-\alpha_i)$ for some pairwise distinct $\alpha_1,\ldots,\alpha_n \in \ext$.
  Define the stochastic variables $\beta_i = A(\alpha_i)$ for $i=1,\ldots,n$;
  the map $\lambda(a_0,\ldots,a_{n-1}) = (\beta_1,\ldots,\beta_n)$ is $\ext$-linear.
  Consider $\pts = \{ (\alpha_i, \beta_i) \}_{i=1}^n \subseteq \ext^2$.
  Then \cref{prop:gen_pts_balanced} implies that $\pts$ is $\degSeq$-balanced with probability at least $1 - \frac{n^2k}{|\T|}$.
  In this case, for each $i$ there exists $g_i = y^{\degTop_i} + \hat{g}_i \in I_{\ext}$ where $\ydeg \hat{g}_i < 2\degTop_i - \degTop_{i-1}$ and $\xdeg \hat{g}_i \leq \floor{\frac n {2\degTop_i - \degTop_{i-1}+1}} + 1$, and where $I_\ext = \ideal{M, y-A} \otimes_\field \ext$.
  Let $\{1,\zeta,\ldots,\zeta^{s-1} \} \subset \ext$ be a basis of $\ext : \field$ and write $g_i = g_{i,0} + \zeta  g_{i,1} + \ldots + \zeta^{s-1}  g_{i,s-1}$ with $ g_{i,j} \in \field[x,y]$.
  Then $g_i \in I_{\ext}$ implies that $ g_{i,0} \in I$, and by the shape of $g_i$ then $ g_{i,0} = y^{\eta_i} +  \hat g_{i,0}$ where the $x$- and $y$-degree of $\hat g_{i,0}$ satisfy the same bounds as $\hat{g_i}$.
  Then the tuple $\vec g_0 = (  g_{1,0}, \ldots, g_{k,0} ) \in \field[x,y]^k$ forms a balanced $\degSeq$-reshaper for $I$.
\end{proof}

\bibliographystyle{ACM-Reference-Format}
%% below copied from bbl

\appendix
\renewcommand\thesection{}
\section{Appendix}
\renewcommand\thesection{A}

\begin{corollary}[of \protect{\cite{lazard_ideal_1985}}]
  \label{cor:lazard}
  Let $G = \{ b_0,\ldots,b_s\} \subset \field[x,y]$ be a minimal \(\ordLex\)-\Grobner basis, sorted according to \(\ordLex\).
  Then
  \begin{enumerate}
    \item $\ydeg b_0 < \ldots < \ydeg b_s$; and
    \item $\LCy(b_{s}) \mid \LCy(b_{s-1}) \mid \cdots \mid \LCy(b_0)$.
  \end{enumerate}
\end{corollary}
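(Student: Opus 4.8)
The plan is to prove (1) by a direct minimality argument and (2) by analysing a single $S$-polynomial; together these recover exactly the portion of Lazard's structure theorem \cite{lazard_ideal_1985} that we need. For (1): if two basis elements shared a $y$-degree $n$, their $\ordLex$-leading terms would be scalar multiples of $x^{a}y^{n}$ and $x^{a'}y^{n}$, and the one with the smaller $x$-exponent would divide the other, contradicting minimality. Hence the $b_j$ have pairwise distinct $y$-degrees; and since $x \ordLex y$, the order $\ordLex$ ranks monomials first by $y$-exponent, so sorting the $b_j$ by $\ordLex$ coincides with sorting them by $y$-degree, which yields $\ydeg b_0 < \cdots < \ydeg b_s$.

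For (2) it suffices to show $\LCy(b_{i+1}) \mid \LCy(b_i)$ for each $i < s$, the chain then following by transitivity. Write $n_j = \ydeg b_j$, $p_j = \LCy(b_j)$, $d_j = \xdeg p_j$, so $\LTlex(b_j) = u_j x^{d_j} y^{n_j}$ with $u_j \in \field\setminus\{0\}$; minimality together with $n_j < n_{j+1}$ (from~(1)) forces $\LTlex(b_j)\nmid\LTlex(b_{j+1})$ to hold via $d_j > d_{j+1}$, so $d_0 > \cdots > d_s$. I would form the $S$-polynomial of $b_i$ and $b_{i+1}$: since $\mathrm{lcm}(\LTlex(b_i),\LTlex(b_{i+1})) = x^{d_i}y^{n_{i+1}}$, it equals $S = u_i^{-1} y^{\,n_{i+1}-n_i} b_i - u_{i+1}^{-1} x^{\,d_i - d_{i+1}} b_{i+1}$, whose coefficient of $y^{n_{i+1}}$ is $r = u_i^{-1} p_i - u_{i+1}^{-1} x^{\,d_i-d_{i+1}} p_{i+1}$; the two summands share the degree-$d_i$ leading term, so $\xdeg r < d_i$, while every other term of $S$ has $y$-degree $< n_{i+1}$. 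The crux is that reducing $S$ to $0$ modulo $G$ — valid since $G$ is a \Grobner basis — must, on the part of $y$-degree $n_{i+1}$, use only $b_{i+1}$: a reducer $b_j$ would need $n_j \le n_{i+1}$ and $d_j \le \xdeg r < d_i$, which is impossible for $j \le i$ (where $d_j \ge d_i$) and for $j > i+1$ (where $n_j > n_{i+1}$). Hence the $y^{n_{i+1}}$-part of $S$ gets reduced exactly by dividing $r$ by $p_{i+1}$ in $\field[x]$, ending with residue $r \bmod p_{i+1}$ of $x$-degree $< d_{i+1}$; by the same divisibility count, if this residue were nonzero it would be an irreducible leading term, contradicting that $S$ reduces to $0$. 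Therefore $p_{i+1}\mid r$, and since $u_i$ is a unit, $p_{i+1}\mid p_i$.

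The main obstacle is keeping the reduction argument in (2) airtight: one must verify that reductions never raise the $y$-degree (so the $y^{n_{i+1}}$-part genuinely evolves by a univariate division and the process terminates), confirm the ``only $b_{i+1}$ reduces'' claim uniformly at every stage of the reduction, and dispose of the degenerate cases ($r=0$, and $n_0=0$ where $b_0\in\field[x]$). All of this is routine once organised, and the proof needs nothing beyond Buchberger's criterion and the degree inequalities $d_0>\cdots>d_s$ forced by minimality.
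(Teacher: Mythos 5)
Your proof is correct, but note that the paper does not actually prove this statement: it is presented as a corollary of Lazard's structure theorem for bivariate lexicographic \Grobner{} bases \cite{lazard_ideal_1985}, from which the divisibility chain in (2) is simply read off. What you have done is re-derive the needed fragment of that theorem from first principles. Part (1) is the standard minimality argument, and the observation that $\ordLex$ compares $y$-exponents first correctly identifies the leading term of $b_j$ as $u_j x^{d_j}y^{n_j}$ with $d_j = \xdeg\LCy(b_j)$. For part (2), your $S$-polynomial analysis is sound: minimality forces $d_0 > \cdots > d_s$, and these inequalities guarantee that while the current remainder still has a nonzero $y^{n_{i+1}}$-component (necessarily of $x$-degree $< d_i$), the $\ordLex$-leading term lies in that component and its only admissible reducer is $b_{i+1}$; the top $y$-layer of the reduction of $S(b_i,b_{i+1})$ is therefore exactly the Euclidean division of $r$ by $p_{i+1}$, and reduction to zero forces $p_{i+1}\mid r$, hence $p_{i+1}\mid p_i$. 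The points you flag as needing care are indeed routine: the $y$-degree cannot increase under $\ordLex$-reduction precisely because every term of the subtracted multiple is $\ordLex$-bounded by the term being cancelled, and the degenerate cases ($r=0$, which gives $p_{i+1}\mid p_i$ immediately, and $b_0\in\field[x]$) cause no difficulty. The trade-off is the expected one: the paper's citation is shorter and delivers the full structure theorem, while your argument is self-contained and uses nothing beyond the division algorithm and the fact that ideal elements reduce to zero modulo a \Grobner{} basis.
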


\begin{proof}[Proof of \cref{cor:module_ideal}]
  Since $I$ is an ideal of $\field[x,y]$ and \(I_\delta = I \cap \field[x,y]_{\ydeg<\delta}\), then $I_\delta$ is a $\field[x]$-submodule of $\field[x,y]_{\ydeg<\delta}$.
  Let \(\mathcal{B}\) denote the (claimed) basis in the corollary. Clearly $\mathcal{B} \subseteq I_\delta$, and the elements of $\mathcal{B}$ all have different $y$-degree and so are $\field[x]$-linearly independent.
  Also $|\mathcal{B}| = \delta - d_0$, so if $\mathcal{B}$ generates $I_\delta$ then it is a basis of it and the rank of $I_\delta$ is $\delta - d_0$.
  It remains to show that $\mathcal{B}$ generates $I_\delta$, so take some $f \in I_\delta$.
  Since $f \in I$ the multivariate division algorithm using $G$ and the order $\ordLex$ results in $q_0,\ldots,q_s \in \field[x,y]$ such that $f = q_0 b_0 + \ldots + q_s b_s$ with $\ydeg q_i \leq \ydeg f - \ydeg b_i$.
  Since $\ydeg f < \delta$ this means $q_{\hat s+1} = \ldots = q_s = 0$.
  Say that in each iteration of the division algorithm, we use the greatest index $i$ for which $\LTlex(b_i)$ divides the leading term of the current remainder.
  Thus no term of $q_i b_i$ is divisible by $\LTlex(b_{i+1})$ for any $i < s$.
  But by \cref{cor:lazard} then $\LCy(b_{i+1})$ divides $\LCy(b_i)$, and so if $\ydeg(q_i b_i) \geq \ydeg b_{i+1}$ then $\LTlex(b_{i+1}) \mid \LTlex(q_i b_i)$.
  Consequently $\ydeg q_i < \ydeg b_{i+1} - \ydeg b_i$, and therefore $f$ is a $\field[x]$-linear combination of the elements of $\mathcal{B}$.
\end{proof}

\begin{lemma}
  \label{lem:fast_rem}
  There is an algorithm which inputs a \(\ordLex\)-\Grobner basis $G = [ b_0,\ldots,b_s] \subseteq \field[x,y]$ with $\ydeg b_0 = 0$, and a polynomial $f \in \field[x,y]$, and outputs $f \rem G$ in time $\softO{|G|d_x(\ydeg f)}$, where $d_x = \max(\xdeg f, \xdeg b_0)$.
\end{lemma}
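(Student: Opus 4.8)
The plan is to run the multivariate reduction of $f$ modulo $G$ in the divisor order $b_0, b_1, \ldots, b_s$, using the algorithm of \cite{van_der_hoeven_complexity_2015}, and to bound its cost with the triangular structure of $G$ supplied by \cref{cor:lazard}. The essential observation is that, although $G$ is not $y$-monic, each individual reduction step by some $b_\ell$ amounts to a Euclidean division in $\field[x]$ by $\LCy(b_\ell)$ (which never fails, $\field$ being a field), and that repeatedly re-reducing modulo $b_0 \in \field[x]$ keeps all intermediate coefficients of small $x$-degree; so the reduction decomposes into $O(|G|)$ ``layers'', each costing roughly the size of $f$.

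Concretely, set $e_\ell = \xdeg(\LCy(b_\ell))$ and $m_\ell = \ydeg b_\ell$. By \cref{cor:lazard} we have $0 = m_0 < m_1 < \cdots < m_s$ and $\LCy(b_s) \mid \cdots \mid \LCy(b_0) = b_0$, hence $e_0 > e_1 > \cdots > e_s$ and $e_0 = \xdeg b_0 \le d_x$. A term $c x^i y^j$ is reducible by $b_\ell$ precisely when $i \ge e_\ell$ and $j \ge m_\ell$, and reducing it subtracts a $\field[x]$-multiple of $b_\ell$ obtained by dividing the leading coefficient of the current $y^j$-part by $\LCy(b_\ell)$. One may first reduce $f$ modulo $b_0$ coefficient-wise in $y$ ($\ydeg f + 1$ univariate divisions by $b_0$, costing $\softO{d_x \ydeg f}$), which brings all $x$-degrees below $e_0$; thereafter, re-reducing modulo $b_0$ after each use of some $b_\ell$ keeps every intermediate polynomial of $x$-degree $O(d_x)$ and $y$-degree at most $\ydeg f$ (equivalently, feeding the whole $G$ to the reduction handles this automatically, since any term of $x$-degree $\ge e_0$ is reducible by $b_0$). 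Consequently, in the resulting identity $f = q_0 b_0 + \cdots + q_s b_s + (f \rem G)$, the quotients satisfy $\ydeg q_\ell \le \ydeg f - m_\ell$ (exactly as in the proof of \cref{cor:module_ideal}) and $\xdeg q_\ell \in O(d_x)$, so each product $q_\ell b_\ell$ has $O(d_x \ydeg f)$ terms.

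The complexity then follows from \cite{van_der_hoeven_complexity_2015}, whose running time for this reduction is quasi-linear in the combined size of $f$, of $f \rem G$, and of the quotient-by-divisor products $q_\ell b_\ell$ arising during the computation. By the above size bounds this combined size is $O\big(\sum_{\ell=0}^{s} d_x \ydeg f\big) = O(|G|\, d_x \ydeg f)$, which also dominates the cost of the preliminary reduction by $b_0$; hence the total cost is $\softO{|G| d_x \ydeg f}$.

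The main obstacle is the non-monicity in $y$: since $\LCy(b_\ell)$ is a non-constant polynomial in $x$ — indeed a proper divisor of $b_0$, hence a zero divisor modulo $b_0$ — fast Newton-based division by $b_\ell$ is not directly available, and one must instead use a reduction that performs the Euclidean divisions by $\LCy(b_\ell)$ inside the coefficient ring $\field[x]$, which is precisely what the structured algorithm of \cite{van_der_hoeven_complexity_2015} does. The remaining care is in checking that its cost specializes to the stated bound, i.e.\ that all quotient-by-divisor products remain of size $O(d_x \ydeg f)$; this rests on the two ingredients isolated above — the $y$-degree chain of \cref{cor:lazard}, which bounds $\ydeg q_\ell$, and the presence of the univariate generator $b_0$, whose repeated use bounds $\xdeg q_\ell$ and all intermediate $x$-degrees.
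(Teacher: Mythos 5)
Your proposal is correct and takes essentially the same route as the paper's proof: run the multivariate division algorithm and invoke \cite{van_der_hoeven_complexity_2015} for a cost quasi-linear in the sizes of the products $q_\ell b_\ell$ and of the remainder, bound their $y$-degrees by $\ydeg f$ via the \Grobner division property, and bound all $x$-degrees by $\bigO{d_x}$ by always reducing with the univariate element $b_0$ whenever possible. The only (inessential) deviation is your appeal to \cref{cor:lazard} and the chain $e_0 > \cdots > e_s$, which is not needed for the argument and would anyway require minimality of $G$, an assumption the lemma does not make.
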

\begin{proof}
  \def\xbardeg{\deg^\circ_x}
  \def\ybardeg{\deg^\circ_y}
  This is a special case of \cite{van_der_hoeven_complexity_2015}: the multivariate division algorithm computes $q_0,\ldots,q_s,R \in \field[x,y]$ such that $f = q_0 b_0 + \ldots + q_s b_s + R$ with $R = f \rem G$, and the cost of the algorithm can be bounded as
  \[
    \textstyle
    \sum_{i=0}^s \xbardeg(q_i b_i)\ybardeg(q_i b_i) + \xbardeg(R)\ybardeg(R) \ ,
  \]
  where $\xbardeg(\cdot)$ denotes an a priori upper bound on the $x$-degree, and similarly for $\ybardeg(\cdot)$.
  Since $G$ is a \(\ordLex\)-\Grobner basis, then $\ybardeg(q_i b_i) \leq \ydeg f$ and $\ybardeg(R) \leq \ydeg f$.
  For the $x$-degrees, note that in an iteration of the division algorithm where $b_i, i > 0$ is used, then $\xdeg \tilde R < \xdeg b_0$, where $\tilde R$ is the current remainder, since otherwise the algorithm would have reduced by $b_0$ as $\ydeg b_0 = 0$.
  Hence $\xdeg(q_i) \leq \xdeg(q_i \LTlex(b_i)) < \xdeg b_0$ and so $\xbardeg(q_i b_i) \leq 2\xdeg b_0$.
  Similarly, $\xbardeg(R) < \xdeg b_0$.
  Left is only $\xbardeg(q_0b_0)$: since $q_0 b_0 = f - q_1 b_1 - \ldots - q_s b_s - R$, then $\xdeg(q_0 b_0) \leq \max(\xdeg f,\ 2\xdeg b_0)$.
\end{proof}

\end{document}